\numberwithin{equation}{section} 
\theoremstyle{plain}
\newtheorem{theo+}           {Theorem}      [section]
\newtheorem{prop+}  [theo+]  {Proposition}
\newtheorem{coro+}  [theo+]  {Corollary}
\newtheorem{lemm+}  [theo+]  {Lemma}
\newtheorem{defi+}  [theo+]  {Definition}
\newtheorem{conj+}  [theo+]  {Conjecture}
\theoremstyle{definition}
\newtheorem{rema+}  [theo+]  {Remark}
\newtheorem{prob+}  [theo+]  {Problem}
\newtheorem{exam+}  [theo+]  {Example}
\newenvironment{theorem}{\begin{theo+}}{\end{theo+}}
\newenvironment{proposition}{\begin{prop+}}{\end{prop+}}
\newenvironment{corollary}{\begin{coro+}}{\end{coro+}}
\newenvironment{lemma}{\begin{lemm+}}{\end{lemm+}}
\newcommand{\om}{\omega}
\newcommand{\tha}{\theta}
\newcommand{\ti}{\textup i}
\newcommand{\sgn}{\operatorname{sgn}}
\newcommand{\id}{\operatorname{id}}
\newcommand{\pfaff}{\mathop{\mathrm{pfaff}}}
\newcommand{\trig}{{\operatorname{trig}}}
\newcommand{\Ker}{{\operatorname{Ker}}}
\begin{document}

\baselineskip 18pt
\larger[2]
\title[Special polynomials related to the eight-vertex model I]
{Special polynomials related to the supersymmetric eight-vertex model. I. Behaviour at cusps.} 
\author{Hjalmar Rosengren}
\address
{Department of Mathematical Sciences
\\ Chalmers University of Technology and University of Gothenburg\\SE-412~96 G\"oteborg, Sweden}
\email{hjalmar@chalmers.se}
\urladdr{http://www.math.chalmers.se/{\textasciitilde}hjalmar}

\thanks{Research  supported by the Swedish Science Research
Council (Vetenskapsr\aa det)}

\begin{abstract}
We study  certain symmetric polynomials, which as very special cases include polynomials related to the supersymmetric eight-vertex model, and other elliptic lattice models with  $\Delta=\pm 1/2$. In this paper, which is the first part of a series, we study the behaviour of the polynomials at special parameter values, which can be identified with cusps of the modular group $\Gamma_0(12)$.
In subsequent papers, we will show that the polynomials
satisfy a non-stationary Schr\"odinger equation related to the Knizhnik--Zamolodchikov--Bernard equation and that they give a four-dimensional lattice of  tau functions of Painlev\'e VI. 
\end{abstract}

\maketitle        


\section{Introduction}

The present paper is the first in a series, 
devoted to the study of certain special symmetric polynomials.
Numerous specializations of these polynomials have appeared in connection with elliptic solvable lattice models, at
the special parameter values usually denoted  $\Delta=\pm 1/2$.

The  values $ \Delta=\pm 1/2$ are truly exceptional.
One explanation why  $\Delta=-1/2$ is special comes from the limit to the massive sine-Gordon model, when it becomes a condition for supersymmetry \cite{fs}. Recently, Hagendorf and Fendley \cite{hf} implemented this supersymmetry on the finite lattice. Thus, we  refer to  
$ \Delta=-1/2$ as the \emph{supersymmetric} case. 

The six-vertex model with $\Delta=1/2$ contains the combinatorial ice model, where all states have equal weight. This was used by Kuperberg \cite{ku} in his simple proof of the alternating sign matrix theorem, which enumerates the states with domain wall boundary conditions.
The combinatorics of the  XXZ  and six-vertex models  at $\Delta=-1/2$ is also very rich, see \cite{zs} for a  survey. 
It seems quite interesting
  to extend results in this area to the elliptic regime, but so far only the first steps have been taken. 

In \cite{bm1}, Bazhanov and Mangazeev found that the ground state eigenvalue of Baxter's $Q$-operator for the supersymmetric periodic XYZ chain of odd length can be expressed in terms of certain polynomials, which  appear to have positive integer coefficients and thus call for a combinatorial interpretation. In \cite{bm2}, it was conjectured that specializations of these polynomials are tau functions of  Painlev\'e VI, constructed from one of Picard's algebraic   solutions via a sequence of B\"acklund transformations. 
The papers \cite{bm4} and \cite{ras}
 deal with ground state eigenvectors of the Hamiltonian for the same XYZ  chain. Certain components of these eigenvectors, as well as certain sums of components, again seem to be described by polynomials with positive coefficients. The same polynomials  appear for
other supersymmetric spin chains \cite{bh,fh,h}. A mathematically rigorous investigation of the  supersymmetric XYZ  chain
was recently initiated by  Zinn-Justin \cite{zj}.

As was noted in  \cite{bm4}, there are striking parallels between the
work outlined above and our  investigation of the 8VSOS and three-colour models \cite{r0,r}. Just as the six-vertex  model contains the combinatorial ice model when $\Delta=1/2$, the corresponding combinatorial specialization of the
 8VSOS model is the three-colour model.
Extending Kuperberg's work to the elliptic regime, we expressed the domain wall partition function for the three-colour model in terms of certain special polynomials, which again conjecturally have positive integer coefficients.

In the present paper, we will explain the relations between various polynomials introduced in \cite{bm1,bm4,r,zj}, by identifying them as special cases of
a more general family of functions.  
We stress that, although  the underlying physical models are  closely related \cite{b,bv}, it is not clear why  objects as different as domain wall partition functions  \cite{r0,r}, eigenvalues of the $Q$-operator  \cite{bm1,bm2} and eigenvectors of the Hamiltonian  \cite{bh,fh,h,bm4,ras,zj} should  lead to related special functions.

This first part in our series  has a rather technical nature.
For each non-negative integer $m$, we define a four-dimensional
lattice  $T_n^{(\mathbf k)}$ of symmetric rational functions 
in $m$ variables,
depending also on a parameter $\zeta$. 
The indices $n\in \mathbb Z$ and $\mathbf k=(k_0,k_1,k_2,k_3)\in\mathbb Z^4$ satisfy $|\mathbf k|+m=2n$. 
Since the denominator  is elementary,  $T_n^{(\mathbf k)}$ are essentially symmetric polynomials. One of our main results
is Corollary~\ref{tnvc}, which states that   $T_n^{(\mathbf k)}$ never vanishes identically. This  is  the key for our continued investigations \cite{r1b,r2}. 

Up to a change of variables, 
 $T_n^{(\mathbf k)}$ is an elliptic  function with periods $(1,\tau)$. As a function of $\tau$, the parameter $\zeta$ is a Hauptmodul for the modular group $\Gamma=\Gamma_0(6,2)\simeq \Gamma_0(12)$, see \S \ref{ms}. 
This group has six cusps. Under the natural action of
 $\mathrm N_{\mathrm{SL}(2,\mathbb Z)}(\Gamma)/\Gamma\simeq \mathrm{S}_3$,  the cusps split into two orbits. We call the three cusps belonging to the orbit of
$\tau=\ti\infty$ \emph{trigonometric} and the remaining three cusps  
\emph{hyperbolic}. 
Our proof of the fundamental Corollary \ref{tnvc} is based on a careful investigation of the behaviour of 
 $T_n^{(\mathbf k)}$
at the trigonometric cusps, see
 \S \ref{tls}. In \S \ref{ntcs}, we make a corresponding investigation of the hyperbolic cusps.  In the context of the XYZ spin chain,
the trigonometric cusps correspond to reductions
to the XXZ chain and the hyperbolic cusps 
 to the XY chain, see \S \ref{ms}.

In the final \S \ref{cns}, we explain the relation between  $T_n^{(\mathbf k)}$
and various polynomials introduced in \cite{bm1,bm4,r,zj}
and also occurring in \cite{bh,bm2,fh,h,ras}.
To be precise, for the polynomials related to eigenvalues of
the $Q$-operator \cite{bm1} and to domain wall partition functions \cite{r}, these relations are rigorously proved. However, for polynomials related to eigenvectors of the Hamiltonian, the identification with our polynomials 
$T_n^{(\mathbf k)}$ is still based  on empirical observation.
A partially rigorous result exists only for the
  ``sum rule'' giving the square norm of the eigenvector, which was
recently proved by  Zinn-Justin \cite{zj}, assuming a certain conjecture.

In the next paper in the series \cite{r1b}, we will obtain a non-stationary Schr\"odinger equation for $T_n^{(\mathbf k)}$, which is related to the  Knizhnik--Zamolodchikov--Bernard equation and to  the canonical quantization of Painlev\'e VI. 
This is in turn applied in \cite{r2}, where  the case $m=0$ of 
$T_n^{(\mathbf k)}$   is identified with a four-dimensional lattice of tau functions of Painlev\'e VI. 
In very special cases, these results reduce to conjectures of Bazhanov and Mangazeev \cite{bm1,bm2}.
In future work  we plan to explain the connection  to affine Lie algebra characters, which was already suggested in \cite{r}, and give further relations to the combinatorics of three-colourings. 
We hope that this  will  
be useful for understanding the
 combinatorics of elliptic lattice models at $\Delta=-1/2$.
The resulting relation between Painlev\'e tau functions and 
affine Lie algebra characters should be of independent interest.

{\bf Acknowledgements:} I would like to thank Vladimir Bazhanov, Vladimir Mangazeev and Paul Zinn-Justin for discussions.

\section{Preliminaries}

\subsection{Notation}  
Throughout the paper,
$$\omega=e^{2\pi\ti/3}. $$

We fix
 $\tau$ in the upper half-plane, and write
 $p=e^{\pi\ti\tau}$. 
The four half-periods in  $\mathbb C/(\mathbb Z+\tau\mathbb Z)$ will be denoted
$$
\gamma_0=0,\qquad \gamma_1=\frac\tau 2,\qquad \gamma_2=\frac\tau2+\frac 12,\qquad \gamma_3=\frac 12. $$

 We will use the notation
$$(x;p)_\infty=\prod_{j=0}^\infty(1-xp^j),$$
$$\theta(x;p)=(x;p)_\infty(p/x;p)_\infty.$$
Repeated variables are used as a short-hand for products; for instance,
$$\theta(a,\pm b;p)=\theta(a;p)\theta(b;p)\theta(-b;p). $$
The theta function satisfies 
\begin{equation}\label{tqp}\theta(px;p)= \theta(x^{-1};p)=-x^{-1}\theta(x;p)
\end{equation}
and the addition formula
\begin{equation}\label{tad}\tha(az^\pm,bw^\pm;p)
-\tha(bz^\pm,aw^\pm;p)
=\frac bz\,\tha(ab^\pm,zw^\pm;p).\end{equation}

We introduce the function
 $$\psi(z)=p^{\frac 1{12}}(p^2;p^2)_\infty e^{-\pi \ti z}\theta(e^{2\pi \ti z},\pm pe^{2\pi \ti z};p^2),$$
which satisfies
\begin{subequations}
\label{pss}
\begin{equation}
 \psi(z+1)=\psi(-z)=-\psi(z),\qquad
\psi(z+\tau)={e^{-3\pi \ti(\tau+2z)}}\psi(z).
\end{equation}
By the quintuple product identity \cite{w}, 
$$\psi(z)=\sum_{k=-\infty}^\infty \left(e^{\frac 1{12}(6k-1)^2\pi \ti\tau+(6k-1)\pi \ti z}
-e^{\frac 1{12}(6k+1)^2\pi \ti\tau+(6k+1)\pi \ti z}
\right),$$
which implies that 
 \begin{equation}\psi(z)=\psi\left(z+\frac 13\right)+\psi\left(z-\frac 13\right). \end{equation}
\end{subequations}

\subsection{Spaces of theta functions} \label{sts}

For $n$ a non-negative integer, we denote
 by $\Theta_n$ the space of entire functions $f$ such that
\begin{subequations}\label{vde}
\begin{equation}\label{fqp}f(z+1)=f(z),\qquad  f(z+\tau)=e^{-6\pi \ti n(\tau+2z)}f(z),\qquad f(-z)=-f(z),\end{equation}
\begin{equation}\label{fe}f(z)+f\left(z+\frac 13\right)+f\left(z-\frac 13\right)=0. \end{equation}
\end{subequations}
For $n\geq 1$, 
 the map $e^{2\pi\ti z}\mapsto f(z)$ is,
in the terminology of \cite[Def.\ 3.1]{rs}, a $C_{3n-1}$ theta function, subject to the additional constraint  \eqref{fe}. By \cite[Prop.\ 6.1]{rs}, $\dim \Theta_n=2n$, an explicit basis being
$$e^{2\pi\ti(j-3n)z}\theta(-p^{2j}e^{12\pi\ti n z};p^{12n})-e^{2\pi\ti(3n-j)z}\theta(-p^{2j}e^{-12\pi\ti n z};p^{12n}),$$
where $1\leq j\leq 3n-1$ and $3\nmid j$.
The space $\Theta_0$ consists only of the zero function. 




\begin{lemma}\label{tsl}
The space $\Theta_n$ is the linear span of all functions of the form
\begin{equation}\label{tss}e^{-2\pi\ti z}\tha(e^{4\pi \ti z},ae^{\pm 2\pi\ti z};p^2)\tha(b_1e^{\pm 6\pi\ti z},\dots,b_{n-1}e^{\pm 6\pi\ti z};p^6), \end{equation}
where $a,b_1,\dots,b_{n-1}$ are constants.
\end{lemma}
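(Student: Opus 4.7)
The plan is to show the two inclusions: every function of the form \eqref{tss} lies in $\Theta_n$, and the linear span of such functions exhausts $\Theta_n$. The first inclusion is mostly routine. Period $f(z+1)=f(z)$ is immediate from integer frequencies. Under $z\mapsto z+\tau$, \eqref{tqp} gives that $\tha(e^{4\pi\ti z};p^2)$ picks up $p^{-2}e^{-8\pi\ti z}$, the pair $\tha(ae^{\pm 2\pi\ti z};p^2)$ picks up $p^{-2}e^{-4\pi\ti z}$, each pair $\tha(b_ie^{\pm 6\pi\ti z};p^6)$ picks up $p^{-6}e^{-12\pi\ti z}$, and the prefactor $e^{-2\pi\ti z}$ picks up $p^{-2}$; the exponents add to $p^{-6n}e^{-12\pi\ti n z}=e^{-6\pi\ti n(\tau+2z)}$. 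Oddness follows from $\tha(x^{-1};p)=-x^{-1}\tha(x;p)$: only $\tha(e^{4\pi\ti z};p^2)$ contributes a nontrivial sign under $z\mapsto-z$, the paired factors being manifestly symmetric, and combined with $e^{-2\pi\ti z}\mapsto e^{2\pi\ti z}$ one obtains $f(-z)=-f(z)$.

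The subtle condition is \eqref{fe}. The key observation is that $e^{\pm 6\pi\ti/3}=1$, so each factor $\tha(b_ie^{\pm 6\pi\ti z};p^6)$ is invariant under $z\mapsto z\pm 1/3$ and pulls out of the sum. Therefore \eqref{fe} for \eqref{tss} reduces to the same identity for the $n=1$ prototype $g_a(z):=e^{-2\pi\ti z}\tha(e^{4\pi\ti z},ae^{\pm 2\pi\ti z};p^2)$. By the checks above $g_a$ satisfies the other three defining properties with $n=1$ and so lies in the classical space of odd theta functions of level $6$; that space has dimension $2$, while $\dim\Theta_1=2$ from the basis recorded after \eqref{vde}, so the two spaces must coincide, and \eqref{fe} holds automatically for $g_a$.

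For spanning I argue by induction on $n$. For $n=1$, $g_a$ vanishes simply at $z=\alpha$ where $a=e^{-2\pi\ti\alpha}$, so two generic $g_{a_1},g_{a_2}$ with $\alpha_1\not\equiv\pm\alpha_2\pmod{\mathbb Z+\tau\mathbb Z}$ are linearly independent and therefore span the two-dimensional $\Theta_1$. For $n\ge 2$, multiplication by $\tha(be^{\pm 6\pi\ti z};p^6)$ maps $\Theta_{n-1}$ into $\Theta_n$ (routine, with \eqref{fe} preserved by the $1/3$-periodicity of the multiplier); by induction the span of \eqref{tss} at level $n$ therefore contains $\tha(be^{\pm 6\pi\ti z};p^6)\cdot\Theta_{n-1}$ for every $b\in\mathbb C^*$. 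To see that varying $b$ fills all of $\Theta_n$, I use a duality argument: writing an annihilating functional as $\ell=\sum_k c_k\,\mathrm{ev}_{z_k}$, the hypothesis gives $\sum_k c_k f(z_k)\tha(be^{\pm 6\pi\ti z_k};p^6)=0$ for every $f\in\Theta_{n-1}$ and every $b$; viewed as a function of $b$ this sits in a two-dimensional ``level $2$ in $b$'' space, yielding two linear relations on $(c_k f(z_k))_k$ per $f$, and combining these with the analogous relations obtained by varying $a$ in the $n=1$ prototype forces $c_k=0$.

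The principal obstacle is making this final duality argument rigorous: one must show that the constraints produced by varying $a$ and the $b_i$ are abundant enough to pin down the functional $\ell$, which amounts to a non-degeneracy property of a certain evaluation pairing between $\Theta_n$ and the $n$-dimensional parameter space. An alternative and perhaps cleaner route is to exhibit $2n$ explicit linearly independent elements of the form \eqref{tss} directly, by specializing the $b_i$ so that the products $\prod_i\tha(b_ie^{\pm 6\pi\ti z};p^6)$ collapse via identities like $\prod_{k=0}^2\tha(\om^k x;p^2)=\tha(x^3;p^6)$, or to expand the basis recorded after \eqref{vde} in terms of \eqref{tss} using iterated applications of the addition formula \eqref{tad}.
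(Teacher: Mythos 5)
Your first half (that every function \eqref{tss} lies in $\Theta_n$) is essentially correct, and the reduction of \eqref{fe} to the $n=1$ prototype via the $1/3$-periodicity of the factors $\tha(b_ie^{\pm 6\pi\ti z};p^6)$ is a clean shortcut; the paper instead reduces to $a=\pm\om$ by the addition formula \eqref{tad} and then uses the quintuple-product identity for $\psi$. Your dimension argument for the prototype does lean on the unproved (though standard) fact that the odd part of the space of order-six theta functions is two-dimensional, but that is of the same nature as the facts the paper imports from \cite{rs}, so I would not count it as a gap.

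The spanning half, however, contains a genuine gap, which you flag yourself. By the addition formula, varying $b$ in $\tha(be^{\pm 6\pi\ti z};p^6)$ only sweeps out a two-dimensional family, so your hypothesis on the annihilating functional $\ell=\sum_k c_k\,\mathrm{ev}_{z_k}$ says precisely that the two functionals $f\mapsto\sum_k c_kA(z_k)f(z_k)$ and $f\mapsto\sum_k c_kB(z_k)f(z_k)$ annihilate $\Theta_{n-1}$, where $A,B$ are two fixed multipliers. Since $\Theta_{n-1}$ has dimension $2n-2$ inside the $2n$-dimensional $\Theta_n$, nonzero functionals of exactly this shape annihilating it certainly exist, so nothing yet forces $c_k=0$; the promised additional relations ``from varying $a$'' are never formulated, and it is unclear what they would contribute. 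The paper resolves exactly this point constructively: it writes down the $2n$ explicit functions $\psi_j$ in \eqref{fpj}, observes (via the collapse of triples $\tha(xe^{\pm2\pi\ti z};p^2)\tha(\om xe^{\pm2\pi\ti z};p^2)\tha(\om^2xe^{\pm2\pi\ti z};p^2)$ into a single $p^6$-theta factor) that they are of the form \eqref{tss}, and obtains linear independence from the fact that $\psi_j$ vanishes to order exactly $j-1$ at $z=1/3$; that order-of-vanishing mechanism is the missing ingredient in your ``alternative route''. If you prefer your inductive setup, the cleanest repair is to prove $\tha(c_1e^{\pm6\pi\ti z};p^6)\Theta_{n-1}+\tha(c_2e^{\pm6\pi\ti z};p^6)\Theta_{n-1}=\Theta_n$ for generic $c_1,c_2$: the intersection of the two summands equals $\tha(c_1e^{\pm6\pi\ti z};p^6)\tha(c_2e^{\pm6\pi\ti z};p^6)\Theta_{n-2}$, because dividing by an even, $1/3$-periodic, suitably nonvanishing multiplier preserves \eqref{vde}, and then the dimension formula $\dim\Theta_m=2m$ gives $(2n-2)+(2n-2)-(2n-4)=2n$. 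Either way, a concrete independence or dimension argument must be supplied; as written, the proposal does not establish that the span exhausts $\Theta_n$.
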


\begin{proof}
Starting from \eqref{tss}, we may use
 \eqref{tad} to express
$\tha(ae^{\pm 2\pi\ti z};p^2)$ as a linear combination of
$\tha(\om e^{\pm 2\pi\ti z};p^2)$ and
$\tha(-\om e^{\pm 2\pi\ti z};p^2)$.
For $a=\omega$, \eqref{tss} is a constant times
$$e^{-3\pi\ti z}\psi\left(z+\frac 12\right)\tha(e^{ 6\pi\ti z},b_1e^{\pm 6\pi\ti z},\dots,b_{n-1}e^{\pm 6\pi\ti z};p^6)$$
and for $a=-\omega$ a constant times
$$e^{-3\pi\ti z}\psi(z)\tha(-e^{6\pi\ti z},b_1e^{\pm 6\pi\ti z},\dots,b_{n-1}e^{\pm 6\pi\ti z};p^6).$$
 Using \eqref{tqp} and \eqref{pss} it is straight-forward to  check that these functions are in  $\Theta_n$. To complete the proof, it is enough to find $2n$ linearly independent functions $\psi_1,\dots,\psi_{2n}$ of the form \eqref{tss}. 
For later purposes, it will be convenient to define 
\begin{align}
\notag\psi_j(z)&=e^{-2\pi\ti z}\tha(e^{4\pi \ti z};p^2)\tha(e^{\pm 2\pi\ti z};p^2)^{\left[\frac{j-1}2\right]}\tha(\om e^{\pm 2\pi\ti z};p^2)^{j-1}\\
\notag &\qquad\qquad\times\tha(p e^{\pm 2\pi\ti z};p^2)^{n-\left[\frac{j+1}2\right]}\tha(p\om e^{\pm 2\pi\ti z};p^2)^{2n-j}\\
\label{fpj}&=\begin{cases}
\om^{j-2}e^{-2\pi\ti z}\tha(e^{4\pi \ti z},\om e^{\pm 2\pi\ti z};p^2)\\
\qquad  \times\tha(e^{\pm 6\pi\ti z};p^6)^{\frac{j-2}2}\tha(p^3e^{\pm 6\pi\ti z};p^6)^{\frac{2n-j}2}, & j \text{ even},\\
\om^{j-1}e^{-2\pi\ti z}\tha(e^{4\pi \ti z},\om p e^{\pm 2\pi\ti z};p^2)\\
\qquad\times\tha(e^{\pm 6\pi\ti z};p^6)^{\frac{j-1}2}\tha(p^3e^{\pm 6\pi\ti z};p^6)^{\frac{2n-j-1}2}, & j \text{ odd},
\end{cases}
\end{align}
which are linearly independent since
 $\psi_j$ has a zero of degree exactly $j-1$ at $z=1/3$.
\end{proof}

We introduce the operator 
\begin{equation}\label{sig}(\sigma f)(z)=\frac{\ti}{\sqrt 3}\left(f\left(z+\frac 13\right)-f\left(z-\frac 13\right)\right). \end{equation}
It is easy to check that $\sigma$ acts as an involution 
 on the  space  of 
 one-periodic functions satisfying \eqref{fe}. 
Moreover, $\sigma$ preserves the second relation in \eqref{fqp}, whereas it 
interchanges the subspaces of even and odd functions. Thus, $\sigma$ is a bijection from $\Theta_n$ to the space of entire functions  satisfying \eqref{vde}, except that they are even instead of odd.


We will now introduce a generalization of the space $\Theta_n$, defined by 
combining \eqref{vde} with a prescribed behaviour at the lattice 
\begin{equation}\label{lal}\Lambda=\frac 16\,\mathbb Z+\frac\tau 2\,\mathbb Z.
\end{equation}
It will  be convenient to write  $\Lambda_j=\gamma_j+\frac 13\mathbb Z+\tau\mathbb Z$, so that $\Lambda=\bigcup_{j=0}^3\Lambda_j$.
 To prepare for the definition, consider a meromorphic function $f$ satisfying \eqref{vde}, where $n$ is no longer assumed positive. 
Let  $g_j(z)=f(z+\gamma_j)$ if $j=0,3$ and  
$g_j(z)=e^{6\pi\ti nz}f(z+\gamma_j)$ if $j=1,2$. It is easy to check that
the functions $g_j$ also satisfy \eqref{vde}.

Let $n\in\mathbb Z$ and $\mathbf k=(k_0,k_1,k_2,k_3)\in\mathbb Z^4$,
such that $2n\geq|\mathbf k|=\sum_jk_j$. Throughout, we will write
$m=2n-|\mathbf k|$ and $k_j^\pm =\max(\pm k_j,0)$.
We define $\Theta_{n}^{\mathbf k}$ to be the space of  functions $f$
satisfying \eqref{vde}, which are analytic except for possible poles at
 $\Lambda$, and such that 
 the Laurent expansion of $g_j(z)$ at $z=1/3$ takes the form
\begin{equation}\label{ocd}g_j\left(z+\frac 13\right)=\sum_{N=0}^\infty a_N z^{2N}+\sum_{N=k_j}^\infty b_N z^{2N+1}, \qquad j=0,1,2,3.\end{equation}

\begin{lemma} \label{kl}
With $\mathbf k$ and $n$ as above, let
$$\Phi(z)=\prod_{j=0}^3\theta(e^{6\pi\ti (\gamma_j\pm z)};p^6)^{k_j^-}. $$
Then, $f\in \Theta_{n}^{\mathbf k}$ if and only if $F=f\Phi$ is an element in $\Theta_{n+|\mathbf k^-|}$ 
satisfying  
\begin{subequations}
\begin{equation}\label{fva}F'(\gamma_j)=F^{(3)}(\gamma_j)=\dots=F^{(2k_j-1)}(\gamma_j)=0 \end{equation}
 for indices $j$ such that $k_j> 0$ and 
\begin{equation}\label{fvb}\sigma F(\gamma_j)=(\sigma F)^{''}(\gamma_j)=\dots=(\sigma F)^{(-2k_j-2)}\left(\gamma_j\right)=0
\end{equation}
\end{subequations}
 for $j$ such that $k_j<0$.
\end{lemma}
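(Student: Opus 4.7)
The plan is to proceed in three stages. First I would derive the transformation properties of $\Phi$: using \eqref{tqp} together with $e^{6\pi\ti/3}=1$, one checks directly that $\Phi(z+1)=\Phi(z+\tfrac13)=\Phi(-z)=\Phi(z)$ and $\Phi(z+\tau)=e^{-6\pi\ti|\mathbf{k}^-|(\tau+2z)}\Phi(z)$. Each factor $\theta(e^{6\pi\ti(\gamma_j\pm z)};p^6)$ has simple zeros exactly on $\Lambda_j$, so $\Phi$ has zeros of order $2k_j^-$ on $\Lambda_j$ and nowhere else.

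Granted these, the fact that $F=f\Phi\in\Theta_{n+|\mathbf{k}^-|}$ is routine: oddness of $F$ uses $f$ odd and $\Phi$ even; the quasi-periodicities multiply; and \eqref{fe} for $F$ reduces to \eqref{fe} for $f$ because $\Phi$ is $\tfrac13$-periodic and factors out of the sum. Entirety is also clear: away from $\Lambda$ both factors are analytic, and at $\Lambda_j$ with $k_j<0$ the zero of $\Phi$ of order $2|k_j|$ absorbs the pole of $f$, whose odd singular part has order at most $2|k_j|-1$ by \eqref{ocd}; for $k_j\geq 0$ both are already analytic. A further consequence of the $\tfrac13$-periodicity is the clean identity $\sigma F=\Phi\cdot\sigma f$.

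The heart of the proof is translating \eqref{ocd} into \eqref{fva}/\eqref{fvb}. Combining \eqref{fe} with the oddness of $g_j$ around $z=0$ gives
\begin{equation*}
g_j\!\left(\tfrac13+w\right)-g_j\!\left(\tfrac13-w\right)=-g_j(w),
\end{equation*}
so the odd-in-$w$ part of $g_j(\tfrac13+w)$ equals $-\tfrac12 g_j(w)$. Hence \eqref{ocd} decomposes into (i) the even-in-$w$ part $E_j(w)$ of $g_j(\tfrac13+w)$ is analytic at $w=0$, and (ii) $g_j$ has a zero (or odd-order pole) of order $\geq 2k_j+1$ at $w=0$. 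A parallel computation yields $(\sigma f)(\gamma_j+z)=\kappa_j(z)\,E_j(z)$, where $\kappa_j$ is entire nonvanishing ($=\tfrac{2\ti}{\sqrt 3}$ for $j=0,3$ and $=\tfrac{2\ti}{\sqrt 3}\,e^{-6\pi\ti n z}$ for $j=1,2$).

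The equivalence now follows case by case. For $k_j>0$: a short Leibniz/induction argument (absorbing the $e^{6\pi\ti n z}$ twist in $g_j$ when $j=1,2$) shows that \eqref{fva} is equivalent to $g_j$ vanishing to order $\geq 2k_j+1$ at $z=0$, which is just (ii), since $F$ and $f$ differ by a factor nonvanishing at $\Lambda_j$ when $k_j\geq 0$. For $k_j<0$: combining $\sigma F=\Phi\,\sigma f$ with the formula above gives
\begin{equation*}
(\sigma F)(\gamma_j+z)=\Phi(\gamma_j+z)\,\kappa_j(z)\,E_j(z);
\end{equation*}
since $\Phi(\gamma_j+z)$ vanishes at $z=0$ to exact order $2|k_j|$ and $\kappa_j$ is nonvanishing, condition \eqref{fvb} (which amounts to $\sigma F$ vanishing at $\gamma_j$ to order $\geq 2|k_j|$) is equivalent to $E_j$ being analytic, namely (i). The remaining halves of \eqref{ocd} in each case are automatic (analyticity of $f$ at $\Lambda_j$ when $k_j\geq 0$, and the odd-order pole bound when $k_j<0$ from $F$ entire plus oddness of $g_j$). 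The main bookkeeping obstacle is the $j=1,2$ case, where the exponential twist $e^{6\pi\ti nz}$ must be carried through carefully to see that the vanishing orders are preserved.
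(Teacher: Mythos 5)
Your proof is correct and follows essentially the same route as the paper: the key identity $g_j(\tfrac13+w)-g_j(\tfrac13-w)=-g_j(w)$, obtained from \eqref{fe} and oddness, splits \eqref{ocd} into its odd part (which, since $\Phi$ is even and nonvanishing on $\Lambda_j$ when $k_j\geq 0$, translates into \eqref{fva}) and its even part, which is $\sigma g_j$ up to a nonvanishing factor and is matched with \eqref{fvb} through the zero of $\Phi$ of order $2k_j^-$ on $\Lambda_j$. The additional details you spell out (the quasi-periodicity and evenness of $\Phi$, the identity $\sigma F=\Phi\,\sigma f$, and the $e^{6\pi\ti nz}$ twist bookkeeping for $j=1,2$) are precisely the steps the paper treats as routine.
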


\begin{proof}
It is easy to check that $f$  satisfies \eqref{vde}
if and only if $F$  satisfies \eqref{vde} with $n$ replaced by
$n+|\mathbf k^-|$. Thus, it is enough to consider the behaviour at $\Lambda$.
Note  that  \eqref{ocd} implies
\begin{subequations}\label{gt}
\begin{equation}g_j(z)=-g_j\left(z+\frac 13\right)+g_j\left(-z+\frac 13\right)=-2\sum_{N=k_j}^\infty b_N z^{2N+1}.\end{equation}
Thus, 
if $k_j\geq 0$, \eqref{ocd} means that  $f$ is analytic at $\Lambda_j$
and has a zero of degree at least $2k_j+1$ at $z=\gamma_j$.
Since $g_j$ is odd, it is for the latter property enough to assume the 
vanishing of the first $k_j$ odd derivatives.
Since $\Phi$ is even and non-zero at $\Lambda_j$, 
this is in turn equivalent to 
$F$ being analytic at $\Lambda_j$ and satisfying \eqref{fva}.

If $k_j<0$,  $f$ has  poles of degree at most
$-(2k_j+1)$  at $\Lambda_j$; thus,  $F$ is analytic  there.
Writing 
\begin{equation}(\sigma g_j)(z)=\frac{\ti}{\sqrt 3}\left(g_j\left(z+\frac 13\right)+g_j\left(-z+\frac 13\right)\right)=
\frac{\ti}{\sqrt 3}\sum_{N=0}^\infty a_N z^{2N},\end{equation}
\end{subequations}
we find as above that \eqref{ocd} is equivalent to $F$ being analytic at 
$\Lambda_j$ and satisfying \eqref{fvb}.
\end{proof}

The following alternative characterization of the space $\Theta_n^{\mathbf k}$ 
will be used in \cite{r1b}. It is an easy consequence of \eqref{gt}.

\begin{lemma}
The space $\Theta_n^{\mathbf k}$ equals 
the space of functions $f$
satisfying \eqref{vde}, which are analytic except for possible poles at
 $\Lambda$, and such that, for $j=0,1,2,3$, 
$$\lim_{z\rightarrow \gamma_j}(z-\gamma_j)^{1-2k_j}f(z)=\lim_{z\rightarrow\gamma_j}(z-\gamma_j)^2\left(f\left(z+\frac 13\right)-f\left(z-\frac 13\right)\right)=0. $$
\end{lemma}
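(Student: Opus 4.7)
The plan is to use \eqref{gt}, which reveals how the two halves of the expansion \eqref{ocd} are encoded in the behaviour of $g_j$ and $\sigma g_j$ at $z=0$. First I would observe that, under \eqref{vde}, condition \eqref{ocd} is equivalent to the pair of requirements (a) $g_j(z)$ has Laurent expansion $-2\sum_{N\geq k_j}b_Nz^{2N+1}$ at $z=0$, and (b) $\sigma g_j(z)$ is analytic at $z=0$. The forward implication is exactly \eqref{gt}; for the converse, the three-term relation in \eqref{fe} combined with the definition \eqref{sig} yields
\[
2g_j\left(z+\frac 13\right)=-g_j(z)-\ti\sqrt 3\,\sigma g_j(z),
\]
and inserting the series in (a), (b) reconstructs \eqref{ocd}.

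Next I would translate (a) and (b) to statements at $\gamma_j$. The functions $g_j(z)$ and $f(z+\gamma_j)$ differ only by the nowhere-vanishing factor $1$ (for $j=0,3$) or $e^{6\pi\ti nz}$ (for $j=1,2$), so the Laurent orders of $f$ at $\gamma_j$ and of $g_j$ at $0$ coincide. Since $g_j$ is odd, its lowest-order Laurent term must sit in odd degree, so (a) is equivalent to $f(z)=o((z-\gamma_j)^{2k_j-1})$, which is the first condition in the statement. A short computation using \eqref{fqp} gives
\[
\sigma g_j(z)=\begin{cases}(\sigma f)(z+\gamma_j),& j=0,3,\\ e^{6\pi\ti nz}(\sigma f)(z+\gamma_j),& j=1,2,\end{cases}
\]
so the Laurent orders of $\sigma f$ at $\gamma_j$ and of $\sigma g_j$ at $0$ also coincide.

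Now $\sigma g_j$ is even about $0$, because $g_j$ is odd and $\sigma$ swaps parities, so its nonvanishing Laurent coefficients at $0$ sit in even degree. Consequently $\lim_{w\to 0}w^2\sigma g_j(w)=0$ forces $\sigma g_j$ to be analytic at $0$; translated back through the previous paragraph, this is exactly $\lim_{z\to\gamma_j}(z-\gamma_j)^2\sigma f(z)=0$, which via $\sigma f(z)=\frac{\ti}{\sqrt 3}(f(z+\frac 13)-f(z-\frac 13))$ is the second condition in the statement. The one point needing actual bookkeeping is the intertwining formula for $\sigma g_j$ when $j=1,2$, where the $\tau$-quasi-periodicity in \eqref{fqp} enters; once that is in place the lemma is a direct repackaging of \eqref{gt}.
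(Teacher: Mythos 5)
Your proof is correct and follows essentially the same route as the paper, which disposes of this lemma as "an easy consequence of \eqref{gt}": your splitting of $g_j(z+\frac13)$ into its odd part $-\frac12 g_j(z)$ and even part proportional to $\sigma g_j(z)$, with the reconstruction $2g_j(z+\frac13)=-g_j(z)-\ti\sqrt 3\,\sigma g_j(z)$, is exactly that consequence written out. (Only a cosmetic remark: for $j=1,2$ the intertwining $\sigma g_j(z)=e^{6\pi\ti nz}(\sigma f)(z+\gamma_j)$ needs only $e^{\pm 2\pi\ti n}=1$, i.e.\ integrality of $n$, rather than the $\tau$-quasi-periodicity in \eqref{fqp}.)
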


Since,
by Lemma \ref{kl}, $\Theta_{n}^{\mathbf k}$ may be obtained by imposing $\sum_j|k_j|=2|\mathbf k^-|+|\mathbf k|$ linear conditions on a space of dimension $2n+2|\mathbf k^-|$, its dimension is at least
$2n-|\mathbf k|=m$. In fact,
 this estimate is  sharp.


\begin{theorem}\label{dt}
The space $\Theta_{n}^{\mathbf k}$ has dimension $m=2n-|\mathbf k|$.
\end{theorem}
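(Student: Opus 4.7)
The plan is to apply Lemma~\ref{kl} and prove equivalently that the subspace $V\subset\Theta_N$ (with $N=n+|\mathbf{k}^-|$) consisting of $F$ satisfying the vanishing conditions \eqref{fva} and \eqref{fvb} has dimension exactly $m$. Since $\dim\Theta_N=2N$ (by the explicit basis recalled after \S\ref{sts}) and these conditions impose at most $\sum_j |k_j|$ linear equations, a direct count gives the lower bound
$$\dim V \ge 2N-\sum_j|k_j| = 2n+2|\mathbf{k}^-|-|\mathbf{k}^+|-|\mathbf{k}^-| = 2n-|\mathbf{k}| = m.$$
The real content of the theorem is the opposite inequality: the $\sum_j|k_j|$ linear functionals arising from \eqref{fva}, \eqref{fvb} must be linearly independent on $\Theta_N$.

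To establish this, I would proceed by induction on the total number of prescribed conditions. More precisely, I would encode the data as a pair $(\mathbf{a},\mathbf{b})\in\mathbb{Z}_{\ge0}^4\times\mathbb{Z}_{\ge0}^4$, where $a_j$ is the number of odd derivatives of $F$ required to vanish at $\gamma_j$ and $b_j$ the number of even derivatives of $\sigma F$ required to vanish at $\gamma_j$, and prove by induction on $|\mathbf{a}|+|\mathbf{b}|$ that the corresponding subspace of $\Theta_N$ has codimension exactly $|\mathbf{a}|+|\mathbf{b}|$. The base case $\mathbf{a}=\mathbf{b}=\mathbf 0$ is $\Theta_N$ itself. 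For the inductive step, given a passage from $(\mathbf{a},\mathbf{b})$ to $(\mathbf{a}',\mathbf{b}')$ obtained by incrementing a single entry, I need to exhibit a witness function in the larger space which fails precisely the new condition.

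For the witness I would use the primitive building blocks \eqref{tss} of Lemma~\ref{tsl}: tuning the parameter $a\in\{\omega,-\omega\}$ and choosing $b_i\in\{e^{6\pi\mathrm{i}\gamma_{j'}}:\ j'\ne j_0\}$ with appropriate multiplicities, one forms a product having zeros of precisely the prescribed orders at the $\gamma_{j'}$ with $j'\ne j_0$, and controlled behaviour at $\gamma_{j_0}$ that stops one order short of the new condition.

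The main obstacle is that the two families of conditions are of qualitatively different nature: \eqref{fva} is a plain vanishing condition on $F$ at $\gamma_j$, while \eqref{fvb} is a vanishing condition on $\sigma F$ at $\gamma_j$, which couples the values of $F$ at the distinct points $\gamma_j\pm\tfrac13$. A witness constructed by prescribing the zero divisor of $F$ alone handles only \eqref{fva}. To handle \eqref{fvb}, I would exploit the involution $\sigma$ described just before Lemma~\ref{kl}, which gives a bijection between $\Theta_N$ and its even counterpart and translates an \eqref{fvb}-type condition into a plain vanishing condition in the even space, where the same product construction (with the roles of odd and even functions swapped) applies. Combining the two constructions to realise any single new condition independently of the others should complete the induction and yield the dimension formula $m$.
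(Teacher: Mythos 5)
Your reduction via Lemma~\ref{kl}, and the codimension count giving $\dim\Theta_n^{\mathbf k}\geq m$, coincide with the paper's setup, and the inductive framework (impose one condition at a time and exhibit a witness satisfying the old conditions but failing the new one) is sound in principle. The gap is the witness construction itself, and it is not a technicality: it is the whole content of the theorem. A single function of the form \eqref{tss} has zero of order $1+2\epsilon_j+2c_j$ at $\gamma_j$, where $\epsilon_j\in\{0,1\}$ with $\sum_j\epsilon_j\leq 1$ (the factor $\tha(ae^{\pm 2\pi\ti z};p^2)$ can produce a double zero at only one half-period) and $c_j$ counts the $1/3$-periodic factors $\tha(e^{6\pi\ti(\gamma_j\pm z)};p^6)$, subject to $\sum_j c_j\leq N-1$. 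Now take $\mathbf k=(n,n,0,0)$, $m=0$: at the final step of your induction you must produce $F\in\Theta_n$ with $\operatorname{ord}_{\gamma_0}F\geq 2n+1$ and $\operatorname{ord}_{\gamma_1}F$ equal to $2n-1$ \emph{exactly}. For a single building block this forces $c_0\geq n-1$ and $c_1\geq n-2$, contradicting $c_0+c_1\leq n-1$ as soon as $n\geq 3$. So the witness must be a genuine linear combination, and then ``tuning the divisor'' is no longer available: you must show that the new functional does not annihilate the intermediate space, i.e.\ exactly the non-degeneracy you set out to prove; an exact (as opposed to minimal) vanishing order cannot be enforced by prescribing zeros. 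The $\sigma$-conditions compound the problem: applying the involution \eqref{sig} turns the \eqref{fvb}-type conditions into plain vanishing, but simultaneously turns the \eqref{fva}-type conditions into $\sigma$-type conditions in the even space, so mixed configurations are never decoupled by this trick.

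This is precisely why the paper takes a different route for the upper bound. It uses the wedge-power criterion \eqref{dkt}: by Proposition~\ref{tnkp}, the relevant element of $(\Theta_n^{\mathbf k})^{\wedge m}$ is \eqref{ukl}, governed by the polynomial $T_n^{(\mathbf k)}$, and Theorem~\ref{dt} reduces to Corollary~\ref{tnvc}, the statement that $T_n^{(\mathbf k)}\not\equiv 0$. That non-vanishing is the hard core: it is proved by degenerating to the trigonometric cusp $\zeta=-2$, where $T_n^{(\mathbf k)}$ factors into specialized symplectic and orthogonal characters \eqref{tcls}, whose non-vanishing (indeed sign control) is established through the recursions of Lemmas~\ref{nvcl} and~\ref{nvdl}. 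As it stands, your proposal is missing an argument of comparable strength; to complete the inductive route you would need, at each step, a non-vanishing statement equivalent to Corollary~\ref{tnvc} for the intermediate parameters $\mathbf k'$.
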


Theorem \ref{dt} is 
vital for our continued investigations \cite{r1b,r2}.
It is proved at the end of \S \ref{tls}.

\subsection{Elliptic Tsuchiya determinant}

We are interested in the one-dimensional space $\Theta_{n}^{\wedge 2n}$. It is
spanned by the alternant 
\begin{equation}\label{alt}\det\left(\psi_j(z_i)\right)_{1\leq i,j\leq 2n},
\end{equation}
 where $\psi_j$ runs through a basis of  $\Theta_{n}$. 
We will also need another type of determinant formula, which is less symmetric, but has the advantage that any minor of the determinant involved is a determinant of the same form.

\begin{lemma}
The space $\Theta_{n}^{\wedge 2n}$ is spanned by the function
\begin{multline}\label{eik}\prod_{j=1}^{2n}e^{-2\pi\ti z_j}\tha(e^{4\pi \ti z_j};p^2)
\prod_{i,j=1}^ne^{-6\pi\ti z_{n+j}}\tha(e^{6\pi \ti (z_{n+j }\pm z_i)};p^6)\\
\times\det_{1\leq i,j\leq n}\left(\frac{e^{-2\pi\ti z_{n+j}}\tha(e^{2\pi \ti (z_{n+j }\pm z_i)};p^2)}{e^{-6\pi\ti z_{n+j}}\tha(e^{6\pi \ti (z_{n+j }\pm z_i)};p^6)}\right).
 \end{multline}
\end{lemma}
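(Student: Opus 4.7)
The plan is to verify directly that the function $F(z_1,\dots,z_{2n})$ appearing on the right of \eqref{eik} lies in $\Theta_n^{\wedge 2n}$ and is not identically zero. Since $\dim\Theta_n = 2n$ gives $\dim\Theta_n^{\wedge 2n} = 1$ (spanned by the alternant \eqref{alt}), this is enough. Writing $w_j := z_{n+j}$, I would establish (i) $F$ lies in $\Theta_n$ as a function of each variable, (ii) $F$ vanishes on every hyperplane $z_a = z_b$ with $a\ne b$, and (iii) $F\not\equiv 0$.

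For (i), expanding the displayed determinant along row $i$ (with $1\le i\le n$) and cancelling the denominator of the $(i,\sigma(i))$-entry against the matching factor of the outer prefactor, the $z_i$-dependence of the $\sigma$-term becomes
\begin{equation*}
e^{-2\pi\ti z_i}\theta\bigl(e^{4\pi\ti z_i},\,e^{2\pi\ti w_{\sigma(i)}}e^{\pm 2\pi\ti z_i};\,p^2\bigr)\prod_{j\ne\sigma(i)}\theta\bigl(e^{6\pi\ti w_j}e^{\pm 6\pi\ti z_i};\,p^6\bigr),
\end{equation*}
which is exactly of the form \eqref{tss}. By Lemma \ref{tsl} this lies in $\Theta_n$; summing over $\sigma$ preserves membership, and expanding by column $j$ gives the analogous statement in $w_j$.

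For (ii), absorb the entire outer product $\prod_{i,j}e^{-6\pi\ti w_j}\theta(e^{6\pi\ti(w_j\pm z_i)};p^6)$ into the determinant before examining the diagonals; the resulting expression is
\begin{equation*}
\sum_{\sigma\in S_n}\sgn(\sigma)\prod_{i=1}^n e^{-2\pi\ti w_{\sigma(i)}}\theta\bigl(e^{2\pi\ti(w_{\sigma(i)}\pm z_i)};p^2\bigr)\prod_{j\ne\sigma(i)} e^{-6\pi\ti w_j}\theta\bigl(e^{6\pi\ti(w_j\pm z_i)};p^6\bigr).
\end{equation*}
A within-block coincidence $z_i=z_{i'}$ or $w_j=w_{j'}$ produces equal rows or columns of the original determinant. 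For a mixed coincidence $z_i=w_j$, every summand above vanishes individually: the $\sigma$ with $\sigma(i)=j$ contains the factor $\theta(1;p^2)=0$ from the $\pm$ term $\theta(e^{2\pi\ti(w_j-z_i)};p^2)$, while any $\sigma$ with $\sigma(i)\ne j$ contains $\theta(1;p^6)=0$ from the pair $(i,j)$ in the second product. Hence $F$ vanishes on every diagonal.

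Together (i) and (ii) force antisymmetry by a standard interpolation argument: for fixed generic $z_2,\dots,z_{2n}$, $F(\cdot,z_2,\dots,z_{2n})\in\Theta_n$ vanishes at the $2n-1$ remaining arguments, placing it in the one-dimensional subspace of $\Theta_n$ that also contains the alternant $D(\cdot,z_2,\dots,z_{2n})$; iterating over all variables shows $F$ is a constant multiple of $D$. Finally (iii) follows by specialization, e.g.\ by computing the leading behaviour of the explicit permutation sum as $p\to 0$, where it reduces to a Cauchy-type rational determinant. The principal obstacle is step (ii): the inner determinant and the outer prefactor are individually singular on any mixed diagonal $z_i = w_j$, and the argument becomes transparent only after they are merged into the single summand-wise-vanishing form displayed above.
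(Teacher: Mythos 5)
Your steps (i) and (ii) are essentially sound and genuinely different from the paper's proof: the paper gets antisymmetry by uniformizing to the rational identity \eqref{od} (Okada) and nonvanishing by induction, whereas you get vanishing on all diagonals summand-by-summand and then interpolate. Two remarks on the first half. For (i), the column expansion in a variable $z_{n+j}$ matches the form \eqref{tss} only after flipping theta arguments with \eqref{tqp}, so the exponential bookkeeping should be checked (it does work out). More importantly, your concluding step is under-argued: vanishing on all diagonals does \emph{not} by itself imply antisymmetry for an element of $\Theta_n^{\otimes 2n}$ (nonzero symmetric tensors can vanish on the diagonal), so the interpolation carries all the weight, and it only yields $F(\cdot,z_2,\dots,z_{2n})=c(z_2,\dots,z_{2n})D(\cdot,z_2,\dots,z_{2n})$ with $D$ the alternant \eqref{alt}; the lemma requires $F\in\Theta_n^{\wedge 2n}$, i.e.\ $c$ constant, and "iterating over all variables" does not automatically give this. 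It can be repaired in a few lines: perform the interpolation in $z_1$ and in $z_2$ separately, note that the two proportionality factors coincide wherever $D\neq0$ while one is independent of $z_1$ and the other of $z_2$, hence $c$ is independent of $z_2$, and likewise of every variable; also record the genericity fact that evaluation at $2n-1$ generic points has one-dimensional common kernel in $\Theta_n$ (because \eqref{alt} is not identically zero). So this is a fixable omission rather than a failure.

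The genuine gap is (iii). The modular parameter $\tau$ (equivalently $p$) is fixed throughout the paper, and the lemma must hold for \emph{every} $\tau$ in the upper half-plane. Computing the leading behaviour of the permutation sum as $p\to0$ (where \eqref{eik} degenerates to the Cauchy-type Tsuchiya determinant) only shows that the function is not identically zero for all sufficiently small $|p|$; even combined with analyticity in $\tau$ of the constant $c(\tau)$ relating \eqref{eik} to \eqref{alt}, this still leaves a possible discrete set of exceptional $\tau$ at which \eqref{eik} could vanish identically, and nothing in your sketch excludes it. You need a nonvanishing argument valid at fixed $\tau$. The paper's device does exactly this: let $z_{2n}\to z_n+1/3$; expanding the determinant along the last row, only the last entry survives, and \eqref{eik} becomes a manifestly nonzero theta-product factor times the same expression with $n$ replaced by $n-1$, so induction on $n$ gives nonvanishing for every $\tau$. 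Substituting that (or any explicit evaluation at fixed $\tau$) for your $p\to0$ degeneration closes the gap; with these two repairs your route is a valid alternative to the paper's.
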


(When $n=0$, \eqref{eik} should be interpreted as the constant  $1$, and $\Theta_0^{\wedge 0}$ as $\mathbb C$.)

\begin{proof}
As a function of $z_1$, \eqref{eik} is a linear combination of the terms
$$e^{-2\pi\ti z_1}\tha(e^{4\pi \ti z_1};p^2)\tha(e^{2\pi \ti (z_{n+k }\pm z_1)})\prod_{j=1,\,j\neq k}^n\tha(e^{6\pi \ti (z_{n+j }\pm z_1)};p^6), $$
with $1\leq k\leq n$, which  belong to $\Theta_n$ by Lemma \ref{tsl}.

Next, we prove that \eqref{eik} is  antisymmetric in $z_1,\dots,z_{2n}$.
By \eqref{tad}, 
\begin{multline*}\prod_{i,j=1}^ne^{-6\pi\ti z_{n+j}}\tha(e^{6\pi \ti (z_{n+j }\pm z_i)};p^6)\det_{1\leq i,j\leq n}\left(\frac{e^{-2\pi\ti z_{n+j}}\tha(e^{2\pi \ti (z_{n+j }\pm z_i)};p^2)}{e^{-6\pi\ti z_{n+j}}\tha(e^{6\pi \ti (z_{n+j }\pm z_i)};p^6)}\right)\\
\sim\prod_{j=1}^{2n}{\tha(be^{\pm 2\pi\ti z_j};p^2)}{\tha(b^3e^{\pm 6\pi\ti z_j};p^6)^{n-1}}
\prod_{i,j=1}^n\left(v_{n+j}-v_i\right)
\det_{1\leq i,j\leq n}\left(\frac{u_{n+j}-u_i}{v_{n+j}-v_i}\right),
\end{multline*}
where 
\begin{equation}\label{uv}u_j=\frac{\tha(ae^{\pm 2\pi\ti z_j};p^2)}{\tha(be^{\pm 2\pi\ti z_j};p^2)},\qquad v_j=\frac{\tha(a^3e^{\pm 6\pi\ti z_j};p^6)}{\tha(b^3e^{\pm 6\pi\ti z_j};p^6)}, \end{equation}
with $a$ and $b$ arbitrary parameters with $ab,\,a/b\notin p^{2\mathbb Z}$. 
Here and below, $\sim$ means equality up to a non-zero multiplicative factor independent of the variables $z_j$.
The antisymmetry now follows from the general fact that 
\begin{equation}\label{od}\prod_{i,j=1}^n\left(v_{n+j}-v_i\right)
\det_{1\leq i,j\leq n}\left(\frac{u_{n+j}-u_i}{v_{n+j}-v_i}\right) \end{equation}
is antisymmetric under simultaneous permutations of $u_j$ and $v_j$, see \cite[Thm.\ 4.2]{o}.

It remains to show that \eqref{eik} is not identically zero. To this end, we consider the limit $z_{2n}\rightarrow z_n+1/3$. Expanding the determinant along the last row, only the entry in the last column contributes to the limit. Therefore,  \eqref{eik} reduces to a non-zero factor times a similar expression with $n$ replaced by $n-1$. The non-vanishing of \eqref{eik} then follows by induction on $n$. 
\end{proof}

When $p=0$, \eqref{eik} is essentially  the Tsuchiya determinant \cite{ts} (with $\Delta=1/2$), which is the  partition function of the  six-vertex model on a rectangle bounded by one reflecting edge and three domain walls.  
Recently, Filali generalized this to the elliptic level, interpreting
 \eqref{eik} as a partition function for  the 8VSOS model
\cite{f}.

The antisymmetry of \eqref{od} is explicitly displayed in the identity
\cite[Thm.\ 7.2]{l}
\begin{multline}\label{li}\prod_{i,j=1}^n\left(v_{n+j}-v_i\right)
\det_{1\leq i,j\leq n}\left(\frac{u_{n+j}-u_i}{v_{n+j}-v_i}\right)\\
=\prod_{1\leq i<j\leq 2n}(\sqrt{v_j}+\sqrt{v_i})\pfaff_{1\leq i,j\leq 2n}\left(\frac{u_{j}-u_i}{\sqrt{v_{j}}+\sqrt{v_i}}\right),
\end{multline}
see also \cite{rcd}. In the case at hand, it is natural to
 choose $a$ and $b$ in \eqref{uv} as  two of the four numbers 
$e^{2\pi\ti\gamma_j}$, $0\leq j\leq 3$, since $v_j$ then have meromorphic square roots. 
For instance, choosing  $a=1$, $b=-1$ yields that \eqref{eik} is equal to 
\begin{multline*}\frac {\prod_{j=1}^{2n}e^{-2\pi\ti(3n-2) z_j}\tha(e^{4\pi \ti z_j};p^2)}{\tha(-1;p^6)^{2n(n-1)}}\\
\times\prod_{1\leq i<j\leq 2n}\Big(\tha(e^{6\pi \ti z_j},-e^{6\pi \ti z_i};p^6)
+\tha(-e^{6\pi \ti z_j},e^{6\pi \ti z_i};p^6)\Big)\\
\times\pfaff_{1\leq i,j\leq 2n}\left(\frac{e^{-2\pi\ti z_{j}}\tha(e^{2\pi \ti (z_{j }\pm z_i)})}{\tha(e^{6\pi \ti z_j},-e^{6\pi \ti z_i};p^6)
+\tha(-e^{6\pi \ti z_j},e^{6\pi \ti z_i};p^6)}\right).
\end{multline*}
 There exist $\binom{4}{2}=6$ pfaffian formulas of this nature. Although 
more symmetric than \eqref{eik}, they seem less useful for our purposes.

\subsection{Elliptic functions}

We will write
$$x(z)=x(z,\tau)=\frac{\theta(- p\omega;p^2)^2\theta(\omega e^{\pm 2\pi \ti z};p^2)}{\theta(-\omega;p^2)^2\theta( p\omega e^{\pm 2\pi \ti z};p^2)} $$
(in \cite{r}, $x(z)$ was denoted $\xi(e^{2\pi \ti z})$).
This is an even elliptic function with periods $1$ and $\tau$, and in fact generates the field of all such functions. 
By \eqref{tad},
\begin{equation}\label{xd} x(z)- x(w)\sim\frac{\tha(e^{2\pi \ti(z\pm w)};p^2)}{\tha( p\om e^{\pm 2\pi \ti z}, p\om e^{\pm 2\pi \ti w};p^2)},\end{equation}
which implies
\begin{equation}\label{xp} x'(z)\sim\frac{e^{-2\pi \ti z}\tha(e^{4\pi \ti z};p^2)}{\tha( p\om e^{\pm 2\pi \ti z};p^2)^2}.\end{equation} 
The constants of proportionality (which depend on $\tau$) can be given explicitly but are irrelevant for our purposes.

As in \cite{r}, we will write
\begin{equation}\label{z}\zeta=\zeta(\tau)=\frac{\omega^2\theta(-1,- p\omega;p^2)}{\theta(- 
p,-\omega;p^2)}. \end{equation}
As we discuss in \S \ref{ms}, $\zeta$ generates the field of modular functions for the congruence group $\Gamma_0(6,2)\sim\Gamma_0(12)$. 
Thus, if a function of $(z,\tau)$ has the appropriate elliptic and modular behaviour, it is automatically a rational function of $(x,\zeta)$. We refer to the change of variables from $(z,\tau)$ to $(x,\zeta)$ as \emph{uniformization}.

The values of $x$ on the lattice $\Lambda$
are  rational expressions in $\zeta$.

\begin{lemma}\label{xvl}
The values $\xi_j=x(\gamma_j)$ and $\eta_j=x(\gamma_j+1/3)$ are given by
\begin{align*}
\xi_0&=2\zeta+1,& \xi_1&=\frac{\zeta}{\zeta+2},
& \xi_2&=\frac{\zeta(2\zeta+1)}{\zeta+2},
& \xi_3&=1,\\
\eta_0&=0, & \eta_1&=\infty, &\eta_2&=\frac{2\zeta+1}{\zeta+2},
& \eta_3&=\zeta.
\end{align*}
\end{lemma}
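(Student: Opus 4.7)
The plan is to evaluate each value by direct substitution into the definition of $x(z)$ and match the result with the claimed rational function of $\zeta$. Three cases follow immediately from $\theta(1;p^2) = 0$: substituting $z = 1/2$ makes the arguments $\omega e^{\pm 2\pi\ti z}$ and $p\omega e^{\pm 2\pi\ti z}$ equal to $-\omega$ and $-p\omega$ respectively, so all theta factors cancel against the prefactor and $\xi_3 = 1$; substituting $z = 1/3$ or $z = \tau/2 + 1/3$ produces a vanishing $\theta(1;p^2)$ in the numerator or denominator, yielding $\eta_0 = 0$ and $\eta_1 = \infty$.

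For the remaining five cases, I plug in the relevant $z$-value and use the quasi-periodicity \eqref{tqp} together with the symmetry $\theta(x;p^2) = \theta(p^2/x;p^2)$ to reduce $x(\gamma_j)$ or $x(\gamma_j + 1/3)$ to a ratio of products of the eight theta values $\theta(\pm 1, \pm\omega, \pm p, \pm p\omega;p^2)$. The simplest is $\eta_3 = \zeta$: for $z = 5/6$ one has $e^{\pm 2\pi\ti z} = -\omega^{\mp 1}$, and after applying $\theta(-\omega^{-1};p^2) = \omega^{-1}\theta(-\omega;p^2)$ and $\theta(-p\omega^{-1};p^2) = \theta(-p\omega;p^2)$ the expression for $x(5/6)$ collapses to the right-hand side of \eqref{z}. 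For each of the remaining four cases ($\xi_0$, $\xi_1$, $\xi_2$, $\eta_2$), the equality with the proposed rational function of $\zeta$ becomes, after clearing denominators, a three-term identity among products of three theta values with coefficients in $\mathbb Z[\omega]$; for example, $\xi_0 = 2\zeta + 1$ is equivalent to
\[
\theta(-p, -\omega, p\omega;p^2) + \omega^2\theta(-1, -p\omega, p\omega;p^2) + \theta(-p\omega, -\omega, p;p^2) = 0,
\]
using the paper's shorthand in which repeated variables denote products.

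Such a three-term identity can be produced by combining two instances of the addition formula \eqref{tad}. For $\xi_0 = 2\zeta + 1$, the choices $(a,b,z,w) = (-1, 1, p\omega, \omega)$ and $(a,b,z,w) = (p, 1, \omega, -\omega)$ yield closed-form expressions for $\theta(-p\omega;p^2)^2\theta(\omega;p^2)^2 - \theta(-\omega;p^2)^2\theta(p\omega;p^2)^2$ and for the same combination with a $+$ sign; solving the resulting $2\times 2$ linear system gives $x(0)$ as an explicit ratio, and comparison with \eqref{z} produces the displayed three-term relation. The main obstacle is finding the correct pair of instances of \eqref{tad} for each of the four non-trivial cases; once this bookkeeping is done, the remaining verification is a routine $\omega$-arithmetic exercise. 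As an alternative, one may observe that each side of each equation is a modular function for $\Gamma_0(12)$ (see \S \ref{ms}), hence a rational function of $\zeta$; a degree count together with the behaviour of both sides at the cusp $\tau = \ti\infty$ (where $p \to 0$) then suffices for verification.
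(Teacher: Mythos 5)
Your route is genuinely different from the paper's, which offers no computation at all but simply cites Lemma 7.7 of [R3]. The parts you carry out are correct: $\xi_3=1$ by direct cancellation, $\eta_0=0$ and $\eta_1=\infty$ from a vanishing $\theta(1;p^2)$ in numerator resp.\ denominator, and $\eta_3=\zeta$ collapses to \eqref{z} exactly as you say (your branch labels $e^{\pm2\pi\ti z}=-\omega^{\mp1}$ at $z=5/6$ are interchanged, but harmlessly, since only the product over both signs enters). Your two specializations of \eqref{tad} for $\xi_0$ are also well chosen: they do produce closed forms for the difference and the sum of $\theta(-p\omega;p^2)^2\theta(\omega;p^2)^2$ and $\theta(-\omega;p^2)^2\theta(p\omega;p^2)^2$, and solving the resulting system does reduce $\xi_0=2\zeta+1$ to the three-term relation you display. (Note in passing that ``clearing denominators'' alone gives products of six theta factors; the passage to three-factor products is precisely the content of those two applications of \eqref{tad}, not a preliminary triviality.)

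The genuine gap is what happens after that reduction. The displayed relation, which is equivalent to $\theta(p,-p\omega;p^2)/\theta(-p,p\omega;p^2)=-(1+\zeta)$, is itself a nontrivial theta-function identity of exactly the same nature as the statement you set out to prove; it is not an instance of \eqref{tad} with the quoted parameters, and it certainly does not follow from ``routine $\omega$-arithmetic''. Some further input is required (another, differently specialized application of \eqref{tad}, the quintuple-product relation behind \eqref{pss}, an elliptic-function argument, or a $q$-expansion/valence bound), and none is supplied; moreover for $\xi_1$, $\xi_2$, $\eta_2$ you do not even exhibit the reduction --- ``finding the correct pair of instances'' is exactly the unproved content, and these cases involve arguments containing $p$ (since $\gamma_1,\gamma_2$ contain $\tau/2$), so they are not mere relabelings of the $\xi_0$ case. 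The fallback modularity argument does not repair this: within this paper the fact that $\zeta$ is a Hauptmodul for $\Gamma_0(6,2)$ is established in \S\ref{ms} \emph{using} Lemma \ref{xvl} (via \eqref{xmt} and the specialization $z=1/2+1/3$), so appealing to \S\ref{ms} here is circular, and in any case the ``degree count plus behaviour at one cusp'' is only sketched, whereas an identity of modular functions needs a comparison of $q$-expansions to an order controlled by an explicit valence bound. As it stands, the four non-degenerate evaluations $\xi_0,\xi_1,\xi_2,\eta_2$ remain unproven.
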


These evaluations are all contained in, or follow from,
\cite[Lemma 7.7]{r}.



Translating $z$ by half-periods corresponds to rational transformations of 
$x$.

\begin{lemma}\label{hpl}
If  $x=x(z)$, then
$$x\left(z+\frac 1 2\right)=\frac{(2\zeta+1)(x-\zeta)}{(\zeta+2)x-(2\zeta+1)}, 
\qquad x\left(z+\frac\tau 2\right)=\frac{\zeta(2\zeta+1)}{(\zeta+2)x}. $$
\end{lemma}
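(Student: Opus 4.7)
The plan is to exploit that $x$ generates the field of even elliptic functions with periods $(1,\tau)$. For $\gamma\in\{\tfrac12,\tfrac\tau2\}$, the translate $z\mapsto x(z+\gamma)$ is again elliptic with periods $(1,\tau)$; moreover it is even in $z$, because $x(-z+\gamma)=x(z-\gamma)=x(z+\gamma)$, where the first equality uses evenness of $x$ and the second uses that $2\gamma$ lies in the period lattice. Consequently $x(z+\gamma)$ is a rational function of $x(z)$.

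The next step is a degree count. The function $x$ defines a degree-$2$ map from $\mathbb C/(\mathbb Z+\tau\mathbb Z)$ to $\mathbb{CP}^1$, its two simple poles in a fundamental domain being visible from \eqref{xd}, and the translate $x(z+\gamma)$ has the same degree. Hence the rational function expressing $x(z+\gamma)$ in terms of $x(z)$ must itself have degree $1$, that is, it is a M\"obius transformation.

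To determine the three free parameters of each M\"obius transformation, I would evaluate at the half-periods $\gamma_0,\dots,\gamma_3$, whose $x$-values are given by Lemma \ref{xvl}. Translation by $1/2$ induces the involution $\gamma_0\leftrightarrow\gamma_3$, $\gamma_1\leftrightarrow\gamma_2$ on half-periods, while translation by $\tau/2$ swaps $\gamma_0\leftrightarrow\gamma_1$ and $\gamma_2\leftrightarrow\gamma_3$. This yields four data points in each case: three suffice to pin down the M\"obius transformation uniquely, while the fourth provides a consistency check. A short algebraic verification that the two formulas proposed in the lemma implement the correct permutations of $\{\xi_0,\xi_1,\xi_2,\xi_3\}$ then completes the proof. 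I do not expect any serious obstacle; the only step requiring care is the arithmetic in $\zeta$ needed to fit the M\"obius transformations to the half-period data, and the fact that both candidate formulas are manifestly involutive (matching the fact that $2\gamma$ is a period) provides a useful sanity check along the way.
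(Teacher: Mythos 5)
Your argument is correct, and it differs in organization from the paper's proof. The paper verifies each identity by observing that the two sides are elliptic functions of $z$ with the same periods, zeroes and poles (all read off from Lemma \ref{xvl}, e.g.\ $\eta_0=0$ and $\eta_1=\infty$ locate the zeroes and poles of $x$), so that their ratio is constant, and then checks the single value $z=0$. You instead argue a priori: $x(z+\gamma)$ is even elliptic, hence rational in $x$ because $x$ generates the field of even elliptic functions, and the degree count ($x$ has degree $2$, so the rational function has degree $1$) forces a M\"obius transformation, which you then pin down by interpolation at the half-period values of Lemma \ref{xvl}. Your route buys a conceptual explanation of \emph{why} the transformation must be fractional-linear, at the cost of three evaluation points instead of one and of needing the field-generation and degree facts; the paper's route is more economical but requires matching zeroes and poles of the explicit right-hand sides. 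The only point you should make explicit is that the three interpolation points $\xi_j$ used must be distinct; this holds for generic $\tau$ (since $\zeta$ is non-constant), and the identities, being between functions meromorphic in $(z,\tau)$, then hold identically, so the gap is harmless.
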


\begin{proof}
 By Lemma \ref{xvl}, both sides  of the two identities  are elliptic 
functions of $z$ 
with the same periods, zeroes and poles. 
Thus, it suffices to verify the case
$z=0$, which  again follows from  Lemma \ref{xvl}.
\end{proof}

Translating $z$ by $1/3$ corresponds to an algebraic transformation of 
 $x$.

\begin{lemma}\label{mll}
 If $x=x(z)$,  $x_+=x(z+1/3)$ and $x_-=x(z-1/3)$, then
\begin{subequations}\label{xpm}
\begin{align}
x+x_++x_-&=2\zeta+1, \\
\frac 1x+\frac 1 {x_+}+\frac 1{x_-}&=\frac{\zeta+2}{\zeta}. \end{align}
\end{subequations}
\end{lemma}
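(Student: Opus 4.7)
The plan is to show that both identities in \eqref{xpm} express the constancy of an elliptic function. Set
\begin{align*}
f(z) &= x(z) + x(z+1/3) + x(z-1/3),\\
g(z) &= \frac{1}{x(z)} + \frac{1}{x(z+1/3)} + \frac{1}{x(z-1/3)}.
\end{align*}
The three summands of each are cyclically permuted by $z \mapsto z+1/3$ (using the $1$-periodicity of $x$), so $f$ and $g$ are meromorphic with the refined period lattice $\tfrac{1}{3}\mathbb Z + \tau \mathbb Z$.

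Next I would verify that both are in fact holomorphic. From the product formula for $x$, its simple poles on $\mathbb C/(\mathbb Z+\tau\mathbb Z)$ lie at $\gamma_1 \pm 1/3$ and its simple zeros at $\gamma_0 \pm 1/3 = \pm 1/3$. Modulo the refined lattice, the only candidate poles of $f$ sit at $z \equiv \gamma_1$, where only the two translates $x(z \pm 1/3)$ are singular. Since $\gamma_1 - 1/3 \equiv -(\gamma_1 + 1/3) \pmod{\tau}$ and $x$ is even, the residues of $x$ at $\gamma_1 + 1/3$ and $\gamma_1 - 1/3$ are opposite; the two singular contributions to $f$ therefore cancel. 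The same argument applied to the zeros of $x$ at $\pm 1/3$ shows that $g$ is holomorphic at $z \equiv \gamma_0$. Hence both $f$ and $g$ are constant.

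Finally, to determine the constants I would evaluate at convenient lattice points using Lemma \ref{xvl}. Setting $z = 0$ gives $f(0) = \xi_0 + 2\eta_0 = 2\zeta+1$, proving the first identity. Setting $z = \gamma_1$ gives $g(\gamma_1) = 1/\xi_1 + 2/\eta_1 = (\zeta+2)/\zeta$, since $\eta_1 = \infty$, proving the second. The only step needing care is the residue cancellation, but this follows automatically from the evenness of $x$ together with the symmetric placement of its poles (respectively zeros) about $\gamma_1$ (respectively $\gamma_0$) modulo $\tau$.
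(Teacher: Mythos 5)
Your proof is correct and follows essentially the same route as the paper: both regard the sums as elliptic functions for the refined lattice $\tfrac13\mathbb Z+\tau\mathbb Z$, use the evenness of $x$ to rule out the simple pole at $\gamma_1$ (respectively $\gamma_0$ for the reciprocal sum) — your residue cancellation is just a rephrasing of the paper's observation that an even combination cannot have a simple pole — and then evaluate the constants via Lemma \ref{xvl}. No gaps; you merely write out explicitly the second identity, which the paper dismisses with ``proved similarly.''
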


\begin{proof}
We observe that $x+x_++x_-$ is an elliptic function of $z$ with periods $1/3$ and $\tau$. Modulo these periods, the only possible pole is at $z=\tau/2$, where 
$x_+$ and $x_-$ have simple poles. However,  
$$x_+\left(\frac\tau 2+z\right)+x_-\left(\frac \tau 2+z\right)
=x\left(\frac \tau 2+\frac 1 3+z\right)+x\left(\frac\tau 2+\frac 13-z\right)$$ 
is an even function of $z$  
and  thus cannot have a simple pole at $z=0$. It follows that 
$x+x_++x_-$ is independent of $z$. The value can then be computed using
Lemma \ref{xvl}. The second equation is proved similarly.
\end{proof}





\subsection{Uniformization of $\Theta_n^{\wedge 2n}$.}

By Lemma  \ref{tsl} and \eqref{xd}, any $f\in\Theta_n$ can be written
\begin{equation}\label{put}f(z)=e^{-2\pi\ti z}\tha(e^{4\pi\ti z};p^2)\tha(\om pe^{\pm 2\pi\ti z};p^2)^{3n-2}P(x(z)),\end{equation}
where $P$ is a polynomial of degree at most $3n-2$.

As in \cite{r}, we let
\begin{gather}\notag
G(x,y)=\frac 1{y-x}\left|\begin{matrix}(\zeta+2)x-\zeta & (\zeta+2)y-\zeta\\
x^2(x-2\zeta-1) &y^2(y-2\zeta-1)
\end{matrix}\right|\\
\label{gd}=
(\zeta+2)xy(x+y)-\zeta(x^2+y^2)-2(\zeta^2+3\zeta+1)xy+\zeta(2\zeta+1)(x+y). \end{gather}
For simplifying many identities below, it is useful to know that
\begin{align*}G(x,\xi_0)&=2(\zeta+1)^2x^2,& G(x,\xi_1)&=\frac{2\zeta^2(\zeta+1)^2}{(\zeta+2)^2},\\ 
G(x,\xi_2)&=\frac{2\zeta^2\big((\zeta+2)x-(2\zeta+1)\big)^2}{(\zeta+2)^2}, & 
G(x,\xi_3)&=2(x-\zeta)^2.\end{align*}

By \eqref{xd} and Lemma \ref{xvl},
\begin{equation}\label{gu}
e^{-6\pi\ti z_{2}}\tha(e^{6\pi\ti (z_{2}\pm z_1)};p^6)\sim \tha(\om pe^{\pm 2\pi\ti z_1},\om pe^{\pm 2\pi\ti z_2};p^2)^3(x_2-x_1)G(x_1,x_2),
\end{equation}
where $x_j=x(z_j)$.
The uniformization of \eqref{tss} is then given by 
\begin{equation}\label{gpu}
(x-x(a))\prod_{k=1}^{n-1}(x-x(b_k))G(x,x(b_k))\end{equation}
in the sense that if $P(x)$ denotes this expression, then 
\eqref{put} is proportional to \eqref{tss}. In the same sense, \eqref{fpj} is uniformized by 
\begin{equation}\label{psb}
P_{j}(x)=x^{j-1}(x-2\zeta-1)^{\left[\frac{j-1}2\right]}\big((\zeta+2)x-\zeta\big)^{n-\left[\frac{j+1}2\right]}.\end{equation}
Thus, using the basis  \eqref{fpj} in \eqref{alt}, we find that
  $\Theta_n^{\wedge 2n}$
is spanned by
\begin{equation}\label{tut}\prod_{j=1}^{2n}e^{-2\pi\ti z_j}\tha(e^{4\pi\ti z_j};p^2)
\tha(\om pe^{\pm 2\pi\ti z_j};p^2)^{3n-2}\Delta(x_1,\dots,x_{2n})\,T(x_1,\dots,x_{2n}), \end{equation}
where $x_j=x(z_j)$, $\Delta(\mathbf x)=\prod_{i<j}(x_j-x_i)$ and
\begin{equation}\label{sdt}T(x_1,\dots,x_{2n})=\frac{\det_{1\leq i,j\leq 2n}\left(P_j(x_i)\right)}{\Delta(x_1,\dots,x_{2n})}
\end{equation}
(by convention, $T=1$ when $n=0$).
This is a symmetric polynomial in $2n$ variables, depending also as a polynomial on the parameter $\zeta$. As we will see in \S \ref{zss}, it is related to the polynomial $H_{2n}$ introduced in \cite{zj} by a  change of variables.
 
We now give the uniformization of the determinant \eqref{eik}.

\begin{lemma}
The   polynomial
$T$ is given by the alternative determinant formula
\begin{equation}\label{tn} T(x_1,\dots,x_{2n})
=\frac{\prod_{i,j=1}^nG(x_i,x_{n+j})}{\Delta(x_1,\dots,x_n)\Delta(x_{n+1},\dots,x_{2n})}\,\det_{1\leq i,j\leq n}\left(\frac{1}{G(x_i,x_{n+j})}\right).
 \end{equation}
\end{lemma}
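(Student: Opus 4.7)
The plan is to leverage that $\Theta_n^{\wedge 2n}$, being the top exterior power of the $2n$-dimensional space $\Theta_n$, is one-dimensional. The preceding lemma shows \eqref{eik} is a nonzero element of $\Theta_n^{\wedge 2n}$, and by the construction leading to \eqref{tut} and \eqref{sdt}, the expression \eqref{tut} with $T$ defined by \eqref{sdt} is also a nonzero element. Hence \eqref{eik} and \eqref{tut} are proportional with a scalar depending only on $\tau$, and it suffices to uniformize \eqref{eik} and read off $T$.

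Carrying this out, the prefactor $\prod_j e^{-2\pi\ti z_j}\tha(e^{4\pi\ti z_j};p^2)$ is already in the form required by \eqref{tut}. By \eqref{gu},
\[
\prod_{i,j=1}^n e^{-6\pi\ti z_{n+j}}\tha(e^{6\pi\ti (z_{n+j}\pm z_i)};p^6)\sim \prod_{j=1}^{2n}\tha(\om p e^{\pm 2\pi\ti z_j};p^2)^{3n}\prod_{i,j=1}^n(x_{n+j}-x_i)\,G(x_i,x_{n+j}),
\]
and applying \eqref{xd} to the numerator and \eqref{gu} to the denominator of each determinant entry yields
\[
\frac{e^{-2\pi\ti z_{n+j}}\tha(e^{2\pi\ti(z_{n+j}\pm z_i)};p^2)}{e^{-6\pi\ti z_{n+j}}\tha(e^{6\pi\ti(z_{n+j}\pm z_i)};p^6)}\sim \frac{1}{\tha(\om p e^{\pm 2\pi\ti z_i},\om p e^{\pm 2\pi\ti z_{n+j}};p^2)^{2}\,G(x_i,x_{n+j})}.
\]
The theta factors in each entry depend only on the row index $i$ (through $z_i$) or only on the column index $j$ (through $z_{n+j}$), so they pull out of the determinant. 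The net exponent of $\tha(\om p e^{\pm 2\pi\ti z_j};p^2)$ becomes $3n-2$, matching \eqref{tut}, and the residual rational content is
\[
\prod_{i,j=1}^n (x_{n+j}-x_i)\,G(x_i,x_{n+j})\,\det_{1\leq i,j\leq n}\left(\frac{1}{G(x_i,x_{n+j})}\right).
\]
Using the factorization $\Delta(x_1,\dots,x_{2n})=\Delta(x_1,\dots,x_n)\Delta(x_{n+1},\dots,x_{2n})\prod_{i,j=1}^n(x_{n+j}-x_i)$, this is exactly the factor $\Delta(x_1,\dots,x_{2n})\,T(x_1,\dots,x_{2n})$ in \eqref{tut} with $T$ given by the right-hand side of \eqref{tn}, modulo one global constant.

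To pin that constant at $1$, I plan to use the degeneration $z_{2n}\to z_n+1/3$ employed in the proof of the preceding lemma: in both \eqref{sdt} and the proposed right-hand side of \eqref{tn}, only the last column of the determinant contributes to the leading singularity, and each expression reduces to its analog at level $n-1$. The two reductions obey the same recursion, so with the trivial base case $n=0$ (both sides equal $1$) an induction on $n$ forces the constant to be $1$. The main technical nuisance is the bookkeeping of multiplicative constants from \eqref{xd} and \eqref{gu}, but the one-dimensionality of $\Theta_n^{\wedge 2n}$ reduces the identification to this single-scalar check, so the obstacle is routine rather than conceptual.
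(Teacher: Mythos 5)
The first half of your proposal is exactly the paper's argument: \eqref{eik} is a nonzero element of the one-dimensional space $\Theta_n^{\wedge 2n}$, \eqref{tut} spans that space, and uniformizing \eqref{eik} via \eqref{xd} and \eqref{gu} identifies its rational content with $\Delta(\mathbf x)$ times the right-hand side of \eqref{tn}, so that \eqref{sdt} and \eqref{tn} agree up to a factor depending only on $\zeta$ (your bookkeeping of the exponential prefactors is slightly off, but that part is the routine check the paper also waves through).

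The genuine gap is in your determination of the constant. Your induction needs both sides to satisfy the \emph{same} explicit reduction under the degeneration $z_{2n}\to z_n+1/3$, i.e.\ under specializing $x_{2n}$ to a root of $G(x_n,\cdot)=0$. For the right-hand side of \eqref{tn} this reduction is indeed computable (the prefactor $G(x_n,x_{2n})$ vanishes against the blowing-up entry $1/G(x_n,x_{2n})$, and only the $(n,n)$ cofactor survives). But for \eqref{sdt} the entries $P_j(x_i)$ are polynomials: there is no singularity at this specialization, so your statement that ``only the last column of the determinant contributes to the leading singularity'' is simply false there, and no reduction of $\det(P_j(x_i))$ to the level-$(n-1)$ alternant with an explicit factor is available. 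The only object for which the paper proves such a reduction is \eqref{eik}, and transferring it to \eqref{sdt}/\eqref{tut} goes through the proportionality constant you are trying to determine — so the argument becomes circular: you would learn $c_n f_n = (c_n/c_{n-1})f_n'$ with $f_n'$ unknown, not $c_n=c_{n-1}$. In effect, the recursion you need for \eqref{sdt} is a nontrivial identity of essentially the same depth as \eqref{tn} itself (in the paper it is \emph{derived} from \eqref{tn}, cf.\ Lemma \ref{tstvl}). The paper closes this step differently and non-circularly: it compares the highest homogeneous components of the two sides, computing that of \eqref{sdt} from the leading coefficients of the $P_j$ and that of \eqref{tn} from $\lim_{t\to 0}t^3G(x/t,y/t)=(\zeta+2)xy(x+y)$ together with the Cauchy determinant evaluation, both giving $(\zeta+2)^{n(n-1)}\prod_j x_j^{n-1}$. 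You need some such direct evaluation (or an independently proven recursion for the alternant) to pin the constant; as written, the inductive step does not go through.
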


\begin{proof}
Using \eqref{xd} and \eqref{gu}, it is straight-forward to check that
\eqref{eik} is proportional to \eqref{tn}. Thus, \eqref{sdt} and \eqref{tn}
are equal up to a multiplicative factor independent of the variables $x_j$. 
To see that this factor is $1$, we compute the highest homogeneous component of both sides. Since the leading term in  $P_j(x)$ is $(\zeta+2)^{n-[(j+1)/2]}x^{n+j-2}$, since $\sum_{j=1}^{2n}(n-[(j+1)/2])=n(n-1)$
and  $\det(x_i^{j-1})=\Delta(\mathbf x)$, the highest homogeneous component of
\eqref{sdt} is $(\zeta+2)^{n(n-1)}\prod_{j=1}^{2n}x_j^{n-1}$. On the other hand,  
since
$$\lim_{t\rightarrow 0}t^3 G(x/t,y/t)=(\zeta+2)xy(x+y) $$
the highest homogeneous component of \eqref{tn} is
\begin{multline*}\frac{(\zeta+2)^{n(n-1)}\prod_{j=1}^{2n}x_j^{n-1}\prod_{i,j=1}^n(x_i+x_{n+j})}{\Delta(x_1,\dots,x_n)\Delta(x_{n+1},\dots,x_{2n})}
\det_{1\leq i,j\leq n}\left(\frac{1}{x_i+x_{n+j}}\right)\\
=(\zeta+2)^{n(n-1)}\prod_{j=1}^{2n}x_j^{n-1},
 \end{multline*}
by the Cauchy determinant evaluation.

\end{proof}

By \eqref{li},   
there are also  pfaffian formulas for $T$.
For instance, 
\begin{multline*}T(x_1,\dots,x_{2n})
=\prod_{1\leq i<j\leq 2n}\frac{\sqrt{f(x_i)h(x_j)}+\sqrt{f(x_j)h(x_i)}}{x_j-x_i}\\
\times\pfaff_{1\leq i,j\leq 2n}\left(\frac{x_j-x_i}{\sqrt{f(x_i)h(x_j)}+\sqrt{f(x_j)h(x_i)}}\right), \end{multline*}
where
$$f(x)=(\zeta+2)x-\zeta,\qquad h(x)=x^2(x-2\zeta-1). $$
Up to a multiplicative constant, this remains true if 
$f$ and $h$ are replaced by any two linearly independent 
elements in their span.

\subsection{Uniformization of the map $\sigma$}

As was explained in \S \ref{sts}, the map $\sigma$ defined in \eqref{sig} maps
$\Theta_n$ to the space of entire functions satisfying \eqref{vde}, {except} that they are even rather than odd. It follows that, if $f\in\Theta_n$,
\begin{equation}\label{q}(\sigma f)(z)=\theta(\omega pe^{\pm 2\pi\ti z};p^2)^{3n} Q(x(z)) \end{equation}
for some polynomial $Q$ of degree at most $3n$. 
The following result gives a useful description of the relation between $P$ and $Q$.

\begin{lemma}\label{spl}
For fixed $n$, define the linear operator $\hat \sigma$ 
on the span of $(P_j)_{j=1}^{2n}$ by
\begin{equation*}\hat\sigma( P_j)=\begin{cases}e_0(x)P_j(x), & j \text{ odd},\\
e_1(x)P_j(x), & j \text{ even},
\end{cases} \end{equation*}
where
\begin{align*}
e_0(x)&=\big((\zeta+2)x-\zeta\big)(2\zeta+1-3x),\\
e_1(x)&=(x-2\zeta-1)\big((\zeta+2)x-3\zeta\big).
\end{align*}
Then, there exists a constant $C=C(\tau)$ such that the polynomials $P$ and $Q$ in \eqref{put} and \eqref{q} are related by
$Q=C\hat\sigma( P)$. Moreover,
$$\hat\sigma\left((x-a)\prod_{k=1}^{n-1}(x-b_k)G(x,b_k)\right)
=(xe_1(x)-ae_0(x))\prod_{k=1}^{n-1}(x-b_k)G(x,b_k).
 $$
\end{lemma}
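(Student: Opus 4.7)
The strategy is to reduce to the case $n=1$ by a key invariance. The factor $h(z) = \prod_{k=1}^{n-1}\theta(b_k e^{\pm 6\pi\ti z};p^6)$ is invariant under $z\mapsto z+1/3$ (since $e^{\pm 2\pi\ti}=1$), so $\sigma(gh) = h\,\sigma(g)$ for any function $g$. By Lemma \ref{tsl} and \eqref{gpu}, the element of $\Theta_n$ whose uniformization \eqref{put} is $(x-a)\prod_k(x-b_k)G(x,b_k)$ equals $g_a(z)h(z)$ up to a $z$-independent constant, where $g_a(z) = e^{-2\pi\ti z}\theta(e^{4\pi\ti z}, ae^{\pm 2\pi\ti z};p^2)\in\Theta_1$ uniformizes to $x-a$. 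By \eqref{gu}, $h(z)$ is proportional to $\theta(\omega p e^{\pm 2\pi\ti z};p^2)^{3(n-1)}\prod_k(x-b_k)G(x,b_k)$, so the polynomial uniformizing $\sigma(g_a h) = h\,\sigma g_a$ at level $n$ equals the polynomial uniformizing $\sigma g_a$ at level $1$, multiplied by $\prod_k(x-b_k)G(x,b_k)$. Thus, once the $n=1$ case is established -- namely, $\sigma g_a$ uniformizes to $C(xe_1(x) - a e_0(x))$ for a $\tau$-dependent constant $C$ independent of $a$ -- the second identity follows for general $n$.

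For the first identity, each basis element $\psi_j$ in \eqref{fpj} factors as $c_j g_{\epsilon_j} h_j$ with $\epsilon_j = \omega p$ for $j$ odd and $\epsilon_j = \omega$ for $j$ even, and $h_j$ an invariant product of theta functions of nome $p^6$. By Lemma \ref{xvl}, $x(\omega)=\eta_0=0$ and $x(\omega p)=\eta_1=\infty$, so under uniformization $P_j\propto xF_j$ (for $j$ even) and $P_j\propto F_j$ (for $j$ odd), where $F_j$ is the polynomial corresponding to $h_j$. Specializing the second identity to $a=0$ yields $\hat\sigma(xF) = xe_1(x)F$, while the normalized limit $a\to\infty$ yields $\hat\sigma(F) = e_0(x)F$; applied to the above factorizations, these give $\hat\sigma(P_j) = e_{\epsilon_j}(x)P_j(x)$.

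The heart of the proof -- and the main obstacle -- is the $n=1$ case. Using the representations $g_\omega(z) \propto e^{-3\pi\ti z}\psi(z+1/2)\theta(e^{6\pi\ti z};p^6)$ and $g_{-\omega}(z) \propto e^{-3\pi\ti z}\psi(z)\theta(-e^{6\pi\ti z};p^6)$ from the proof of Lemma \ref{tsl}, together with \eqref{tad} to recover $g_{\omega p}$ as a linear combination of $g_{\pm\omega}$, a direct computation using $\psi(z)=\psi(z+1/3)+\psi(z-1/3)$ and $\psi(z+1)=-\psi(z)$ gives $\sigma(e^{-3\pi\ti z}\psi(z+1/2)) = -e^{-3\pi\ti z}\sigma\psi(z+1/2)$. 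The resulting $\sigma g_\omega$ lies in the $2$-dimensional even analogue $\tilde\Theta_1$ of $\Theta_1$ and must be identified with $C\,\theta(\omega p e^{\pm 2\pi\ti z};p^2)^3\,xe_1(x(z))$; a zero-matching argument accomplishes this. Both functions vanish at $z=\pm 1/3$ (on the LHS because $\theta(e^{6\pi\ti z};p^6)$ vanishes there, on the RHS because $x(\pm 1/3)=0$) and at $z=0$ with multiplicity $2$ (on the LHS by evenness of $\sigma g_\omega$ combined with $\sigma g_\omega(0)=0$, on the RHS from the double zero of $x(z)-\xi_0$ at the half-period $\gamma_0=0$). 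By \eqref{fe} combined with evenness, the conditions $F(0)=F(\pm 1/3)=0$ on $F\in\tilde\Theta_1$ collapse to the single genuine constraint $F(1/3)=0$, cutting $\tilde\Theta_1$ down to a $1$-dimensional subspace in which both candidate functions lie; they are therefore proportional, and a single further point evaluation fixes the constant $C$. The companion formula for $\sigma g_{-\omega}$ is analogous, and linearity in $a$ extends this to $\sigma g_a\leftrightarrow C(xe_1-ae_0)$ for all $a$, completing the $n=1$ case.
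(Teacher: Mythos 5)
Your overall architecture is sound and in fact runs parallel to the paper's: the observation that the $1/3$-periodic factor $\prod_k\theta(b_ke^{\pm 6\pi\ti z};p^6)$ commutes with $\sigma$, so that everything reduces to the rank-one functions $g_a\in\Theta_1$, is exactly the paper's remark that $\sigma(\psi_j)/\psi_j$ and $\sigma(f)/f$ for $f$ as in \eqref{tss} are independent of $n$ and of the $b_k$; and your derivation of the first identity from the $n=1$ statement via the factorizations of the $\psi_j$ (using $x$-values $\eta_0=0$ and $\eta_1=\infty$) is fine, modulo the fact that what you obtain is $Q=C\hat\sigma(P)$, not ``$\hat\sigma(P_j)=e_{\epsilon_j}P_j$'', which is the definition of $\hat\sigma$. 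The genuine gap is in the $n=1$ case, which you correctly identify as the heart. Your zero-matching places both $\sigma g_\omega$ and the target $R(z)=\theta(\om pe^{\pm 2\pi\ti z};p^2)^3\,x(z)e_1(x(z))$ inside the two-dimensional space $\sigma\Theta_1$ and cuts it down by the single condition $F(1/3)=0$; but membership of $R$ in $\sigma\Theta_1$ is precisely what is unverified. Any function $\theta(\om pe^{\pm 2\pi\ti z};p^2)^3Q(x(z))$ with $Q$ cubic is entire, even, one-periodic and has the correct quasi-periodicity; what cuts this four-dimensional family down to the two-dimensional space $\sigma\Theta_1$ is condition \eqref{fe}, and you never check \eqref{fe} for $Q=xe_1$ (this is not a routine check: the prefactor is not $1/3$-periodic, so \eqref{fe} is not merely a polynomial identity in $x,x_\pm$). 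Without it, the zeros you match ($Q(0)=Q(\xi_0)=0$, from $z=\pm 1/3$ and the double zero at $z=0$) only give $\sigma g_\omega\leftrightarrow x(x-2\zeta-1)(\alpha x+\beta)$ with $(\alpha,\beta)$ undetermined; the factor $(\zeta+2)x-3\zeta$, whose root $3\zeta/(\zeta+2)$ is not a value of $x$ on the lattice $\Lambda$ and which is the real content of the lemma, is never produced. The paper gets this extra information by coupling all four specializations: when $P$ vanishes at $\eta_j$, the image vanishes at \emph{both} $\xi_j$ and $\eta_j$, and using $j=2,3$ (i.e.\ $P\propto(\zeta+2)x-(2\zeta+1)$ and $P\propto x-\zeta$) in addition to $j=0,1$ yields a linear system for the unknown quadratic cofactors whose solution space is one-dimensional. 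Your argument has no counterpart of this step.

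Two smaller points. First, even granting the two companion computations, you determine $\sigma g_\omega$ and $\sigma g_{-\om}$ only up to separate constants; the lemma requires a single $C(\tau)$ multiplying both $e_0$ and $xe_1$ (the ``Moreover'' identity contains no constant at all), so before invoking ``linearity in $a$'' you must show the two constants coincide, e.g.\ by the explicit point evaluations you allude to. Second, the claim that $F\mapsto F(1/3)$ is not identically zero on $\sigma\Theta_1$ (so that the cut-down subspace is genuinely one-dimensional) is easy but should be recorded.
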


\begin{proof}
Writing $Q=\tilde\sigma(P)$,
we must prove that $\tilde\sigma=C\hat\sigma$.
 It is clear from
 \eqref{fpj} that $\sigma(\psi_j)/\psi_j$ is independent of $n$ and depends only on the parity of $j$. Since, for $n=1$, $P_1=1$ and $P_2=x$, it follows that 
\begin{equation*}\frac{\tilde\sigma(P_j)}{P_j}=\begin{cases}\tilde\sigma(1), & j \text{ odd},\\
\tilde\sigma(x)/x, & j \text{ even}.
\end{cases} \end{equation*}
Suppose now that $f$ is given by \eqref{put}, where $n=1$ and 
 $P$ vanishes at $\eta_j$ for some $j$. 
 Then,  $f(z+ 1/3)$ and  $f(z- 1/3)$ vanish at $z=\gamma_j$ and at $z=\gamma_j+1/3$, so $Q$ vanishes at $\xi_j$ and $\eta_j$. 
(Vanishing at $\eta_1=\infty$ should be interpreted as vanishing of the top coefficient.) 
Using this for $j=0$ and $j=1$ gives
\begin{align*}
\tilde\sigma(x)&=x(x-2\zeta-1)(ax+b),\\
\tilde\sigma(1)&=\big((\zeta+2)x-\zeta\big)(cx+d)
\end{align*}
for some constants $a$, $b$, $c$, $d$.
 For $j=2$ and $j=3$, we find that
$$(\zeta+2)\tilde\sigma(x)-(2\zeta+1)\tilde\sigma(1)
$$ 
vanishes at $x=\zeta(2\zeta+1)/(\zeta+2)$ and at  $x=(2\zeta+1)/(\zeta+2)$, and that
$$\tilde\sigma(x)-\zeta\tilde\sigma(1) $$
vanishes at $x=1$ and $x=\zeta$.  This gives a system of four linear equations for $a,b,c,d$.
The  solution space is one-dimensional and spanned by
$$a=\zeta+2,\qquad b=c=-3, \qquad d=2\zeta+1.$$
Thus, for some constant $C$, $\tilde\sigma(x)=Cxe_1(x)$ and $\tilde\sigma(1)=C e_0(x)$, so that $\tilde \sigma=C\hat \sigma$.

Finally, we note that if $f$ is given by  \eqref{tss}, then
$\sigma(f)/f$ is independent of $n$ and the parameters $b_j$.  
Since  \eqref{tss} is uniformized by 
\eqref{gpu}, it follows that
$$\frac{\hat\sigma\left((x-a)\prod_{k=1}^{n-1}(x-b_k)G(x,b_k)\right)}{(x-a)\prod_{k=1}^{n-1}(x-b_k)G(x,b_k)}=\frac{\hat\sigma(x-a)}{x-a}=\frac{xe_1(x)-ae_0(x)}{x-a}.$$
\end{proof}

Although it is not hard to give an explicit expression for the constant $C(\tau)$, the following somewhat implicit description will be more useful.

\begin{lemma}
Let 
$$\phi(z)=\frac{e^{-2\pi\ti z}\tha(e^{4\pi\ti z};p^2)}{C(\tau)\tha(\om pe^{\pm 2\pi\ti z};p^2)^{2}},$$
where $C$ is the constant in \emph{Lemma \ref{spl}}. Then, 
\begin{equation}\label{esu}
\sigma M =  \phi^{-1}M\hat\sigma,
\end{equation} where $M$ denotes multiplication by
\begin{equation}\label{m}M(z)=e^{-2\pi\ti z}\tha(e^{4\pi\ti z};p^2)\tha(\om pe^{\pm 2\pi\ti z};p^2)^{3n-2}.\end{equation}
Moreover,
\begin{equation}\label{phx}\phi(z)^2=-3X(x(z)),\end{equation}
 where
\begin{align}\nonumber X(x)&=(x-2\zeta-1)\big((\zeta+2)x-\zeta\big)\big((\zeta+2)x-\zeta(2\zeta+1)\big)(x-1)\\
\label{x} &= (\zeta+2)^2\prod_{j=0}^3(x-\xi_j).
\end{align}
\end{lemma}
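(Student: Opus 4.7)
The plan is in two parts, corresponding to the two claims in the statement.

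For the operator identity $\sigma M = \phi^{-1}M\hat\sigma$, I would simply substitute the definition of $\phi$. The factor $e^{-2\pi iz}\theta(e^{4\pi iz};p^2)$ in the numerator of $M$ cancels against the same factor in the denominator of $\phi^{-1}$, giving
\[ \phi(z)^{-1}M(z) = C(\tau)\theta(\omega pe^{\pm 2\pi iz};p^2)^{3n}, \]
which is exactly the prefactor of $\hat\sigma P(x(z))$ in Lemma \ref{spl}'s formula $\sigma(MP(x))(z) = C(\tau)\theta(\omega pe^{\pm 2\pi iz};p^2)^{3n}\hat\sigma P(x(z))$. Thus the operator identity is a direct repackaging of Lemma \ref{spl}.

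For $\phi^2 = -3X(x(z))$, I would first establish proportionality by a divisor argument. By \eqref{xp} and the definition of $\phi$, the function $\phi$ is a $\tau$-dependent constant multiple of $x'$, so $\phi^2 \propto (x')^2$. Since $x$ is an even elliptic function of degree two with simple poles at $z\equiv \tau/2\pm 1/3$, and is invariant under $z\mapsto -z+2\gamma_j$ for each half-period $\gamma_j$, $x'$ has simple zeros precisely at $\gamma_0,\gamma_1,\gamma_2,\gamma_3$ and double poles at $\tau/2\pm 1/3$. Hence $(x')^2$ has the same divisor as $X(x(z)) = (\zeta+2)^2\prod_{j=0}^3(x(z)-\xi_j)$: double zeros at the four half-periods (since $\xi_j = x(\gamma_j)$ is a critical value of $x$) and poles of order four at $\tau/2\pm 1/3$. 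Two elliptic functions with matching divisor are proportional, so $\phi^2 = c(\tau)X(x(z))$ for some scalar $c(\tau)$.

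To identify $c(\tau) = -3$, I would extract $C(\tau)$ explicitly and substitute. Evaluating the operator identity with $n=1$, $P=1$ at $z=0$ equates $(\sigma M)(0) = (2i/\sqrt 3)M(1/3)$ on the left (using that $M(0)=0$ and $M$ is odd) with $-4C(\tau)(\zeta+1)^2(2\zeta+1)\theta(\omega p,\omega^2 p;p^2)^3$ on the right, where I have used $e_0(\xi_0) = e_0(2\zeta+1) = -4(\zeta+1)^2(2\zeta+1)$; this determines $C(\tau)$ in closed theta-function form. Substituting back into $\phi(z)^2/X(x(z))$ and matching the leading $(z-z_0)^{-4}$ behaviour at $z_0 = \tau/2+1/3$ (where $\phi^2$ and $X$ both diverge with matching orders) reduces the identity to a theta-function identity connecting $\zeta$, $\theta(\omega;p^2)$, $(p^2;p^2)_\infty$, and related quantities, which simplifies to $-3$. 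This scalar extraction is the main obstacle: the proportionality $\phi^2 \propto X(x)$ is essentially automatic from the divisor structure, but the $-3$ is only accessible through a nontrivial theta computation that unwinds the implicit definition of $C(\tau)$ from Lemma \ref{spl}; the $-3$ itself ultimately traces back to the coefficient of $x$ in the factor $2\zeta+1-3x$ of $e_0$, reflecting the threefold nature of the $1/3$-translation underlying $\sigma$.
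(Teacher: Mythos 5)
Your treatment of \eqref{esu} and of the proportionality in \eqref{phx} is sound and coincides with the paper's: \eqref{esu} is indeed just a repackaging of Lemma \ref{spl}, and since $\phi\sim x'$ by \eqref{xp}, both $\phi^2$ and $X(x(z))$ are elliptic functions with double zeros at the four half-periods and fourth-order poles at $z\equiv\tau/2\pm\tfrac13$, so $\phi(z)^2=c(\tau)X(x(z))$ for some scalar $c(\tau)$.

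The gap is in the identification $c(\tau)=-3$, which is the only nontrivial content of \eqref{phx} and which you assert rather than prove. Your route --- extract $C(\tau)$ in closed theta form by evaluating \eqref{esu} at $z=0$, then match the $(z-z_0)^{-4}$ coefficients at the pole $z_0=\tau/2+\tfrac13$ --- requires exactly the data the paper deliberately never supplies: the unspecified proportionality constants in \eqref{xd} and \eqref{xp}, the leading coefficient of $x$ at its pole, and the product formula \eqref{z} relating $\zeta$ to theta constants. The closing claim ``which simplifies to $-3$'' is therefore an unexecuted and genuinely laborious theta computation, not a proof. The paper avoids all of this by a better choice of evaluation point: apply \eqref{esu} to $P_1$ and evaluate at $z=\tfrac13$. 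Then the term $M(0)$ coming from $\sigma$ vanishes, and since $x(\tfrac13)=x(\tfrac23)=0$, $P_1(0)\neq0$ and $M(\tfrac13)=-M(\tfrac23)\neq0$, every unknown factor (including $C(\tau)$ and $M$ itself) cancels between the two sides, leaving $\phi(\tfrac13)=\ti\sqrt3\,e_0(0)=-\ti\sqrt3\,\zeta(2\zeta+1)$; squaring gives $-3\zeta^2(2\zeta+1)^2=-3X(0)$, which fixes $c(\tau)=-3$ at once. If you want to keep your framework, simply replace the $z=0$ evaluation and pole-matching by this single evaluation at $z=\tfrac13$; as written, the decisive step is missing.
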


\begin{proof}
The identity \eqref{esu} follows immediately from Lemma \ref{spl}.
To prove \eqref{phx}, we note that both sides are elliptic functions with the 
same zeroes and poles. Thus, it suffices to verify \eqref{phx} for a fixed $z$. 
To this end, we apply \eqref{esu} to 
$P_1$ and evaluate the result at $z=1/3$. Since $M(0)=0$, we get
$$\frac\ti{\sqrt 3}\,M(2/3)P_1(x(2/3))=\frac{M(1/3)}{\phi(1/3)}\,e_0(x(1/3))P_1(x(1/3)).$$
Since $x(1/3)=x(2/3)=0$, $P_1(0)\neq 0$ and $M(1/3)=-M(2/3)\neq 0$, we find that
$\phi(1/3)=\ti\sqrt 3\,e_0(0)=-\ti\sqrt 3\,\zeta(2\zeta+1)$. It follows that \eqref{phx} holds at
$z=1/3$, which completes the proof.
\end{proof}

The following result is a uniformization of the identity $\sigma^2=\id$.

\begin{lemma}\label{scl}
If  $\hat\sigma_n$ denotes the operator defined in
\emph{Lemma \ref{spl}}, then 
$$\hat\sigma_{n+2}Y\hat\sigma_n=-3XY,$$
 where $X$ is as in \eqref{x} and
\begin{equation}
\label{y}Y(x)=x\big((\zeta+2)x-(2\zeta+1)\big)(x-\zeta)=(\zeta+2)(x-\eta_0)(x-\eta_2)(x-\eta_3).
\end{equation}
\end{lemma}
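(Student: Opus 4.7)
Plan. By Lemma~\ref{tsl}, the span of $(P_j^{(n)})_{j=1}^{2n}$ is generated by polynomials of the form $(x-a)g(x)$ with $g(x)=\prod_{k=1}^{n-1}(x-b_k)G(x,b_k)$, and on these generators the second formula of Lemma~\ref{spl} gives $\hat\sigma_n((x-a)g)=R_a(x)\,g(x)$, where $R_a(x):=xe_1(x)-ae_0(x)$. By $\mathbb C$-linearity of both sides of the claimed identity $\hat\sigma_{n+2}Y\hat\sigma_n=-3XY$, it therefore suffices to prove
\[
\hat\sigma_{n+2}\bigl(Y(x)R_a(x)g(x)\bigr)=-3X(x)Y(x)(x-a)g(x)
\]
for all choices of $a,b_1,\dots,b_{n-1}$.

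Next I would reduce this to the base case $n=1$. Suppose I have a decomposition
\[
Y(x)R_a(x)=\sum_i\lambda_i(x-a_i')(x-c_{i,1})G(x,c_{i,1})(x-c_{i,2})G(x,c_{i,2})
\]
valid for $n=1$, such that applying $\hat\sigma_3$ termwise via Lemma~\ref{spl} produces $-3XY(x-a)$. Multiplying through by $g(x)$, each summand becomes an element of the form $(x-a_i')\prod_{k=1}^{n+1}(x-b_k')G(x,b_k')$ (absorbing the $n-1$ factors of $g$ together with the two new factors from $c_{i,1},c_{i,2}$), so Lemma~\ref{spl} at level $n+2$ applies termwise, and the $n=1$ identity (multiplied by $g$) delivers the general case.

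The heart of the argument is therefore the base identity $\hat\sigma_3(YR_a)=-3XY(x-a)$, a polynomial identity in $x$ of degree $8$ depending polynomially on $\zeta$ and linearly on $a$. To produce the required decomposition of $YR_a$, I would use the special evaluations of $G$ recorded above (notably $G(x,\xi_j)$ is a perfect square in $(x-\eta_j)$ for $j=0,2,3$ and a constant for $j=1$, while $Y$ vanishes precisely at $\eta_0,\eta_2,\eta_3$). Matching vanishings and leading coefficients of both sides at $x=\xi_j,\eta_j$ pins down the coefficients $\lambda_i,a_i',c_{i,j}$; applying $\hat\sigma_3$ termwise then reduces the claim to an algebraic identity among $X,Y,e_0,e_1,G$, which can be checked directly using the factorisations given in~\eqref{xpm}, \eqref{x}, \eqref{y} and the explicit evaluations in Lemma~\ref{xvl}.

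The main obstacle is identifying this decomposition in closed form rather than resorting to a brute coefficient-by-coefficient comparison in the $(x,a,\zeta)$-expansion. A more conceptual alternative is to use the involution $\sigma^2=\id$ at the function-theoretic level: iterating~\eqref{esu} yields an inversion formula for $\hat\sigma_n$ that brings in the factor $\phi^{-2}=-1/(3X)$ from~\eqref{phx}, and combining this with the ratio $M_{n+2}/M_n=\theta(\omega pe^{\pm 2\pi\ti z};p^2)^6$ together with the theta-function uniformization of $Y(x(z))$ via~\eqref{xd} should produce the operator identity $\hat\sigma_{n+2}Y\hat\sigma_n=-3XY$ directly, bypassing explicit decompositions.
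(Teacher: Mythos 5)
Your reduction to the base case $n=1$ via Lemma \ref{spl} is sound: once $Y(x)\bigl(xe_1(x)-ae_0(x)\bigr)$ is written as a combination of level-$3$ generators $(x-a')(x-c_1)G(x,c_1)(x-c_2)G(x,c_2)$, multiplying by $g$ and applying Lemma \ref{spl} termwise does lift the identity to all $n$. But, as you yourself flag, the crux is not carried out: you neither produce such a decomposition (equivalently, you never show that $Y\cdot\hat\sigma_1(x-a)=Y\,(xe_1-ae_0)$ lies in the span of the level-$3$ basis at all, which is part of what the lemma asserts) nor verify the base identity $\hat\sigma_3\bigl(Y(xe_1-ae_0)\bigr)=-3XY(x-a)$; "matching vanishings at $\xi_j,\eta_j$" and "can be checked directly" defer exactly the content of the statement, so the main route is incomplete at its decisive step.

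Your closing "conceptual alternative" is in fact the paper's proof, but your sketch omits the one observation that makes it work. By \eqref{esu}, $\hat\sigma_{n+2}Y\hat\sigma_n=\phi M_{n+2}^{-1}\,\sigma\,\bigl(M_{n+2}\,Y(x(z))\,\phi\,M_n^{-1}\bigr)\,\sigma\,M_n$, so the two copies of $\sigma$ are separated by a multiplication operator, and "iterating \eqref{esu}" by itself produces no inversion. The key point, checked with \eqref{xd} and Lemma \ref{xvl}, is that $M_{n+2}\,Y(x(z))\,\phi\,M_n^{-1}\sim e^{-6\pi\ti z}\tha(e^{12\pi\ti z};p^6)$, a $1/3$-periodic function, which therefore commutes with $\sigma$; only then can the two $\sigma$'s be brought together, $\sigma^2=\id$ on $\Theta_n$ invoked, and \eqref{phx} applied to give $\phi^2Y=-3XY$. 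You list the right ingredients ($\sigma^2=\id$, the ratio $M_{n+2}/M_n$, \eqref{xd}, $\phi^{-2}=-1/(3X)$), but without identifying the $1/3$-periodicity and the resulting commutation with $\sigma$ of that particular combination, the argument does not close. Note also that this function-theoretic step is what settles, for free, the domain issue your first route leaves open, namely that $Y\hat\sigma_n(P)$ lies in the domain of $\hat\sigma_{n+2}$.
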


\begin{proof}
Denoting the function \eqref{m} by  $M_n$ and using \eqref{esu}, we need to prove that
\begin{equation}\label{ssp}\phi M_{n+2}^{-1}\sigma M_{n+2}Y\phi M_n^{-1}\sigma M_n=-3XY. \end{equation}
Using  \eqref{xd}, it is easy to check that
$$  M_{n+2}Y\phi M_n^{-1}\sim e^{-6\pi\ti z}\tha(e^{12\pi\ti z};p^6). $$
This is a $1/3$-periodic function, and thus commutes with $\sigma$. Since $\sigma^2=\id$ as an operator on $\Theta_n$, \eqref{ssp} follows from \eqref{phx}.
\end{proof}

Finally, we mention the following easily verified identities. They 
uniformize a corresponding symmetry of the functions \eqref{fpj} under
$z\mapsto z+\tau/2$.

\begin{lemma}\label{pfl}
\begin{align*}x^{3n-2}P_j\left(\frac{\zeta(2\zeta+1)}{(\zeta+2)x}\right)&=
(-1)^{n+1}\zeta^{n-1+\left[\frac j2\right]}\left(\frac{2\zeta+1}{\zeta+2}\right)^{\left[\frac{3(j-1)}2\right]}{P_{2n+1-j}(x)},\\
x^{3n}(\hat\sigma P_j)\left(\frac{\zeta(2\zeta+1)}{(\zeta+2)x}\right)&=
(-1)^{n}\zeta^{n+\left[\frac j2\right]}\left(\frac{2\zeta+1}{\zeta+2}\right)^{\left[\frac{3(j-1)}2\right]+1}{(\hat\sigma P_{2n+1-j})(x)}
 \end{align*}
\end{lemma}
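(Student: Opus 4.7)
My plan is a direct substitution for the first identity, from which the second follows by a short computation with $e_0, e_1$.

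For the first identity, set $y = \zeta(2\zeta+1)/((\zeta+2)x)$, which by Lemma \ref{hpl} corresponds to $z \mapsto z + \tau/2$. The three linear factors appearing in $P_j$ transform as
\begin{equation*}
y = \frac{\zeta(2\zeta+1)}{(\zeta+2)x}, \quad y - (2\zeta+1) = -\frac{(2\zeta+1)\bigl((\zeta+2)x - \zeta\bigr)}{(\zeta+2)x}, \quad (\zeta+2)y - \zeta = -\frac{\zeta\bigl(x - (2\zeta+1)\bigr)}{x}.
\end{equation*}
Each is a scalar times another of the three factors, divided by a power of $x$. Substituting into the definition \eqref{psb} and multiplying by $x^{3n-2}$, the $x$-denominators collapse (one verifies $(j-1) + [(j-1)/2] + (n - [(j+1)/2]) = j + n - 2$ for both parities of $j$), and the remaining polynomial in $x$ has $x$-power $2n-j$ and linear-factor exponents that, by a short parity-case comparison, match those of $P_{2n+1-j}(x)$. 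Reading off the scalar prefactors from the three displayed identities and simplifying the resulting powers of $\zeta$, $2\zeta+1$ and $\zeta+2$ in each parity gives precisely the claimed coefficient $(-1)^{n+1}\zeta^{n-1+[j/2]}((2\zeta+1)/(\zeta+2))^{[3(j-1)/2]}$.

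The second identity then follows from the first. The same substitution yields
\begin{equation*}
e_0(y) = -\frac{\zeta(2\zeta+1)}{(\zeta+2)x^2}\,e_1(x), \qquad e_1(y) = -\frac{\zeta(2\zeta+1)}{(\zeta+2)x^2}\,e_0(x).
\end{equation*}
Since $2n+1-j$ has opposite parity to $j$, the operator $\hat\sigma$ multiplies $P_j$ by one of $e_0, e_1$ and $P_{2n+1-j}$ by the other. Combining with the first identity, $x^{3n}(\hat\sigma P_j)(y)$ becomes $-(\zeta(2\zeta+1)/(\zeta+2))$ times the first-identity scalar times $(\hat\sigma P_{2n+1-j})(x)$, and an elementary simplification checks that this coincides with the asserted $(-1)^n\zeta^{n+[j/2]}((2\zeta+1)/(\zeta+2))^{[3(j-1)/2]+1}$.

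The only real friction is the bookkeeping of the floor-function exponents $[(j-1)/2]$, $[(j+1)/2]$, $[j/2]$, $[3(j-1)/2]$ across the two parity cases; all the reductions needed (for instance $j - 1 + n - [(j+1)/2] = n - 1 + [j/2]$ and $j - 1 + [(j-1)/2] = [3(j-1)/2]$) are elementary integer identities, so the argument is entirely routine.
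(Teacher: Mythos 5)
Your direct verification is correct and is exactly the routine check the paper intends: the lemma is stated there without proof as an "easily verified" uniformization of the $z\mapsto z+\tau/2$ symmetry of the basis \eqref{fpj}, and your substitution $y=\zeta(2\zeta+1)/((\zeta+2)x)$ into \eqref{psb}, together with the transformation rules for $e_0,e_1$ and the parity bookkeeping (all of which I checked, including the sign $(-1)^{[(j-1)/2]+n-[(j+1)/2]}=(-1)^{n+1}$), reproduces both identities.
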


\subsection{Symmetric polynomials in two sets of variables}
Let us now consider the result of applying $\hat\sigma$ to some of the variables in
\eqref{tut}. For  $0\leq k\leq 2n$, let
\begin{multline}\label{ttv}T(x_1,\dots,x_k;x_{k+1},\dots,x_{2n})\\
=\frac{(\id^{\otimes k}\otimes\,\hat\sigma^{\otimes (2n-k)})\Delta(x_1,\dots,x_{2n})T(x_1,\dots,x_{2n})}{\Delta(x_1,\dots,x_k)\Delta(x_{k+1},\dots,x_{2n})}.
 \end{multline}
This is a polynomial, which is symmetric in its first $k$ and last $2n-k$ variables, and generalizes  $T(x_1,\dots,x_{2n})=T(x_1,\dots,x_{2n};-)$.
By construction,  the function
\begin{multline}\label{dps}\prod_{j=1}^k e^{-2\pi\ti z_j}\tha(e^{4\pi\ti z_j};p^2)\tha(\om pe^{\pm 2\pi\ti z_j};p^2)^{3n-2}\prod_{j=k+1}^{2n}\tha(\om pe^{\pm 2\pi\ti z_j};p^2)^{3n}\\
\times\Delta(x_1,\dots,x_k)\Delta(x_{k+1},\dots,x_{2n})
\,T(x_1,\dots,x_k;x_{k+1},\dots,x_{2n}), \end{multline}
where $x_j=x(z_j)$, spans the one-dimensional space
$(\id^{\otimes k}\otimes\,\sigma^{\otimes (2n-k)})\Theta_n^{\wedge 2n}$.

Applying Lemma \ref{spl} to \eqref{sdt} gives the determinant identity
\begin{equation}\label{sdd}T(x_1,\dots,x_k;x_{k+1},\dots,x_{2n})=\frac{\det_{1\leq i,j\leq 2n}(A_{ij})}{\Delta(x_1,\dots,x_k)\Delta(x_{k+1},\dots,x_{2n})}, \end{equation}
where
$$A_{ij}=\begin{cases}P_j(x_i), & 1\leq i\leq k,\\
(\hat\sigma P_j)(x_i), & k+1\leq i\leq 2n.
\end{cases} $$

One can also extend the alternative determinant formula \eqref{tn}.
In view of the ostensible asymmetry of the variables, 
we replace the operator 
$\id^{\otimes k}\otimes\,\hat\sigma^{\otimes (2n-k)}$ in \eqref{ttv} by
$$(-1)^{l(n-k)}\left(\id^{\otimes k}\otimes\,\hat\sigma^{\otimes(n-k)}\otimes
\id^{\otimes l}\otimes\,\hat\sigma^{\otimes(n-l)}\right).$$
The left-hand side is then replaced by 
$$T(x_1,\dots,x_k,x_{n+1},\dots,x_{n+l};x_{k+1},\dots,x_n,x_{n+l+1},\dots,x_{2n}). $$

Expanding the determinant in \eqref{tn} gives
$$\Delta(\mathbf x)T(\mathbf x)=\sum_{\sigma\in\mathrm{S}_n}\sgn(\sigma)\prod_{i=1}^n(x_i-x_{n+\sigma(i)})\prod_{i,j=1,\,j\neq \sigma(i)}^n(x_i-x_{n+j})G(x_i,x_{n+j}). $$
We can use Lemma \ref{spl} to compute the image of this expression when
$\hat\sigma$ acts on any subset of the variables.
Re-writing the result as a determinant gives
\begin{multline}\label{tkl}
T(x_1,\dots,x_k,x_{n+1},\dots,x_{n+l};x_{k+1},\dots,x_n,x_{n+l+1},\dots,x_{2n})\\
=\frac{\prod_{{1\leq i\leq k,\,l+1\leq j\leq n}}(x_{n+j}-x_i)\prod_{{k+1\leq i\leq n,\,1\leq j\leq l}}(x_i-x_{n+j})\prod_{i,j=1}^nG(x_i,x_{n+j})}{ \Delta(x_1,\dots,x_k)\Delta(x_{k+1},\dots,x_n)\Delta(x_{n+1},\dots,x_{n+l})\Delta(x_{n+l+1},\dots,x_{2n})}\\
\times\det(B),
\end{multline}
where 
$$B_{i,j}=\begin{cases}
\displaystyle\frac 1{G(x_i,x_{n+j})}, & 1\leq i\leq k,\ 1\leq j\leq l,\\[4mm]
\displaystyle\frac {Q(x_i,x_{n+j})}{(x_{n+j}-x_i)G(x_i,x_{n+j})}, & 1\leq i\leq k,\ l+1\leq j\leq n,\\[4mm]
\displaystyle\frac {Q(x_{n+j},x_i)}{(x_i-x_{n+j})G(x_i,x_{n+j})}, & k+1\leq i\leq n,\ 1\leq j\leq l,\\[4mm]
\displaystyle\frac {R(x_i,x_{n+j})}{G(x_i,x_{n+j})}, & k+1\leq i\leq n,\ l+1\leq j\leq n,
\end{cases} $$
with
\begin{align*}
Q(x,y)&=ye_1(y)-xe_0(y),\\
R(x,y)&=\frac{e_0(x)ye_1(y)-xe_1(x)e_0(y)}{y-x}\\
&=3(\zeta+2)^2x^2y^2+\zeta(\zeta+2)(2\zeta+1)(x^2+y^2)\\&\quad-2(\zeta^2+4\zeta+1)\big((\zeta+2)xy+\zeta(2\zeta+1)\big)(x+y)\\
&\quad+4(\zeta^4+4\zeta^3+8\zeta^2+4\zeta+1)xy
+3\zeta^2(2\zeta+1)^2.
\end{align*}

For fixed $k+l$, 
varying $k$ and $l$ in  \eqref{tkl} gives different expressions for the same quantity.
For instance,  to compute 
 $T(x_1,x_2;x_3,x_4)$  one may  use
\eqref{tkl} with $k=2$, $l=0$, which gives
\begin{multline*}T(x_1,x_2;x_3,x_4)=\big((x_4-x_1)(x_3-x_2)G(x_1,x_4)G(x_2,x_3)Q(x_1,x_3)Q(x_2,x_4)\\
-(x_3-x_1)(x_4-x_2)G(x_1,x_3)G(x_2,x_4)Q(x_1,x_4)Q(x_2,x_3)\big)/(x_2-x_1)(x_4-x_3)
\end{multline*}
or with $k=l=1$, which gives
\begin{multline*}T(x_1,x_2;x_3,x_4)=(x_4-x_1)(x_3-x_2)G(x_1,x_4)G(x_2,x_3)R(x_3,x_4)\\
-G(x_1,x_2)G(x_3,x_4)Q(x_1,x_4)Q(x_2,x_3).\end{multline*}

The following  elementary result will be useful.

\begin{lemma}\label{tstvl}
One has
\begin{multline}\label{tred}T(t,x_2,\dots,x_{k};x_{k+1},\dots,x_{2n-1},t)\\=(-1)^{k+1}2G(t,t)\prod_{j=2}^{2n-1}G(x_j,t)\,T(x_2,\dots,x_{k};x_{k+1},\dots,x_{2n-1}).
 \end{multline}
\end{lemma}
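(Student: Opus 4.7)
The plan is to specialize the alternative determinantal formula \eqref{tkl} at the equal values $x_1=x_{2n}=t$ and exploit a $0\cdot\infty$ cancellation that collapses the $n\times n$ determinant to an $(n-1)\times(n-1)$ one of the same form, thereby producing $T$ at level $n-1$ times the stated prefactor. By the symmetry of $T(\cdot\,;\cdot)$ within each block, I first reorder the arguments so that $t$ occupies position $x_1$ in the first block and position $x_{2n}$ in the second. For $1\le k\le n$, I apply \eqref{tkl} with $(k,l)=(k,0)$; the case $n<k\le 2n-1$ is handled identically with $(k,l)=(n,k-n)$, since in both configurations the two $t$'s land in the $(1,n)$ slot of the matrix $B$.

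At $x_1=x_{2n}=t$, two compensating singularities appear: the factor $(x_{2n}-x_1)$ in $\prod_{i,j}(x_{n+j}-x_i)$ in the prefactor vanishes, while the matrix entry
$$B_{1n}=\frac{Q(x_1,x_{2n})}{(x_{2n}-x_1)G(x_1,x_{2n})}$$
has a simple pole. Their product tends to $Q(t,t)/G(t,t)$. Expanding $\det B$ along the first row, every term with column index $j\neq n$ acquires a vanishing factor from the prefactor, so only the $(1,n)$-cofactor $(-1)^{1+n}\det(M_{1n})$ survives.

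The minor $M_{1n}$, read off from the defining cases of $B$, coincides with the matrix appearing in \eqref{tkl} at level $n-1$ for $T(x_2,\dots,x_k;x_{k+1},\dots,x_{2n-1})$ with first-block size $k-1$ and parameter $l=0$. Matching the two prefactors, the surviving $G$ factors combine, via the symmetry $G(x,y)=G(y,x)$, into $G(t,t)\prod_{j=2}^{2n-1}G(x_j,t)$, while the residual Vandermonde factors $\prod_{j=2}^k(x_j-t)$ and $\prod_{j=n+1}^{2n-1}(t-x_j)$ in $\Delta(x_1,\dots,x_k)$ and $\Delta(x_{n+1},\dots,x_{2n})$ cancel against the corresponding $(x_{n+j}-t)$ and $(t-x_i)$ factors surviving in the specialized numerator, up to an overall sign $(-1)^{(k-1)+(n-1)}$. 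Together with the cofactor sign $(-1)^{1+n}$ this yields the required sign $(-1)^{k+1}$.

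The coefficient $2$ in the lemma then comes from the algebraic identity $Q(t,t)=2G(t,t)$. Using $Q(x,y)=ye_1(y)-xe_0(y)$ together with the explicit formulas for $e_0$, $e_1$, and $G$, one checks by a short direct calculation that both sides equal $4t[(\zeta+2)t^2-(\zeta^2+4\zeta+1)t+\zeta(2\zeta+1)]$, i.e.\ $Q(t,t)=2\,G(t,t)$ identically in $t$. The main obstacle is not depth but accounting: one must carefully track the signs produced by the $\Delta$-specializations and the Laplace expansion, and verify the numerical coincidence $Q(t,t)=2G(t,t)$ which is the source of the factor $2$ in the statement.
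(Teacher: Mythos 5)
Your proof is correct and follows essentially the same route as the paper: specializing \eqref{tkl} (with $l=0$, or $(n,k-n)$ when $k>n$), expanding along the first row at $x_1=x_{2n}=t$ so that only the last-column cofactor survives, and invoking $Q(t,t)=2G(t,t)$. Your sign and prefactor bookkeeping checks out, so there is nothing to add beyond the paper's own sketch.
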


To see this, 
expand the determinant in  \eqref{tkl} along the first row and then let
$x_1=x_{2n}=t$. Then, only the term corresponding to the last column gives a non-zero contribution. Rewriting the complementary minor in terms of the polynomial $T$, using also $Q(t,t)=2G(t,t)$, gives \eqref{tred} after simplification.

Applying the Jacobi--Desnanot identity (see e.g.\ \cite[Prop.\ 10]{kr}) to \eqref{tkl} gives several different recursions. 

\begin{lemma}\label{jdl}
The polynomial $T$ satisfies 
\begin{subequations}\label{jd}
\begin{multline}\label{lc}(a-b)(c-d)T(\mathbf x;\mathbf y)T(a,b,c,d,\mathbf x;\mathbf y)
=G(a,d)G(b,c)T(a,c,\mathbf x;\mathbf y)T(b,d,\mathbf x;\mathbf y)\\
-G(a,c)G(b,d)T(a,d,\mathbf x;\mathbf y)T(b,c,\mathbf x;\mathbf y),
\end{multline}
\begin{multline}
(a-b)T(\mathbf x;\mathbf y)T(a,b,c,\mathbf x;d,\mathbf y)
=(a-d)G(a,d)G(b,c)T(a,c,\mathbf x;\mathbf y)T(b,\mathbf x;d,\mathbf y)\\
-(b-d)G(a,c)G(b,d)T(a,\mathbf x;d,\mathbf y)T(b,c,\mathbf x;\mathbf y),
\end{multline}
\begin{multline}\label{dlc}
T(\mathbf x;\mathbf y)T(a,c,\mathbf x;b,d,\mathbf y)
=(d-a)(b-c)G(a,d)G(b,c)T(a,c,\mathbf x;\mathbf y)T(\mathbf x;b,d,\mathbf y)\\
-
G(a,c)G(b,d)T(a,\mathbf x;d,\mathbf y)T(c,\mathbf x;b,\mathbf y).
\end{multline}
\begin{multline}\label{lcr}(a-b)(c-d)T(\mathbf x;\mathbf y)T(\mathbf x;a,b,c,d,\mathbf y)
=G(a,d)G(b,c)T(\mathbf x;a,c,\mathbf y)T(\mathbf x;b,d,\mathbf y)\\
-G(a,c)G(b,d)T(\mathbf x;a,d,\mathbf y)T(\mathbf x;b,c,\mathbf y),
\end{multline}
\end{subequations}
\end{lemma}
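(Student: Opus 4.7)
The plan is to derive all four identities from the Jacobi--Desnanot (Lewis Carroll, Dodgson) identity applied to the $n \times n$ determinant in the two-set formula \eqref{tkl}. In each case, the four new variables $a, b, c, d$ will be placed among the rows and columns of a single $(m+2) \times (m+2)$ matrix $B$, where $m$ is chosen so that $T(\mathbf x; \mathbf y)$ has $2m$ arguments. The six $T$'s in the identity are then realised as $\det B$, four of its $(m+1) \times (m+1)$ one-row-one-column minors, and its $m \times m$ two-row-two-column minor; the Jacobi--Desnanot identity then relates these six determinants in precisely the structural form of the stated identities.

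The concrete choices of split $(k, l)$ in \eqref{tkl} (with $|\mathbf x| = p$, $|\mathbf y| = 2m - p$) are as follows. For \eqref{lc}, take $(k, l) = (p+2, 2)$ with $a, b$ in the two new $\mathbf x$-rows and $c, d$ in the two $\mathbf x$-columns; the ``old'' rows and columns house $\mathbf x$ and $\mathbf y$ in the same split that \eqref{tkl} assigns to $T(\mathbf x; \mathbf y)$ with $(k, l) = (p, 0)$, so that removing rows $\{1, 2\}$ and columns $\{1, 2\}$ of $B$ recovers the matrix for $T(\mathbf x; \mathbf y)$. For \eqref{lcr}, place all four new variables in the $\mathbf y$-slot with $(k, l) = (p, 0)$. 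For \eqref{jdb}, take $(k, l) = (p+2, 1)$ with $a, b$ as $\mathbf x$-rows, $c$ as the single $\mathbf x$-column, and $d$ as a $\mathbf y$-column. For \eqref{dlc}, take $(k, l) = (p+1, 1)$ with $a$ as an $\mathbf x$-row, $c$ as an $\mathbf x$-column, $b$ as a $\mathbf y$-row, and $d$ as a $\mathbf y$-column. A direct check in each case shows that the four single-row-and-column deletions of $B$ produce matrices whose two-set $T$-interpretation is exactly one of the four medium $T$'s on the right-hand side of the identity.

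The main obstacle is the bookkeeping of prefactors. One must multiply each of the six determinants by the prefactor from \eqref{tkl} (the $\Delta$-denominators, the product $\prod_{i,j} G(x_i, x_{n+j})$, and the $\prod (x_{n+j} - x_i)$ and $\prod (x_i - x_{n+j})$ factors) and verify that the surviving ratios reproduce the explicit coefficients $(a-b)(c-d)$, $G(a, d) G(b, c)$, $(a-d) G(a, d) G(b, c)$, $G(a, c) G(b, d)$, etc., appearing on the right-hand sides, together with the correct signs. Signs come from the standard Jacobi--Desnanot sign pattern, and in the mixed cases \eqref{jdb} and \eqref{dlc} also from the non-adjacent positions of the two removed rows in $B$, which are handled by the usual row-swap sign rules.

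As a sanity check, specialising $\mathbf x = \mathbf y = \emptyset$ reduces \eqref{lc} to $(a-b)(c-d) T(a, b, c, d) = G(a, d) G(b, c) - G(a, c) G(b, d)$, which follows at once from \eqref{tn} with $n = 2$; confirming this scalar case ensures that the prefactor tracking in the general argument will indeed deliver the stated coefficients.
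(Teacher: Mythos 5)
Your proposal is correct and takes exactly the paper's route: the paper obtains Lemma \ref{jdl} precisely by applying the Jacobi--Desnanot identity to the determinant in \eqref{tkl}, which is your construction, and your explicit placements of $a,b$ as rows and $c,d$ as columns (hatted or unhatted according to which argument group they join), together with the scalar check against \eqref{tn}, actually supply more detail than the paper's one-line argument. Two cosmetic points only: the second identity of the lemma carries no label in the paper, and when $|\mathbf x|>|\mathbf y|$ your specific $(k,l)$ choices are not realizable as stated, but since \eqref{tkl} holds for any admissible split one simply distributes $\mathbf x$ between rows and columns, and the same bookkeeping goes through.
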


\subsection{The functions $T_n^{(\mathbf k)}$}\label{tnks}

As in \S \ref{sts}, let $n\in\mathbb Z$ and $\mathbf k\in\mathbb Z^4$,
with $m=2n-|\mathbf k|\geq 0$. Given this data, we
 will  define a certain function
$T_n^{(\mathbf k)}$ of $m$ variables. In the special case when $k_j\geq 0$ for each $j$, 
it is given by the specialization
$$T_n^{(\mathbf k)}(x_1,\dots,x_m)=T(x_1,\dots,x_m,\boldsymbol\xi^{\mathbf k}), $$
 where
$$\boldsymbol\xi^{\mathbf k}=\left(\xi_0^{(k_0)},\xi_1^{(k_1)},\xi_2^{(k_2)},\xi_3^{(k_3)}\right)\\
=(\underbrace{\xi_0,\dots,\xi_0}_{k_0},\underbrace{\xi_1,\dots,\xi_1}_{k_1},\underbrace{\xi_2,\dots,\xi_2}_{k_2},\underbrace{\xi_3,\dots,\xi_3}_{k_3}).$$
To motivate the extension to negative $k_j$, we note that for $\mathbf l\in\mathbb Z_{\geq 0}^4$, 
\begin{equation}\label{dit}T_n^{(\mathbf k)}(x_1,\dots,x_m)=\frac{(-1)^{\binom{|l|}2}\,T(x_1,\dots,x_m,\boldsymbol\xi^{{\mathbf k+\mathbf l}};\boldsymbol\xi^{{\mathbf l}})}{2^{|l|}\prod_{i,j=0}^3G(\xi_i,\xi_j)^{(k_i+l_i)l_j}\prod_{j=1}^m\prod_{i=0}^3G(x_j,\xi_i)^{l_i}}. \end{equation}
This is easily seen by iterating Lemma \ref{tstvl}. 
Relaxing the restriction that each $k_j$ is positive, we
 use \eqref{dit} as the definition of the left-hand side, where
$\mathbf l,\,\mathbf k+\mathbf l\in\mathbb Z_{\geq 0}^4$. 
The most economic expression is obtained for
 $l_j=\max(0,-k_j)=k_j^-$, giving
\begin{equation}\label{tnk}T_n^{(\mathbf k)}(x_1,\dots,x_{m})=\frac{(-1)^{\binom{|\mathbf k^-|}2}\,T(x_1,\dots,x_m,\boldsymbol\xi^{{\mathbf k}^+};\boldsymbol\xi^{{\mathbf k}^-})}{2^{|\mathbf k^-|}\prod_{i,j=0}^3G(\xi_i,\xi_j)^{k_i^-k_j^+}\prod_{j=1}^m\prod_{i=0}^3G(x_j,\xi_i)^{k_i^-}}.\end{equation}

To give an example,
\begin{align*}T_0^{(-2,1,0,0)}(x)&=-\frac{T(x,\xi_1;\xi_0,\xi_0)}{4G(\xi_0,\xi_1)^2G(x,\xi_0)^2},\\
&=\frac{(2\zeta+1)^2(\zeta+2)}{\zeta^2 x^3}\left((\zeta^2+\zeta+1)x(2\zeta+1-x)+\zeta(2\zeta+1)^2\right).
\end{align*}
In general, as a function of the variables $x_j$, 
 $T_n^{(\mathbf k)}$ is a symmetric rational function with poles
only at the points $x_j=\eta_i$, where $k_i<0$, $i\neq 1$.

The freedom to vary $\mathbf l$ in \eqref{dit}  implies the 
following fundamental property.

\begin{lemma}\label{tkrl}
For $\mathbf l\in\mathbb Z_{\geq 0}^4$,  $T_n^{(\mathbf k+\mathbf l)}(x_1,\dots,x_m)=T_n^{(\mathbf k)}(x_1,\dots,x_m,\xi^{\mathbf l})$.
\end{lemma}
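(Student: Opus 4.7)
The strategy is to apply the general definition \eqref{dit} to both sides of the claimed identity with a common choice of auxiliary vector, and verify that numerators and denominators match. The essential input is the statement (in the sentence preceding the lemma) that \eqref{dit} defines $T_n^{(\mathbf k)}$ independently of the admissible $\mathbf l$; this is what licenses using the same auxiliary on both sides.

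First I would dispose of the case $\mathbf k\in\mathbb Z_{\geq 0}^4$ directly: both sides then reduce via the $\mathbf l=0$ form of \eqref{dit} to $T(x_1,\dots,x_m,\boldsymbol\xi^{\mathbf k+\mathbf l})$, using the symmetry of $T$ in its arguments. In general I would fix any $\mathbf l'\in\mathbb Z_{\geq 0}^4$ with $\mathbf k+\mathbf l'\geq 0$; since $\mathbf l\geq 0$, also $\mathbf k+\mathbf l+\mathbf l'\geq 0$, so $\mathbf l'$ is admissible in \eqref{dit} both for $T_n^{(\mathbf k+\mathbf l)}$ and for $T_n^{(\mathbf k)}$. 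Unfolding each side with this choice produces, up to the common prefactor $(-1)^{\binom{|\mathbf l'|}{2}}2^{-|\mathbf l'|}$, numerators $T(x_1,\dots,x_m,\boldsymbol\xi^{\mathbf k+\mathbf l+\mathbf l'};\boldsymbol\xi^{\mathbf l'})$ on the left and $T(x_1,\dots,x_m,\boldsymbol\xi^{\mathbf l},\boldsymbol\xi^{\mathbf k+\mathbf l'};\boldsymbol\xi^{\mathbf l'})$ on the right; these coincide by the symmetry of $T$ in its first group of arguments, since the multisets $\{x_1,\dots,x_m\}\cup\boldsymbol\xi^{\mathbf l}\cup\boldsymbol\xi^{\mathbf k+\mathbf l'}$ and $\{x_1,\dots,x_m\}\cup\boldsymbol\xi^{\mathbf k+\mathbf l+\mathbf l'}$ are equal.

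The denominators are handled by a short bookkeeping. On the left, \eqref{dit} yields $\prod_{i,j}G(\xi_i,\xi_j)^{(k_i+l_i+l_i')l_j'}\prod_{j=1}^m\prod_iG(x_j,\xi_i)^{l_i'}$. On the right, it yields $\prod_{i,j}G(\xi_i,\xi_j)^{(k_i+l_i')l_j'}$ times $\prod_k\prod_iG(y_k,\xi_i)^{l_i'}$ with $(y_k)=(x_1,\dots,x_m,\boldsymbol\xi^{\mathbf l})$. The part of the latter product indexed by $\boldsymbol\xi^{\mathbf l}$ is $\prod_{i,j}G(\xi_i,\xi_j)^{l_il_j'}$, which merges into the $\xi_i$--$\xi_j$ block and upgrades its exponent from $(k_i+l_i')l_j'$ to $(k_i+l_i+l_i')l_j'$, matching the left side exactly. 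The only step that requires care is this bookkeeping of the double product $\prod_{i,j}G(\xi_i,\xi_j)^{l_il_j'}$; once the independence of \eqref{dit} on the auxiliary vector is granted, the rest is immediate.
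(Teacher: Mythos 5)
Your proof is correct and is essentially the argument the paper intends when it says that ``the freedom to vary $\mathbf l$ in \eqref{dit}'' implies the lemma: you unfold both sides of the identity with a common admissible auxiliary $\mathbf l'$ (admissible for $\mathbf k+\mathbf l$ because $\mathbf l\geq 0$), use the symmetry of $T$ in its first group of arguments to match the numerators, and merge the factors $G(\xi_i,\xi_j)^{l_il_j'}$ coming from the specialized variables into the $\xi$--$\xi$ block to match the denominators. The exponent bookkeeping $(k_i+l_i')l_j'+l_il_j'=(k_i+l_i+l_i')l_j'$ is exactly right, so nothing is missing.
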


The function $T_n^{(\mathbf k)}$ allows us to construct elements
of the space $(\Theta_n^{\mathbf k})^{\wedge m}$.

\begin{proposition}\label{tnkp}
The function
\begin{multline}\label{ukl}\prod_{j=1}^{m}\left({e^{-2\pi\ti z_j}\theta(e^{4\pi\ti z_j};p^2)\theta(\omega p e^{\pm 2\pi \ti z_j};p^2)^{3n-2}}{\prod_{l=0}^3(x_j-\xi_l)^{k_l}}\right)\\
\times \Delta(x_1,\dots,x_{m})T_{n}^{(\mathbf k)}(x_1,\dots,x_{m}), \end{multline}
 is an element of
 $(\Theta_n^{\mathbf k})^{\wedge m}$. 
\end{proposition}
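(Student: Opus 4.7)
The plan is to apply Lemma \ref{kl}: it suffices to show that $F(z_1,\dots,z_m) := \prod_{j=1}^m \Phi(z_j)\cdot U(z_1,\dots,z_m)$, where $U$ denotes \eqref{ukl} and $\Phi(z) = \prod_{l=0}^3 \theta(e^{6\pi\ti(\gamma_l \pm z)};p^6)^{k_l^-}$, is antisymmetric and, in each variable, an element of $\Theta_{n'}$ for $n' := n+|\mathbf k^-|$ satisfying the vanishing conditions \eqref{fva}--\eqref{fvb} at each $\gamma_l$. I will identify $F$, up to a nonzero constant, with a specialization of the spanning element of $(\id^{\otimes k'}\otimes \sigma^{\otimes|\mathbf k^-|})\Theta_{n'}^{\wedge 2n'}$ from \eqref{dps}, where $k' = m+|\mathbf k^+|$ and the extra variables are set to half-periods.

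Concretely, I start from \eqref{dps} with parameters $(n',k')$, in $2n'$ variables split into id-variables $z_1,\dots,z_{k'}$ and sigma-variables $w_1,\dots,w_{|\mathbf k^-|}$. I fix the last $|\mathbf k^+|$ id-variables to $\boldsymbol\gamma^{\mathbf k^+}$ and the sigma-variables to $\boldsymbol\gamma^{\mathbf k^-}$. Although this produces coincident variables where $\Delta$ vanishes, the combination $\Delta(\mathbf x)T(\mathbf x;\mathbf y)$ is the determinant \eqref{sdd} and specializes cleanly to a polynomial by the standard coincident-limit/derivative interpretation of its rows. Using \eqref{xd}, \eqref{gu}, and the values $\xi_j = x(\gamma_j)$ from Lemma \ref{xvl}, the theta-function prefactors in \eqref{dps} after specialization decompose into $\prod_{j=1}^m M(z_j)$ (with $M$ as in \eqref{m} for the original $n$), the factor $\prod_{j=1}^m \Phi(z_j)$, the factors $\prod_j\prod_l (x_j-\xi_l)^{k_l^+}$ arising from the specialized id-variables, and the factors $\prod_j\prod_l G(x_j,\xi_l)^{-k_l^-}$ arising from the sigma-variables, times an overall nonzero constant. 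Combining with \eqref{tnk} yields
\begin{equation*}
(\text{specialized \eqref{dps}}) = C\cdot \prod_{j=1}^m \Phi(z_j)\cdot U(z_1,\dots,z_m)
\end{equation*}
for some nonzero constant $C$.

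Since the left-hand side is, in each free variable $z_i$ separately, an element of $\Theta_{n'}$ (inherited from the membership of \eqref{dps} in $\Theta_{n'}^{\wedge 2n'}$ after applying $\sigma$ to some factors), so is $F$. For the vanishing conditions \eqref{fva}: antisymmetry of the spanning element in its id-variables, combined with $k_l^+ = k_l$ of them being set to $\gamma_l$, forces the function of $z_1$ to vanish at $x_1=\xi_l$ to order at least $k_l$; since $x$ has a critical point at each half-period, this is a zero of order at least $2k_l$ in $z_1-\gamma_l$, which together with the simple zero of $M$ gives the required zero of order $\geq 2k_l+1$. The condition \eqref{fvb} for $k_l<0$ is established analogously from the sigma-variable specialization, using Lemma \ref{spl} which relates the action of $\sigma$ to the algebraic operator $\hat\sigma$; setting $k_l^-$ sigma-variables to $\gamma_l$ forces the needed even derivatives of $\sigma F$ to vanish at $\gamma_l$. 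The antisymmetry of $U$ follows from symmetry of $T_n^{(\mathbf k)}$ and the $\Delta(\mathbf x)$ factor, since each $M(z_j)$ depends on only one variable.

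The main obstacle is the precise bookkeeping in the second paragraph: the exponents of theta factors in \eqref{dps} must be redistributed so that $\Phi$ is cleanly separated, and the denominators in \eqref{tnk} must match the factors produced by specialization of the sigma-variables via \eqref{gu}. A secondary subtlety is the matching of vanishing orders: the coincident-row analysis of the determinant \eqref{sdd} produces zeros of the correct parity (even in $z_1-\gamma_l$ for id-variables, matched via $x$'s critical-point behavior), and the interplay with the single factor of $M$ gives exactly the required order $2k_l+1$.
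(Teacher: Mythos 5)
Your proposal is correct and is essentially the paper's own argument: both realize $\prod_{j}\Phi(z_j)$ times \eqref{ukl} as the result of applying specialized derivatives at the half-periods to the extra slots of the spanning element \eqref{dps}, match the prefactors via \eqref{gu} and the definition \eqref{tnk}, and conclude membership through Lemma \ref{kl} (the paper gets the vanishing conditions \eqref{fva}--\eqref{fvb} from the functional/antisymmetry mechanism rather than your order-counting, but this is the same computation). Just note that the specialization must be taken in the genuine derivative sense in the $z$-variables, since both the odd theta prefactors attached to the specialized id-slots and the coincident Vandermonde blocks vanish at the half-periods (a naive evaluation gives $0$); with the derivative orders matching \eqref{fva}--\eqref{fvb}, every derivative hits these vanishing factors and $T$ is merely evaluated at the $\xi_l$, which is precisely the bookkeeping you defer.
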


We will prove later (Corollary \ref{tnvc}) that $T_{n}^{(\mathbf k)}$
does not vanish identically, and deduce as a consequence (Theorem \ref{dt})
that $\dim\Theta_n^{\mathbf k}=m$. It follows that \eqref{ukl}
in fact spans the space  $(\Theta_n^{\mathbf k})^{\wedge m}$.

To prove Proposition \ref{tnkp} we note that, by Lemma \ref{kl}, we can construct an element 
in  $(\Theta_n^{\mathbf k})^{\wedge m}$ by starting from 
$\Psi\in(\id^{\otimes(m+|\mathbf k^+|)}\otimes\,\sigma^{\otimes|\mathbf k^-|})
 \Theta_{n+|\mathbf k^-|}^{\wedge 2(n+|\mathbf k^-|)}
$, 
applying  specialized derivatives to the last $|\mathbf k^+|+|\mathbf k^-|$
variables and finally dividing by  $\Phi$.
For instance, if $\mathbf k=(2,-2,0,0)$ we should take
\begin{multline*}\frac{1}{\prod_{j=1}^m\tha(e^{6\pi\ti(\gamma_1\pm z_j)};p^6)^2}\,\frac{\partial^3}{\partial z_{m+1}^3}\frac{\partial}{\partial z_{m+2}}\frac{\partial^2}{\partial z_{m+3}^2}\frac{\partial^0}{\partial z_{m+4}^0}\Bigg|_{z_{m+1}=z_{m+2}=\gamma_0,\,z_{m+3}=z_{m+4}=\gamma_1}\\
\Psi(z_1,\dots,z_m,z_{m+1},z_{m+2},z_{m+3},z_{m+4}).\end{multline*}

Returning to the general case, we
choose $\Psi$ as in  \eqref{dps} and observe that
$\tha(e^{4\pi\ti z};p^2)$ has a single zero at $z=\gamma_j$ for each $j$. 
Moreover, by \eqref{xp},  $x(z)-x(w)$ has a double zero at $z=w$ when $w=\gamma_j$. Thus, all derivatives must hit the prefactor  and we are left with a specialization of the polynomial $T$. 
Finally, we uniformize $\Phi$ using \eqref{gu}. 
This leads to  an expression proportional to \eqref{ukl}.

Lemma \ref{tkrl} shows that the set of functions $T_n^{(\mathbf k)}$
is closed under specializing variables to $\xi_j$. There are similar results  for specializations to $\eta_j$ with $j\neq 1$. We have proved that, if $k_j\geq 0$, then
$T_n^{(\mathbf k)}(\mathbf x,\eta_j)$ is an elementary factor times $T_{n-1}^{(\mathbf k-\mathbf e_j)}(\mathbf x)$.  Moreover, if $k_j<0$, $\lim_{a\rightarrow\eta_j}T_n^{(\mathbf k)}(\mathbf x,a)/(a-\eta_j)^{2k_j+1}$ is an elementary factor times $T_n^{(\mathbf k+\mathbf e_j)}(\mathbf x)$. Modified versions of these statements hold also for $j=1$.

\subsection{Modularity}\label{ms}
We will now discuss the behaviour of  $x$ and $\zeta$ under modular transformations. 
We first recall some basic facts on modular functions, see e.g.\ \cite{sh}. Let $\Gamma$ be one of the congruence groups
\begin{align*}\Gamma_0(n)&=\left\{\left(\begin{matrix}a&b\\c&d\end{matrix}\right)\in \mathrm{SL}(2,\mathbb Z);\, c\equiv 0\ \operatorname{mod}\ n \right\},\\
\Gamma_0(m,n)&=\left\{\left(\begin{matrix}a&b\\c&d\end{matrix}\right)\in \mathrm{SL}(2,\mathbb Z);\, c\equiv 0\ \operatorname{mod}\ m,\  b\equiv 0\ \operatorname{mod}\ n\right\}.
\end{align*}
In fact, $\Gamma_0(m,n)\simeq \Gamma_0(mn)$ as the two groups are conjugate in $\mathrm{SL}(2,\mathbb R)$.
The group $\Gamma$ acts on the upper half-plane $\mathbb H$, and on  $\mathbb Q\cup\{\infty\}$, by
$$A.\tau=\frac{a\tau+b}{c\tau+d},\qquad A=\left(\begin{matrix}a&b\\c&d\end{matrix}\right). $$
We will write  $\mathcal F(\Gamma)$ for the corresponding field of modular functions, that is, meromorphic functions $f$ on  $\mathbb H$ such that  
$f(A.\tau)=f(\tau)$ for $A\in\Gamma$. 
The normalizer
$\mathrm N(\Gamma)=\mathrm N_{\mathrm{SL}(2,\mathbb R)}(\Gamma)/\Gamma$ 
acts naturally on $\mathbb H/\Gamma$ and  $\mathcal F(\Gamma)$.

If $\tau\in \mathbb Q\cup\{\infty\}$ is the unique 
fix-point of some element in $\Gamma$, then the
 $\Gamma$-orbit of $\tau$ is called a  \emph{cusp}.  
Denoting by $\overline{\mathbb H/\Gamma}$  the union of
$\mathbb H/\Gamma$ and the set of cusps, 
one can equip  $\overline{\mathbb H/\Gamma}$ with the structure of a compact Riemann surface, so that any modular function  extends meromorphically to the compactification.

From now on, we let $\Gamma=\Gamma_0(6,2)\simeq\Gamma_0(12)$. The corresponding equivalence of modular functions is
\begin{equation}\label{ggc}f\in\mathcal F(\Gamma_0(6,2))\quad \Longleftrightarrow\quad (\tau\mapsto f(2\tau))\in\mathcal F(\Gamma_0(12)).\end{equation}
It follows from  \cite[Thm.\ 8]{al}
that  $\mathrm N(\Gamma)\simeq\mathrm S_2\times \mathrm S_3$.
To describe  $\mathrm N(\Gamma)$ explicitly, we note that
$\Gamma_0(3)=\mathrm N_{\mathrm{SL}(2,\mathbb Z)}(\Gamma)\subseteq \mathrm N_{\mathrm{SL}(2,\mathbb R)}(\Gamma)$. The
projection $\Gamma_0(3)\rightarrow \mathrm N_{\mathrm{SL}(2,\mathbb R)}(\Gamma)/\Gamma$ amounts to reducing all matrix elements mod $2$, which gives a surjection to $\mathrm{SL}(2,\mathbb Z/2\mathbb Z)\simeq \mathrm S_3$. 
We choose  $\tau\mapsto \tau/(3\tau+1)$
and $\tau\mapsto(2\tau+1)/(3\tau+2)$
as representatives of two generators for  $\mathrm S_3$.
Moreover, $\mathrm N_{\mathrm{SL}(2,\mathbb R)}(\Gamma)$ contains the
 Fricke involution
$\tau\mapsto -1/3\tau$.
These three transformations indeed generate  $\mathrm S_2\times \mathrm S_3$.

 The elements in $\mathbb Q\cup\{\infty\}$ 
that are the unique fix-point of some element in $\Gamma$ 
are $\infty$ and the  numbers $k/6l$ with $(k,l)=1$. 
The cusps are given by 
$$C_0,\ C_2,\ C_3,\ C_4,\ C_6,\ C_{\infty}, $$
where 
\begin{align*}C_j&=\left\{\frac k{6l};\ (k,l)=1,\ k\equiv\pm j\ \operatorname{mod}\ 12\right\},\qquad j\neq\infty,\\
C_\infty&=\left\{\frac k{6l};\ (k,l)=1,\ k\equiv\pm 1\ \operatorname{mod}\ 6\right\}\cup\{\infty\}. \end{align*}
The action of $\mathrm S_2\times\mathrm S_3$ extends to a permutation action on the cusps. Explicitly,  the generator $\tau\mapsto \tau/(3\tau+1)$ corresponds to 
$C_2\leftrightarrow C_\infty$, $C_3\leftrightarrow C_6$,  the generator $\tau\mapsto (2\tau+1)/(3\tau+2)$ to $C_0\leftrightarrow C_3$, $C_4\leftrightarrow C_\infty$ and the generator $\tau\mapsto -1/3\tau$  to
$C_0\leftrightarrow C_\infty$, $C_2\leftrightarrow C_6$, $C_3\leftrightarrow C_4$,
Note that the $S_3$-part of the action preserves the two sets of cusps 
$\{C_2,C_4,C_\infty\}$ and $\{C_0,C_3,C_6\}$. We will refer to the cusps in the first set as \emph{trigonometric} and the others as \emph{hyperbolic}. 
As  motivation for this terminology, note that
if $\tau$ is purely imaginary, the corresponding Jacobi theta functions 
degenerate to trigonometric functions as $\tau\rightarrow \ti\infty\in C_\infty$
and to hyperbolic functions as $\tau\rightarrow 0\in C_0$.
As we will see,  $T_n^{(\mathbf k)}$ behaves quite differently at these two types of cusps.

We will now consider  the 
 modular properties of
the function $x$. We first observe that, if $f(z)= f(z,\tau)$ is an even  elliptic function with periods $1$ and $\tau$, then the same is true for the function
$$\hat f(z)= f\left(\frac z{c\tau+d},\frac{a\tau+b}{c\tau+d}\right), \qquad 
\left(\begin{matrix}a&b\\c&d\end{matrix}\right)\in \mathrm{SL}(2,\mathbb Z).
 $$ 
We consider the case $f(z)=x(z+1/2,\tau)$. 
Up to quasi-periodicity and evenness, the  zeroes and poles 
 of  $\hat f$ are a single zero at $z_0=(c\tau+d)/6$ 
and a single pole at $z_\infty=(a\tau+b)/2+(c\tau+d)/6$.
 Since $f(0)=\hat f(0)=1$, it follows that 
\begin{equation}\label{fmt}\hat f(z)=\frac{(1-f(z_\infty))(f(z)-f(z_0))}{(1-f(z_0))(f(z)-f(z_\infty))}. \end{equation}
In particular,  if
 $A=\left(\begin{smallmatrix}a&b\\c&d\end{smallmatrix}\right)\in\Gamma_0(3)$, then $\hat f$ depends only on the residue class of  $A$ in 
 $\mathrm{SL}(2,\mathbb Z/2\mathbb Z)$. This leads to symmetries of $x$ 
under  $S_3$. Using  
 Lemma~\ref{xvl}, we can  rewrite \eqref{fmt} for two generators of $\mathrm S_3$ as
\begin{subequations}\label{xmt}
\begin{align}
x\left(\frac{z-1/2}{3\tau+1}+\frac 12,\frac{\tau}{3\tau+1}\right)&=\frac 1{x},\\
x\left(\frac{z-1/2}{3\tau+2}+\frac 12,\frac{2\tau+1}{3\tau+2}\right)&=\frac{(\zeta+2)x-(2\zeta+1)}{1-\zeta},
\end{align}
\end{subequations}
where $x=x(z,\tau) $ and $\zeta=\zeta(\tau)$.

Specializing $z=1/2+1/3$ in \eqref{xmt}, it follows that 
$\zeta\in\mathcal F(\Gamma_0(6,2))$, and that $\zeta$ transforms under the $\mathrm{S}_3$ action by 
\begin{subequations}\label{zsa}
\begin{align}\zeta\left(\frac{\tau}{3\tau+1}\right)&=\frac 1{\zeta}, \\
\zeta\left(\frac{2\tau+1}{3\tau+2}\right)&=-\zeta-1,
\end{align}
\end{subequations}
where $\zeta=\zeta(\tau)$.
Moreover, for the Fricke involution,
$$\zeta\left(-\frac 1{3\tau}\right)=\frac{1-\zeta}{1+2\zeta}. $$
One way to check this is to use
 \cite[Lemma 9.1]{r} to express both sides in terms of Dedekind's eta function.
Since it can also be
deduced from \cite[Table 3]{m} (see below) we do not give the details.
Using \eqref{zsa} and Lemma \ref{hpl}, 
 \eqref{xmt} can be written more compactly as
\begin{align*}
x\left(\frac{z}{3\tau+1},\frac{\tau}{3\tau+1}\right)&=\frac {(\zeta+2)x}{\zeta(2\zeta+1)},\\
x\left(\frac{z}{3\tau+2},\frac{2\tau+1}{3\tau+2}\right)&=\frac{(\zeta+1)x}{\zeta-x}.
\end{align*}

The meromorphic extension of $\zeta$ to the cusps is given by 
\begin{align*}\zeta(C_2)&=-\frac 12,& \zeta(C_4)&=1,& \zeta(C_\infty)&=-2,\\
\zeta(C_0)&=-1,& \zeta(C_3)&=0,& \zeta(C_6)&=\infty.\end{align*}
Indeed, it follows from \eqref{z} that
\begin{equation}\label{ztl}\zeta=-2+6p+\mathcal O(p^2),\qquad \tau\rightarrow\ti\infty, \end{equation}
which covers the case of $C_\infty$. The 
 other values now follow using the symmetries;
for instance,
$$\zeta(C_2)=\zeta\left(\frac{C_\infty}{3 C_\infty+1}\right)=\frac 1{\zeta(C_\infty)}=-\frac 12. $$

Another consequence of \eqref{ztl} is that   $\zeta+2$ has a simple zero at $C_\infty$. Applying the symmetries, $\zeta$ has a simple pole at $C_6$. Since $\zeta$ has no  poles in $\mathbb H$ or at the other cusps, it follows from
 \cite[Thm.\ 4.24]{fo} that $\zeta$ is  a bijection
from
 $\overline{\mathbb H/\Gamma_0(6,2)}$ to $\mathbb C\cup\{\infty\}$.
Thus, $\zeta$ generates the field $\mathcal F(\Gamma)$, that is, it is a \emph{Hauptmodul}.
In particular, we recover the known fact that 
$\overline{\mathbb H/\Gamma_0(6,2)}$  is  a sphere.

By \eqref{ggc}, the function $\tau\mapsto\zeta(2\tau)$ is a Hauptmodul for
$\Gamma_0(12)$. In the literature, one usually considers Hauptmoduln that vanish at $C_\infty$ and have a pole  at $C_0$. Let $t=t_{12}$ be such a Hauptmodul, normalized as in \cite{m}. Comparing the values at cusps (see \cite[Table 2]{m}), we find that 
\begin{equation}\label{ztr}\zeta(2\tau)=-\frac{t+4}{t+2}.\end{equation}
This is also easy to see by comparing explicit formulas. 
By routine manipulation of infinite products,
$$
\zeta(2\tau)=-2\frac{\eta(\tau)^3 \eta(4\tau)^6\eta(6\tau)^3}{\eta(2\tau)^9\eta(3\tau)\eta(12\tau)^2},
$$
where 
$\eta(\tau)=p^{1/12}(p^2;p^2)_\infty$
is the Dedekind eta function.
One can then verify  \eqref{ztr} from
\cite[Table 3]{m}.

Finally, we comment  on the role of the cusps in relation to
 the three-colour model and the XYZ spin chain.
In the three-colour  model, the states are three-colourings of a square lattice, where adjacent squares have distinct colour, and the  colours have independent weights $t_0$, $t_1$, $t_2$. The relations between these weights and the variable $\zeta$ can be written \cite{r}
$$\frac{(t_0t_1+t_0t_2+t_1t_2)^3}{(t_0t_1t_2)^2}=27+\frac{(\zeta-1)^4(\zeta+2)(2\zeta+1)}{\zeta(\zeta+1)^4}. $$
In particular, the trigonometric cusps $\zeta=-1/2$, $\zeta=1$ and $\zeta=-2$  correspond to a parameter regime containing the enumeration point $t_0=t_1=t_2$ when all states have equal weight. The hyperbolic cusps 
$\zeta=-1$, $\zeta=0$ and $\zeta=\infty$ correspond to the limit cases $t_j\rightarrow\ 0$ and $t_j\rightarrow\infty$, and thus  control the maximal and minimal number of squares of each colour; cf.\ \cite[Cor.~3.3]{r}. 

In the conventions of \cite{zj}, the coupling constants of the XYZ chain are
$$J_x=\frac 1{1-\zeta},\qquad J_y=\frac 1{1+\zeta}, \qquad J_z=-\frac 12 $$
(here we have used \eqref{zzr}). Note that
$J_xJ_y+J_xJ_z+J_yJ_z=0$, which is equivalent to the supersymmetry 
condition $\Delta=-1/2$. We see that three the trigonometric cusps correspond to the conditions
$$J_x=J_y,\qquad J_x=J_z,\qquad J_y=J_z $$
and thus to the XXZ chain, whereas the  hyperbolic cusps correspond to
$$J_x=\infty,\qquad J_y=\infty,\qquad J_z=\infty, $$ 
that is, to the XY chain.

\subsection{Symmetries}

The polynomials $T$ have a useful symmetry under the group $\mathrm S_4$, acting by  rational transformations on the variables $x_j$ and $\zeta$. This is easy to understand geometrically. As we observed in \S \ref{sts}, if $f$ is in $\Theta_n$ then so is $f(z+\gamma_j)$
for $j=0,3$ and  $e^{6\pi\ti n z}f(z+\gamma_j)$ for $j=1,3$.
These maps induce an action of $\mathrm S_2\times \mathrm S_2$ on the space $\Theta_n^{\wedge 2n}$ (which comes from a genuine action on   $\Theta_n$ when $n$ is even but a projective action when $n$ is odd). By Lemma \ref{hpl}, 
this  results in an $\mathrm S_2\times \mathrm S_2$ symmetry of $T$ under simultaneous rational transformations of the variables $x_j$. Moreover, it is easy to check that 
if $z\mapsto f(z,\tau)$ is in $\Theta_n$, then so is
$$z\mapsto \exp\left({-\frac{6\pi \ti ncz^2}{c\tau +d}}\right)f\left(\frac{z}{c\tau+d},\frac{a\tau +b}{c\tau +d}\right),\qquad \left(\begin{matrix}a&b\\c&d\end{matrix}\right)\in \Gamma_0(3), $$
the condition $c\equiv 0\ \operatorname{mod}\ 3$ being necessary for 
preserving property \eqref{fe}.
It follows that, if  $\Psi(z_1,\dots,z_{2n};\tau)\in\Theta_n^{\wedge 2n}$, then
$$
\Psi(z_1,\dots,z_{2n};\tau)\sim
\exp\left(-\frac{6\pi \ti n c\sum_{k=1}^{2n}z_k^2}{c\tau +d}\right)
\Psi\left(\frac{z_1}{c\tau+d},\dots,\frac{z_{2n}}{c\tau+d};\frac{a\tau +b}{c\tau +d}\right).$$
By \eqref{xmt} and \eqref{zsa},  these modular transformations
induce an $S_3$ symmetry of the polynomials $T$.
 Taken together, translations by half-periods and  modular transformations 
generate an action of $\mathrm S_4$  described in Proposition \ref{msl}.

We stress that there is no symmetry of the polynomials $T$
corresponding to the Fricke involution. Accordingly,
the behaviour of $T$ at the trigonometric  and the hyperbolic cusps
 is quite different, see \S \ref{tls} and \S \ref{ntcs}.

The above considerations only determine the  $\mathrm S_4$ symmetry
up to  factors depending on $\zeta$. 
To derive it explicitly, we note that (writing $G(x,y;\zeta)=G(x,y)$ etc.)
\begin{align}\label{gs}
\notag G(x,y;\zeta)&=\zeta^2 x^2y^2G\left(\frac 1x,\frac 1y;\frac 1\zeta\right)\\
\notag&=\left(\frac{\zeta-1}{\zeta+2}\right)^2G\left(\frac{(\zeta+2)x-(1+2\zeta)}{1-\zeta},\frac{(\zeta+2)y-(1+2\zeta)}{1-\zeta};-\zeta-1\right)\\
&=\left(\frac{(\zeta+2)}{\zeta(2\zeta+1)}\right)^2x^2y^2G\left(\frac{\zeta(2\zeta+1)}{(\zeta+2)x},\frac{\zeta(2\zeta+1)}{(\zeta+2)y};\zeta\right),\\
\notag Q(x,y;\zeta)&=\zeta^2 xy^3Q\left(\frac 1x,\frac 1y;\frac 1\zeta\right)\\
\notag &=\left(\frac{\zeta-1}{\zeta+2}\right)^2Q\left(\frac{(\zeta+2)x-(1+2\zeta)}{1-\zeta},\frac{(\zeta+2)y-(1+2\zeta)}{1-\zeta};-\zeta-1\right)\\
\notag &=\left(\frac{(\zeta+2)}{\zeta(2\zeta+1)}\right)^2xy^3Q\left(\frac{\zeta(2\zeta+1)}{(\zeta+2)x},\frac{\zeta(2\zeta+1)}{(\zeta+2)y};\zeta\right),\\
\notag R(x,y;\zeta)&=\zeta^4 x^2y^2R\left(\frac 1x,\frac 1y;\frac 1\zeta\right)\\
\notag &=\left(\frac{\zeta-1}{\zeta+2}\right)^2R\left(\frac{(\zeta+2)x-(1+2\zeta)}{1-\zeta},\frac{(\zeta+2)y-(1+2\zeta)}{1-\zeta};-\zeta-1\right)\\
\notag &=\left(\frac{(\zeta+2)}{\zeta(2\zeta+1)}\right)^2x^2y^2R\left(\frac{\zeta(2\zeta+1)}{(\zeta+2)x},\frac{\zeta(2\zeta+1)}{(\zeta+2)y};\zeta\right).
\end{align}
Using these identities in \eqref{tkl} yields the following symmetries. 

\begin{proposition}\label{msl}
Writing $T(\mathbf x;\mathbf y;\zeta)=T(\mathbf x;\mathbf y)$,
the polynomial $T$ satisfies
\begin{subequations}
\begin{align}
\nonumber\lefteqn{T(x_1,\dots,x_{k};x_{k+1},\dots,x_{2n};\zeta)}\\
\label{sia}&=\zeta^{2(n(n+1)-k)}\prod_{j=1}^{k}x_j^{3n-k-1}\prod_{j=k+1}^{2n}x_j^{n+k+1}\, T(x_1^{-1},\dots,x_{k}^{-1};x_{k+1}^{-1},\dots,x_{2n}^{-1};\zeta^{-1})\\
&\nonumber=\left(\frac{\zeta-1}{\zeta+2}\right)^{n(n+1)+2nk-k(k+1)}\\
\label{sib}&\quad\times T\left(\frac{(\zeta+2)x_1-(1+2\zeta)}{1-\zeta},\dots;\dots,\frac{(\zeta+2)x_{2n}-(1+2\zeta)}{1-\zeta};-\zeta-1\right)\\
\notag&=\left(\frac{\zeta+2}{\zeta(2\zeta+1)}\right)^{n(n+1)+2nk-k(k+1)}
\prod_{j=1}^{k}x_j^{3n-k-1}\prod_{j=k+1}^{2n}x_j^{n+k+1}\\
\label{sic}&\quad\times T\left(\frac{\zeta(2\zeta+1)}{(\zeta+2)x_1},\dots;\dots,\frac{\zeta(2\zeta+1)}{(\zeta+2)x_{2n }};\zeta\right).
\end{align}
\end{subequations}
\end{proposition}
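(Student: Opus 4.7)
The plan is to prove each of the three identities by direct substitution into the determinantal formula \eqref{tkl} (taking, say, $l=0$, which produces precisely the polynomial $T(x_1,\dots,x_k;x_{k+1},\dots,x_{2n};\zeta)$ of the statement), using the scaling identities \eqref{gs} for $G$, $Q$, $R$ together with the elementary transformation of Vandermonde differences under each substitution. Once the effect on every ingredient is tabulated, the right-hand side of \eqref{tkl} under the substitution factors as the original polynomial $T$ times an explicit monomial in the $x_j$ and $\zeta$, which one then verifies coincides with the prefactor in the proposition.

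For identity \eqref{sia}, I would substitute $x_j\mapsto 1/x_j$ for each $j$ and $\zeta\mapsto 1/\zeta$. Every difference $x_i-x_j$ appearing in a Vandermonde or in the explicit product $\prod(x_{n+j}-x_i)$ becomes $(x_i-x_j)/(x_ix_j)$ up to sign, while each occurrence of $G$, $Q$, $R$ scales via the first lines of \eqref{gs}. Collecting the powers of each $x_j$ (which depends on whether $j$ lies in the unhatted or hatted block) and of $\zeta$, and comparing with $\zeta^{2(n(n+1)-k)}\prod_{j=1}^kx_j^{3n-k-1}\prod_{j=k+1}^{2n}x_j^{n+k+1}$, yields the identity. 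Identity \eqref{sib} proceeds in exactly the same manner from the second lines of \eqref{gs}; since the substitution is affine in $x$, the Vandermondes contribute only a uniform factor $(1-\zeta)/(\zeta+2)$ with no $x_j$-dependence, leaving precisely the claimed prefactor $\left(\frac{\zeta-1}{\zeta+2}\right)^{n(n+1)+2nk-k(k+1)}$. For identity \eqref{sic}, either the third lines of \eqref{gs} can be applied by the same method, or one can appeal to Lemma \ref{pfl} — which records the effect of $x\mapsto\zeta(2\zeta+1)/((\zeta+2)x)$ on the basis $P_j$ — and read the identity off directly from the determinantal definition \eqref{sdt}.

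The conceptual explanation, sketched already in the paragraphs preceding the proposition, is that the one-dimensional space $(\id^{\otimes k}\otimes\sigma^{\otimes(2n-k)})\Theta_n^{\wedge 2n}$ spanned by \eqref{dps} is preserved up to an explicit automorphy factor by the $\mathrm S_4$ action generated by half-period translations and modular substitutions in $\Gamma_0(3)$. This forces $T$ to be invariant under each transformation up to a factor depending only on $\zeta$ and the $x_j$; the above computation merely pins down that factor.

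The main obstacle is the combinatorial bookkeeping. One must track the exponents of each $x_j$ and of $\zeta$ contributed by the four blocks of entries in the matrix $B$ of \eqref{tkl}, by the factor $\prod_{i,j=1}^nG(x_i,x_{n+j})$, by the two external products $\prod(x_{n+j}-x_i)$, and by the four Vandermondes, and then verify that the sums match $n(n+1)+2nk-k(k+1)$, $3n-k-1$, $n+k+1$, and $2(n(n+1)-k)$ respectively. The case $n=1$, where $T$ is computable by hand from \eqref{sdt} and \eqref{psb}, provides a convenient sanity check before committing to the general power count.
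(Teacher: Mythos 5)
Your proposal matches the paper's proof: the proposition is obtained exactly by substituting into the determinant formula \eqref{tkl} and applying the scaling identities \eqref{gs} for $G$, $Q$, $R$ (together with the trivial behaviour of the Vandermonde factors), which is what you describe. The only cosmetic caveat is that taking $l=0$ in \eqref{tkl} covers only $k\leq n$; for $k>n$ one simply uses \eqref{tkl} with a nonzero $l$ (or invokes the symmetry of $T$ in each group of variables), so this does not affect the argument.
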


Composing these three 
 involutions indeed generate  the  group $\mathrm S_4$. 
Using  \eqref{tnk} and \eqref{gs} gives 
 corresponding symmetries of the polynomials  $T_n^{(\mathbf k)}$, with
$\mathrm S_4$ permuting the  indices $k_j$.

 \begin{corollary}\label{sscc}
The polynomials  $T_n^{(\mathbf k)}$ satisfy
\begin{align*}
\notag\lefteqn{T_n^{(k_0,k_1,k_2,k_3)}(x_1,\dots,x_{m};\zeta)}\\
\notag&=\zeta^{2n(n-1)}\prod_{j=0}^3\xi_j^{k_j(n-1)}\prod_{j=1}^{m}x_j^{n-1}\, T_n^{(k_1,k_0,k_2,k_3)}(x_1^{-1},\dots,x_{m}^{-1};\zeta^{-1})\\
&\notag=\left(\frac{\zeta-1}{\zeta+2}\right)^{n(n-1)}\\
&\notag\quad\times T_n^{(k_2,k_1,k_0,k_3)}\left(\frac{(\zeta+2)x_1-(1+2\zeta)}{1-\zeta},\dots,\frac{(\zeta+2)x_{m}-(1+2\zeta)}{1-\zeta};-\zeta-1\right)\\
&\notag=\left(\frac{\zeta+2}{\zeta(2\zeta+1)}\right)^{n(n-1)}
\prod_{j=0}^3\xi_j^{k_j(n-1)}
\prod_{j=1}^{m}{x_j^{n-1}}\\
&\quad\times T_n^{(k_1,k_0,k_3,k_2)}\left(\frac{\zeta(2\zeta+1)}{(\zeta+2)x_1},\dots,\frac{\zeta(2\zeta+1)}{(\zeta+2)x_{m }};\zeta\right).
\end{align*}
\end{corollary}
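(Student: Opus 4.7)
The plan is to derive Corollary \ref{sscc} as a direct consequence of Proposition \ref{msl} via the defining formula \eqref{tnk}. The three involutions in Proposition \ref{msl} act on $T(\mathbf x;\mathbf y;\zeta)$ by prescribed rational transformations of the variables and $\zeta$, so the strategy is to substitute $\mathbf x = (x_1,\dots,x_m,\boldsymbol\xi^{\mathbf k^+})$ and $\mathbf y = \boldsymbol\xi^{\mathbf k^-}$ and track how everything transforms.

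The first step is a compatibility check: under the three transformations of $\zeta$ (namely $\zeta\mapsto 1/\zeta$, $\zeta\mapsto -\zeta-1$, and $\zeta\mapsto\zeta$ with $x\mapsto\zeta(2\zeta+1)/((\zeta+2)x)$), the quantities $\xi_j=\xi_j(\zeta)$ listed in Lemma \ref{xvl} permute as follows. A short calculation (using only Lemma \ref{xvl}) shows that $x\mapsto 1/x,\ \zeta\mapsto 1/\zeta$ sends $\xi_0\leftrightarrow\xi_1$ and fixes $\xi_2,\xi_3$; that $x\mapsto((\zeta+2)x-(1+2\zeta))/(1-\zeta),\ \zeta\mapsto-\zeta-1$ sends $\xi_0\leftrightarrow\xi_2$ and fixes $\xi_1,\xi_3$; and that $x\mapsto\zeta(2\zeta+1)/((\zeta+2)x)$ implements $\xi_0\leftrightarrow\xi_1$ combined with $\xi_2\leftrightarrow\xi_3$. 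These are exactly the permutations of the $k_j$'s appearing in the statement.

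Next I would transform the denominator of \eqref{tnk}. The factors $G(\xi_i,\xi_j)^{k_i^- k_j^+}$ and $G(x_l,\xi_i)^{k_i^-}$ are handled directly by the identities \eqref{gs}, which give the transformation rules for $G$ under each of the three involutions. Under the permutation of the $\xi_j$ induced by the involution, $G(\xi_i,\xi_j)$ transforms with an overall scalar factor (a power of $\zeta$ or of $(\zeta-1)/(\zeta+2)$, etc.), and similarly for $G(x_l,\xi_i)$. Collecting these, together with the overall scalar prefactor produced by applying Proposition \ref{msl} to the numerator $T(\mathbf x,\boldsymbol\xi^{\mathbf k^+};\boldsymbol\xi^{\mathbf k^-};\zeta)$ with $k=m+|\mathbf k^+|$ and $2n-k=|\mathbf k^-|$, one obtains the right-hand side of each identity in Corollary \ref{sscc} up to a scalar.

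The main obstacle is the bookkeeping: matching the three exponents $n(n-1)$, $2n(n-1)$, and $n(n-1)$, together with the products $\prod_j\xi_j^{k_j(n-1)}$ and $\prod_l x_l^{n-1}$, against the raw outputs of Proposition \ref{msl} and \eqref{gs}. One checks that the exponents of $\zeta$, of $(\zeta-1)/(\zeta+2)$, and of $(\zeta+2)/(\zeta(2\zeta+1))$ combine with the various powers of $\xi_j$ (obtained from the $x_j$-prefactors in \eqref{sia} and \eqref{sic} specialized at $\xi_j$) to give precisely the stated formulas. The values $n(n-1)$ come from $n(n+1)+2nk-k(k+1)$ after substituting $k=m+|\mathbf k^+|$ and $m=2n-|\mathbf k|$ and simplifying, using $|\mathbf k^+|-|\mathbf k^-|=|\mathbf k|$. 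Once the powers are matched, one may finally verify the result in a single nontrivial case (say $m=0$, $\mathbf k=(n,n,0,0)$ with small $n$) to confirm there is no sign error; the general case then follows by the universality of the construction \eqref{tnk}.
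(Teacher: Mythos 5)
Your argument is correct and is essentially the paper's own proof: the paper obtains Corollary \ref{sscc} precisely by specializing Proposition \ref{msl} through the definition \eqref{tnk} and using the transformation rules \eqref{gs}, the $\mathrm S_4$-permutation of the indices $k_j$ coming from the induced permutations of the $\xi_j$ that you verified. One small caution in the bookkeeping: when $\mathbf k^-\neq \mathbf 0$ the exponent $n(n-1)$ is \emph{not} produced by the prefactor of Proposition \ref{msl} alone (with the proposition's half-size equal to $n+|\mathbf k^-|$ and $k=m+|\mathbf k^+|$ that substitution gives $n(n-1)+4n|\mathbf k^-|+2|\mathbf k^-|^2$); the excess is cancelled by the scalar factors that \eqref{gs} attaches to the $G$-factors in the denominator of \eqref{tnk}, which your plan does take into account.
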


Next, we describe a different 
 symmetry  for  the polynomials  $T$, which interchanges the left and right group of variables.
The  reason is that multiplication by $e^{-6\pi\ti z}\tha(e^{12\pi\ti z};p^6)$ defines an isomorphism from $\Theta_n$ to the subspace of $\sigma\Theta_{n+2}$ consisting of functions vanishing at the lattice \eqref{lal}. 
Thus, the alternants of the two spaces are proportional. 
Applying $\sigma$ to some  of the variables and uniformizing leads to the following identity, 
except for the identification of the prefactor.

\begin{proposition}\label{tlrs}
One has
\begin{multline*}T(x_1,\dots,x_k;x_{k+1},\dots,x_{2n},\xi_0,\xi_1,\xi_2,\xi_3)\\
=-\frac{3^{n-k+1}(2\zeta(\zeta+1))^{4n+6}(\zeta-1)^2(2\zeta+1)^2}{(\zeta+2)^{2n+2k+4}}\prod_{j=1}^{2n}Y(x_j)\\
\times T(x_{k+1},\dots,x_{2n};x_1,\dots,x_k),
 \end{multline*}
with $Y$ is as in \eqref{y}.
\end{proposition}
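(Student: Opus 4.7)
The strategy, as sketched in the paragraph preceding the statement, is to realise both sides as uniformizations of alternants in a pair of linked one-dimensional spaces of theta functions. The key structural input is that multiplication by $\psi(z):=e^{-6\pi\ti z}\theta(e^{12\pi\ti z};p^6)$ defines an isomorphism from $\Theta_n$ onto the subspace $V\subseteq\sigma\Theta_{n+2}$ of functions vanishing at $\Lambda$. Direct computation from \eqref{tqp} yields $\psi(-z)=-\psi(z)$, $\psi(z+1)=\psi(z)=\psi(z+1/3)$, and $\psi(z+\tau)=p^{-12}e^{-24\pi\ti z}\psi(z)$, so multiplication by $\psi$ sends the odd, $n$-quasi-periodic space $\Theta_n$ into the even, $(n+2)$-quasi-periodic space $\sigma\Theta_{n+2}$, while preserving \eqref{fe} because $\psi$ is $1/3$-periodic. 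Since $\psi$ vanishes simply at $\Lambda$, the image lies in $V$; and because evenness, $1$-periodicity, $\tau$-quasi-periodicity, and \eqref{fe} together force vanishing at $\gamma_j\pm 1/3$ whenever one has vanishing at $\gamma_j$, the four conditions at the half-periods already cut out $V$. Hence $\dim V=\dim\sigma\Theta_{n+2}-4=2n$, and the injective map $\Theta_n\to V$ is an isomorphism.

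Next, I would choose a basis $f_1,\dots,f_{2n}$ of $\Theta_n$ and extend $\psi f_1,\dots,\psi f_{2n}$ to a basis $g_1,\dots,g_{2n+4}$ of $\sigma\Theta_{n+2}$. Specialising the last four arguments of $\det(g_j(z_i))_{1\leq i,j\leq 2n+4}$ to $\gamma_0,\gamma_1,\gamma_2,\gamma_3$ produces a block-triangular matrix (because $\psi(\gamma_l)=0$ makes the lower-left $4\times 2n$ block vanish), and block expansion gives
\[
\det(g_j(z_i))\big|_{z_{2n+l}=\gamma_{l-1}} = (\pm 1)\,C(\tau)\,\prod_{i=1}^{2n}\psi(z_i)\,\det(f_j(z_i))_{1\le i,j\le 2n},
\]
with $C(\tau)$ a $4\times 4$ minor at the half-periods. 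Applying $\sigma$ to the first $k$ variables on both sides (so that $\sigma^2=\id$ toggles their $\sigma$-status) and uniformizing via \eqref{dps} gives, on the left, a constant times $T(x_1,\dots,x_k;x_{k+1},\dots,x_{2n},\xi_0,\xi_1,\xi_2,\xi_3)$ (using $x(\gamma_j)=\xi_j$ from Lemma~\ref{xvl}), and on the right, the same constant times $\prod_jY(x_j)\cdot T(x_{k+1},\dots,x_{2n};x_1,\dots,x_k)$, once the factor $\prod_i\psi(z_i)$ is uniformized via Lemmas \ref{xvl}, \ref{hpl}, and \eqref{gu}; the elementary theta prefactors on the two sides cancel against those already present in the two versions of \eqref{dps}.

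\textbf{Main obstacle.} The elliptic and modular considerations above determine the identity only up to a rational function of $\zeta$. To pin this down as $-3^{n-k+1}(2\zeta(\zeta+1))^{4n+6}(\zeta-1)^2(2\zeta+1)^2/(\zeta+2)^{2n+2k+4}$, I would track the $\zeta$-dependence of $C(\tau)$ using \eqref{phx} and the implicit constant in Lemma~\ref{spl}, and anchor the remaining numerical constant via a base case---for instance, by setting one of the $x_j$ equal to each $\xi_l$ in turn and applying Lemma~\ref{tstvl} to reduce to a smaller instance of the same identity, inducting down to $n=0$ (where $T=1$ and everything is an explicit uniformization computation). A useful consistency check is that the $\mathrm{S}_4$-symmetries of Proposition~\ref{msl} permute both sides in a predictable way, so any single base case simultaneously fixes every exponent in the prefactor.
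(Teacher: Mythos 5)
The structural half of your proposal is sound, but it is essentially the remark the paper itself makes immediately before the statement (and it is close in spirit to the proof of Lemma \ref{scl}); in particular, the observation that $\sigma$ commutes with multiplication by the $1/3$-periodic function $e^{-6\pi\ti z}\tha(e^{12\pi\ti z};p^6)$ is exactly the right one. (A small repair: your count $\dim V=\dim\sigma\Theta_{n+2}-4$ presumes the four vanishing conditions are independent; it is cleaner to get surjectivity directly, since dividing an element of $V$ by $e^{-6\pi\ti z}\tha(e^{12\pi\ti z};p^6)$ yields an entire odd function with the quasi-periodicity of $\Theta_n$ which still satisfies \eqref{fe}.) However, this route only shows that the two sides are proportional by a factor depending on $\zeta$, $n$ and $k$ --- the paper says as much (``except for the identification of the prefactor'') --- and the paper's actual proof is different: it proves the case $k=2n$ by induction on $n$ using the Jacobi--Desnanot recursions \eqref{lc} and \eqref{lcr} of Lemma \ref{jdl}, with $n=0,1$ checked directly and the constant propagating through $C_{n+2}C_n=C_{n+1}^2$, and then passes to general $k$ by applying $\hat\sigma$ and invoking Lemma \ref{scl}.

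The genuine gap is in your determination of the prefactor, which is the real content of the statement. Your proposed anchor --- set some $x_j$ equal to each $\xi_l$ in turn and use Lemma \ref{tstvl} --- does not reduce to a smaller instance of the same identity: on the left-hand side Lemma \ref{tstvl} removes the matching $\xi_l$ from the right-hand group, so the reduced object carries only three of the four $\xi$'s, while on the right-hand side $T(x_{k+1},\dots,x_{2n};\xi_l,x_2,\dots,x_k)$ has no equal pair split between its two groups, so the lemma does not apply there at all. What does work is to equate a left-group variable with a right-group one, say $x_k=x_{2n}=t$: then Lemma \ref{tstvl} applies to both sides, and since $\prod_{l=0}^3G(\xi_l,t)=16\zeta^4(\zeta+1)^4(\zeta+2)^{-4}Y(t)^2$ this is consistent with the stated constant and reduces $(n,k)$ to $(n-1,k-1)$. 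But this diagonal reduction can never reach the boundary families $k=0$ or $k=2n$, where no cross-group pair of $x$'s exists, and those cases for all $n$ carry the entire difficulty --- they are precisely where the paper brings in Lemma \ref{jdl}. Likewise, ``tracking the $\zeta$-dependence of $C(\tau)$'' would force you to evaluate all the proportionality constants suppressed in $\sim$ (in \eqref{xd}, \eqref{gu}, \eqref{phx} and Lemma \ref{spl}) together with a non-canonical $4\times4$ minor depending on your choice of complementary basis, and none of that is carried out. As it stands, your argument proves proportionality of the two sides but not the stated identity.
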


\begin{proof}
We first prove the case  $k=2n$, that is,
\begin{equation}\label{lzs}T(x_1,\dots,x_{2n};\xi_0,\xi_1,\xi_2,\xi_3)
=C_n\prod_{j=1}^{2n}Y(x_j)T(-;x_{1},\dots,x_{2n}),\end{equation}
where 
\begin{equation}\label{tsc}C_n=-\frac{(2\zeta(\zeta+1))^{4n+6}(\zeta-1)^2(2\zeta+1)^2}{3^{n-1}(\zeta+2)^{6n+4}}. \end{equation}
We prove this by induction on $n$, using the recursions of Lemma \ref{jdl}.
Having verified \eqref{lzs} for $n=0$ and $n=1$, 
we start from \eqref{lc} with $\mathbf x=(x_1,\dots,x_{2n})$, $\mathbf y=(\xi_0,\xi_1,\xi_2,\xi_3)$. Applying \eqref{lzs}, 
the right-hand side of \eqref{lc} becomes $C_{n+1}^2$ times the right-hand side of \eqref{lcr}, with $\mathbf y$ replaced by $\mathbf x$ and $\mathbf x$ replaced by $\emptyset$. Since $T(-;\mathbf x)$ does not vanish identically, this proves \eqref{lzs}, where $C_{n+2}C_n=C_{n+1}^2$, which is consistent with \eqref{tsc}.

We now multiply \eqref{lzs} by $\Delta(\mathbf x)$ and apply $\hat\sigma$ to the variables $x_{k+1},\dots,x_n$. Writing $\mathbf x'=(x_1,\dots,x_k)$, $\mathbf x''=(x_{k+1},\dots,x_{2n})$, the  left-hand side becomes
$$(\zeta+2)^{-2(2n-k)}\Delta(\mathbf x')\Delta(\mathbf x'')\prod_{i=k+1}^{2n}X(x_i)\,T(\mathbf x';\mathbf x'',\xi_0,\xi_1,\xi_2,\xi_3) $$
and the right-hand side, using also Lemma \ref{scl},
\begin{multline*}(-1)^kC_n\Delta(\mathbf x')\prod_{i=1}^kY(x_i)\prod_{i=k+1}^{2n}(\hat\sigma_{x_i}Y(x_i)\hat\sigma_{x_i})\Delta(\mathbf x'')T(\mathbf x'';\mathbf x')\\ 
=C_n3^{2n-k}\Delta(\mathbf x')\Delta(\mathbf x'')\prod_{i=1}^{2n}Y(x_i)\prod_{i=k+1}^nX(x_i)T(\mathbf x'';\mathbf x').
\end{multline*}
 This completes the proof.
 \end{proof}

Proposition \ref{tlrs} yields an additional involution symmetry of
the polynomials $T_n^{(\mathbf k)}$, but only for $m=0$.

\begin{corollary}\label{zscc}
For $m=2n-|\mathbf k|=0$ and $\boldsymbol \delta=(1,1,1,1)$,
\begin{multline*}T_n^{(\mathbf k)}=\frac{(-1)^{n+1}(\zeta+2)^{2(k_1+k_2+n+2)}}{12^{n+1}\zeta^{2(k_1+k_2+2n+3)}(\zeta-1)^{2(k_2+k_3+1)}(\zeta+1)^{2(k_0+k_1+2n+3)}(2\zeta+1)^{2(k_0+k_2+1)}}\\
\times T_{-n-2}^{(-\mathbf k-\boldsymbol \delta)}. \end{multline*}
\end{corollary}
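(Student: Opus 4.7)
The plan is to express both $T_n^{(\mathbf k)}$ and $T_{-n-2}^{(-\mathbf k-\boldsymbol\delta)}$ as explicit scalar multiples of a common polynomial $T\bigl(\boldsymbol\xi^{\mathbf k^-};\boldsymbol\xi^{\mathbf k^++\boldsymbol\delta}\bigr)$ and then compare scalars. The mechanism is to use the flexibility in \eqref{dit} on each side, arranging that a full set of $\xi_0,\xi_1,\xi_2,\xi_3$ appears in the right-hand group so that Proposition~\ref{tlrs} applies.

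For the left-hand side I would apply \eqref{dit} with the non-canonical choice $\mathbf l=\mathbf k^-+\boldsymbol\delta$, which is admissible since $\mathbf l\geq 0$ and $\mathbf k+\mathbf l=\mathbf k^++\boldsymbol\delta\geq 0$. This yields
\[T_n^{(\mathbf k)}=\frac{(-1)^{\binom{|\mathbf k^-|+4}{2}}}{2^{|\mathbf k^-|+4}\prod_{i,j=0}^3 G(\xi_i,\xi_j)^{(k_i^++1)(k_j^-+1)}}\,T\bigl(\boldsymbol\xi^{\mathbf k^++\boldsymbol\delta};\boldsymbol\xi^{\mathbf k^-+\boldsymbol\delta}\bigr).\]
Since $\boldsymbol\xi^{\mathbf k^-+\boldsymbol\delta}=\boldsymbol\xi^{\mathbf k^-},\xi_0,\xi_1,\xi_2,\xi_3$, the right-hand group contains the four distinguished $\xi_j$ that Proposition~\ref{tlrs} requires. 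Applying that proposition with its ``$k$'' equal to $|\mathbf k^+|+4$ and its ``$n$'' equal to $\tfrac12(|\mathbf k^+|+|\mathbf k^-|)+2$ rewrites the right-hand side as an explicit scalar in $\zeta$ (and $n$, $\mathbf k$) times $\prod_{j=0}^3 Y(\xi_j)^{|k_j|+1}\,T\bigl(\boldsymbol\xi^{\mathbf k^-};\boldsymbol\xi^{\mathbf k^++\boldsymbol\delta}\bigr)$; the exponent $|k_j|+1$ arises because each $\xi_j$ appears with total multiplicity $|k_j|+1$ among the ``variable'' arguments of the proposition.

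For the right-hand side I would apply \eqref{dit} with $\mathbf l=\mathbf k^++\boldsymbol\delta$; one checks $(-\mathbf k-\boldsymbol\delta)+\mathbf l=\mathbf k^-\geq 0$, and this yields directly
\[T_{-n-2}^{(-\mathbf k-\boldsymbol\delta)}=\frac{(-1)^{\binom{|\mathbf k^+|+4}{2}}}{2^{|\mathbf k^+|+4}\prod_{i,j=0}^3 G(\xi_i,\xi_j)^{k_i^-(k_j^++1)}}\,T\bigl(\boldsymbol\xi^{\mathbf k^-};\boldsymbol\xi^{\mathbf k^++\boldsymbol\delta}\bigr).\]
Forming the ratio cancels the common $T$ and reduces the identity of the corollary to a scalar identity involving a sign, powers of $2$ and $3$, the product $\prod_{j=0}^3 Y(\xi_j)^{|k_j|+1}$, and the ratio $\prod_{i,j=0}^3 G(\xi_i,\xi_j)^{k_i^-(k_j^++1)-(k_i^++1)(k_j^-+1)}$.

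It remains to verify that this scalar equals the one stated. Using $|\mathbf k^+|-|\mathbf k^-|=2n$, the sign $(-1)^{\binom{|\mathbf k^-|+4}{2}-\binom{|\mathbf k^+|+4}{2}+1}$ reduces mod $2$ to $(-1)^{n+1}$, and the combined powers of $2$ and $3$ assemble to $12^{n+1}$ in the denominator. For the remaining variables $\zeta,\zeta\pm1,2\zeta+1,\zeta+2$ one uses the explicit evaluations of $G(x,\xi_j)$ given just after \eqref{gd} and of $Y(\xi_j)$ coming from \eqref{y} and Lemma~\ref{xvl}. A crucial simplification is that each such variable appears as a perfect square in $G(\xi_i,\xi_j)$ only on a specific $2\times 2$ block of index pairs: for instance $(\zeta-1)^2\mid G(\xi_i,\xi_j)$ exactly for $i,j\in\{2,3\}$, and $(2\zeta+1)^2\mid G(\xi_i,\xi_j)$ exactly for $i,j\in\{0,2\}$. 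Consequently each exponent computation collapses to a short linear expression in the $k_j$ and $n$ that one matches against $2(k_1+k_2+n+2)$, $-2(k_1+k_2+2n+3)$, $-2(k_2+k_3+1)$, $-2(k_0+k_1+2n+3)$, $-2(k_0+k_2+1)$. The main obstacle is this final bookkeeping: although each individual check is short, there are five such checks and one must be careful about the off-by-one shifts induced by $\boldsymbol\delta$ and the (minor) asymmetry between the exponents $(k_i^++1)(k_j^-+1)$ and $k_i^-(k_j^++1)$ coming from the two applications of \eqref{dit}.
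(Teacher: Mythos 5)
Your proposal is correct and follows essentially the paper's own route: the paper likewise starts from \eqref{dit} with an admissible $\mathbf l$ having all $l_j\geq 1$ (your $\mathbf l=\mathbf k^-+\boldsymbol\delta$ is one such choice), applies Proposition~\ref{tlrs}, re-reads the result via \eqref{dit} as a multiple of $T_{-n-2}^{(-\mathbf k-\boldsymbol\delta)}$, and then matches scalars, and your sketched bookkeeping does check out (sign $(-1)^{n+1}$, the $12^{n+1}$, and all five exponents). The only differences are cosmetic: the paper keeps $\mathbf l$ general and exploits the forced $\mathbf l$-independence to conclude that $\big(2\zeta(\zeta+1)\big)^4Y(\xi_j)^2/\big((\zeta+2)^4\prod_iG(\xi_i,\xi_j)\big)\equiv 1$, which shortens the final computation, and note that your ``$2\times2$ block'' description is literally true only for $\zeta-1$ and $2\zeta+1$ (the factors $\zeta$, $\zeta+1$, $\zeta+2$ occur on complements of such blocks), a harmless slip that does not affect the verification.
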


\begin{proof}
We start from \eqref{dit} with the additional assumption $l_j\geq 1$.  We can then apply Proposition \ref{tlrs}, and conclude that
\begin{multline*}T_n^{(\mathbf k)}=\frac{(-1)^{n+1}(\zeta-1)^2(2\zeta+1)^2(\zeta+2)^{2n}}{12^{n+1}\big(\zeta(\zeta+1)\big)^{2(2n+1)}}\\
\times\prod_{j=0}^3\left(\frac 1{Y(\xi_j)^{k_j+1}}\left(\frac{\left(2\zeta(\zeta+1)\right)^4Y(\xi_j)^2}{(\zeta+2)^4\prod_{i=0}^3G(\xi_i,\xi_j)}\right)^{k_j+l_j}\right)T_{-n-2}^{(-\mathbf k-\boldsymbol\delta)}. \end{multline*}
It follows in particular that  the factor raised to $k_j+l_j$  is identically $1$. Inserting the explicit factorization of $Y(\xi_j)$, we arrive at the desired result.
\end{proof}

In \cite{r2}, we will show that the case $m=0$ of the polynomials
$T_n^{(\mathbf k)}$ can be identified with tau functions of Painlev\'e VI. This identification implies one further symmetry, which we state here for completeness. It would be interesting to obtain a more direct proof.
Let
$(Y_k)_{k\in\mathbb Z}$ be the solution to the recursion
\begin{equation}\label{yrec}Y_{k+1}Y_{k-1}=2(2k+1)Y_k^2,\qquad Y_0=Y_1=1,
\end{equation}
that is,
$$Y_k=\begin{cases}
\prod_{j=1}^k \frac{(2j-1)!}{(j-1)!}, & k\geq 0,\\[1mm]
\frac{(-1)^{\frac{k(k+1)}{2}}}{2^{2k+1}}\prod_{j=1}^{-k-1} \frac{(2j-1)!}{(j-1)!}, & k<0.
\end{cases} $$

\begin{proposition}[\cite{r2}, Cor.\ 5.3]\label{esp}
For $m=0$,
\begin{multline}
\notag T_n^{(k_0,k_1,k_2,k_3)}=(-1)^{(k_0+k_1+n)(k_1+k_3+n)}\frac{Y_{n-k_0}Y_{n-k_1}Y_{n-k_2}Y_{n-k_3}}{Y_{k_0}Y_{k_1}Y_{k_2}Y_{k_3}}\\
\times\left(\frac{\zeta^{k_1+k_2-n}(\zeta+1)^{k_0+k_1-n}}{(\zeta-1)^{k_0+k_1-n}(\zeta+2)^{k_1+k_2-n}(2\zeta+1)^{k_1+k_3-n}}\right)^{n-1} T_n^{(n-k_0,n-k_1,n-k_2,n-k_3)}.
\label{tse}\end{multline}
\end{proposition}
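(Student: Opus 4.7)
The plan is to prove Proposition~\ref{esp} by verifying that both sides satisfy the same bilinear system, and then matching initial data. Since $m = 2n - |\mathbf k| = 0$, both sides depend only on $\mathbf k$, $n$ and $\zeta$, and the involution $\mathbf k \mapsto n\boldsymbol\delta - \mathbf k$ preserves the constraint $|\mathbf k| = 2n$. Note that this symmetry is not visible at the level of $\Theta_n^{\wedge 2n}$ from any of the half-period translations, $\mathrm S_3$-modular transformations or Fricke involution discussed in \S\ref{ms}, which is why the present paper does not contain its proof.

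First I would derive a complete set of contiguous relations for the $m=0$ polynomials by specializing the Jacobi--Desnanot identities of Lemma~\ref{jdl}: take each of $a,b,c,d$ in \eqref{lc}--\eqref{lcr} from the set $\{\xi_0,\xi_1,\xi_2,\xi_3\}$ and collapse the resulting $T$-values using Lemma~\ref{tkrl}. The values $G(\xi_i,\xi_j)$ are explicit rational functions of $\zeta$ (computable from Lemma~\ref{xvl} and the closed forms for $G(x,\xi_j)$ recorded after \eqref{gd}), so the result is a Hirota-type system of the schematic form
\[
T_n^{(\mathbf k + \mathbf e_i + \mathbf e_j)}\,T_n^{(\mathbf k + \mathbf e_k + \mathbf e_l)}
= A\,T_n^{(\mathbf k + \mathbf e_i + \mathbf e_k)}\,T_n^{(\mathbf k + \mathbf e_j + \mathbf e_l)}
+ B\,T_n^{(\mathbf k + \mathbf e_i + \mathbf e_l)}\,T_n^{(\mathbf k + \mathbf e_j + \mathbf e_k)}
\]
with explicit rational $A,B \in \mathbb Q(\zeta)$, together with an analogous system shifting $n$ by one.

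Second I would check that the right-hand side of Proposition~\ref{esp} satisfies the same bilinear system. After cancelling the $\zeta$-prefactor, which is a monomial in $\zeta$, $\zeta+1$, $\zeta-1$, $\zeta+2$ and $2\zeta+1$, this reduces to a purely numerical identity for the factor $Y_{n-k_0}Y_{n-k_1}Y_{n-k_2}Y_{n-k_3}/(Y_{k_0}Y_{k_1}Y_{k_2}Y_{k_3})$, which follows directly from the recursion \eqref{yrec}. Third I would use the $\mathrm S_4$ symmetries of Corollary~\ref{sscc} together with the shift symmetry of Corollary~\ref{zscc} to reduce the base-case verification to a one-parameter family, e.g. $\mathbf k = (n+j,n-j,0,0)$, and compute both sides there from the determinant \eqref{sdt} or the pfaffian formula derived from \eqref{li}.

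The main obstacle is twofold. First, the bilinear system from step one may not propagate uniquely through the $m=0$ sublattice $\{|\mathbf k|=2n\} \subset \mathbb Z^4$, so one has to identify a subsystem that does, and match enough base cases along the corresponding seed locus before the induction can close. Second, the precise sign $(-1)^{(k_0+k_1+n)(k_1+k_3+n)}$ is delicate: it must be checked simultaneously with the Hirota recursion and with the base cases, and any sign error will only manifest after several induction steps. For this reason the proof in \cite{r2} bypasses the direct route by identifying $T_n^{(\mathbf k)}$ at $m=0$ with a four-dimensional lattice of Painlev\'e~VI tau functions, where the involution $\mathbf k \mapsto n\boldsymbol\delta - \mathbf k$ is realized as an element of the Okamoto affine Weyl group $W(D_4^{(1)})$; the Picard algebraic seed solution then produces exactly the $Y_k$ normalization appearing in \eqref{yrec}, and the sign emerges from the cocycle controlling the action on tau functions.
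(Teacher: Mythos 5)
Your proposal is not a proof but a programme, and the decisive steps are exactly the ones you leave open. Note first that the paper itself offers no argument for Proposition~\ref{esp}: it is quoted from \cite{r2}, where the $m=0$ lattice $T_n^{(\mathbf k)}$ is identified with Painlev\'e~VI tau functions and the involution $\mathbf k\mapsto n\boldsymbol\delta-\mathbf k$ is realized by B\"acklund (Okamoto) transformations; the author explicitly remarks that a more direct proof would be desirable. You correctly recognize this, and your attempted direct route via Hirota-type relations is genuinely different in spirit. But as written it has concrete gaps. (i) The relations you get from Lemma~\ref{jdl} by specializing $a,b,c,d$ to distinct $\xi_i$ and collapsing with Lemma~\ref{tkrl} do not stay at fixed $n$: in \eqref{lc} the two factors on the left have $2N$ and $2N+4$ variables and the four factors on the right have $2N+2$, so at $m=0$ the specialized identity couples $T_{n}^{(\mathbf k)}$, $T_{n+1}^{(\mathbf k+\mathbf e_i+\mathbf e_j)}$ and $T_{n+2}^{(\mathbf k+\boldsymbol\delta)}$ (or, via the negative-shift convention \eqref{tnk}, mixed shifts), not the three fixed-$n$ points of your schematic display. (ii) Even granting a full bilinear system, you give no argument that it determines $T_n^{(\mathbf k)}$ on the sublattice $|\mathbf k|=2n$ from your seed locus: dividing in a Hirota recursion requires knowing that the specific $m=0$ values appearing as coefficients are not identically zero (a quantitative use of Corollary~\ref{tnvc}), and one needs an induction ordering; you flag this as an obstacle rather than resolving it. (iii) The assertion that compatibility of the right-hand side of \eqref{tse} with the system ``reduces to a purely numerical identity following from \eqref{yrec}'' is unsubstantiated: the monomial prefactor and the sign $(-1)^{(k_0+k_1+n)(k_1+k_3+n)}$ change with every shift $\mathbf e_i+\mathbf e_j$ and with $n$, and verifying that they cancel against the explicit coefficients $A,B\in\mathbb Q(\zeta)$ is precisely the bookkeeping you defer. (iv) The proposed base cases $\mathbf k=(n+j,n-j,0,0)$ are full specializations of all $2n$ variables, so they cannot be read off from \eqref{sdt} by substitution (the Vandermonde in the denominator vanishes identically there); closed-form evaluations of these seeds are themselves nontrivial and no method is given.

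In short: you have identified the correct statement of affairs (the result is imported from \cite{r2} via the tau-function/affine Weyl group picture) and a plausible alternative strategy, but the propagation-uniqueness argument, the simultaneous control of sign and $\zeta$-prefactor, and the seed evaluations are all missing, so the proposal does not establish the proposition.
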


By Corollary \ref{zscc} and Proposition \ref{esp},  
 the  $\mathrm S_4$ symmetry of Corollary \ref{sscc} is enhanced to an  $\mathrm S_4\times  \mathrm S_2\times \mathrm S_2$ symmetry when $m=0$.

\section{The trigonometric cusp $\zeta=-2$}
\label{tls}

In this Section,  we will consider the behaviour of $T_n^{(\mathbf k)}$ 
at the trigonometric cusp $\zeta= -2$
 (or, equivalently, $\tau=\ti\infty$, $p=0$).
A careful investigation of this limit will lead to a proof of
 Theorem \ref{dt}. The behaviour at the other two trigonometric cusps
can be deduced using the symmetries of Corollary \ref{sscc}.

\subsection{Polynomials related to symplectic and orthogonal characters}

For $n$ a non-negative integer,  let $V_n$ be  the space of Laurent polynomials $f(t)$ satisfying 
\begin{equation}\label{vnd}f(t^{-1})=-f(t), \qquad f(t)+f(\om t)+f(\om^2 t)=0, 
 \end{equation}
such that $t^{n+[(n-1)/2]}f(t)$  is regular at $t=0$. 
A natural basis for $V_n$ consists of the polynomials 
$t^{-m}-t^m$, where $1\leq m\leq n+[(n-1)/2]$ and $m\not\equiv 0\ \operatorname{mod}\ 3$ or, equivalently, $m=j+[(j-1)/2]$,  $1\leq j\leq n$. In particular, $\dim(V_n)=n$. Note that, when $p=0$, \eqref{tss} 
is an element of $V_{2n}$, where the variables are related by $t=e^{2\pi\ti z}$.
Thus,  $V_{2n}$ appears as the trigonometric limit of $\Theta_{n}$.

In terms of the variable $t$, the map \eqref{sig} takes the form
$$(\sigma f)(t)=\frac{\ti}{\sqrt 3}\left(f(\om t)-f(\om^2 t)\right), $$
that is,
\begin{equation}\label{sitr}\sigma(t^{m})=\begin{cases}
-t^{m}, & m\equiv 1 \ \operatorname{mod}\ 3,\\
t^{m}, & m\equiv 2 \ \operatorname{mod}\ 3.
\end{cases} \end{equation}
It is a bijection from $V_n$ to the space of Laurent polynomials satisfying
$$f(t^{-1})=f(t), \qquad f(t)+f(\om t)+f(\om^2 t)=0, $$
such that $t^{n+[(n-1)/2]}f(t)$  is regular at $t=0$.

The one-dimensional space $(\id^k \otimes\,\sigma^{n-k})V_n$
is spanned by $\det_{1\leq i,j\leq n}(X_{ij})$, where
\begin{equation}\label{xm}X_{ij}=\begin{cases} t_i^{-(j+[(j-1)/2])}-t_i^{j+[(j-1)/2]}, & 1\leq i\leq k,\\
t_i^{-j-[(j-1)/2]}+t_i^{j+[(j-1)/2]}, & k+1\leq i\leq n,\ j \text{ odd},\\
-t_i^{-j-[(j-1)/2]}-t_i^{j+[(j-1)/2]}, & k+1\leq i\leq n,\ j \text{ even}.
\end{cases} \end{equation}
Since the determinant vanishes if $t_i^2=1$ for $1\leq i\leq k$
or if $t_i=t_j^{\pm}$, where $1\leq i<j\leq k$ or  $k+1\leq i<j\leq n$,
it is  natural to define 
\begin{multline}\label{chi}\chi(t_1,\dots,t_k;t_{k+1},\dots,t_n)\\
=\frac{\det_{1\leq i,j\leq n}(X_{ij})}{\prod_{i=1}^kt_i^{-k}(1-t_i^2)\prod_{i=k+1}^nt_i^{1-n+k}\prod_{1\leq i<j\leq k \text{ or } k+1\leq i<j\leq n}(t_i-t_j)(1-t_it_j)},
 \end{multline}
which is a polynomial in the variables $t_j+t_j^{-1}$, and symmetric in its first $k$ and its last $n-k$ variables. 

Recall \cite{fuh} that the characters of the symplectic and even orthogonal groups are given by
\begin{align*}\chi_\lambda^{\mathfrak{sp}(2n)}(t_1,\dots,t_n)&=\frac{\det_{1\leq i,j\leq n}
\left(t_i^{-(\lambda_j+n-j+1)}-t_i^{\lambda_j+n-j+1}\right)}{\prod_{i=1}^nt_i^{-n}
(1-t_i^2)\prod_{1\leq i<j\leq n}(t_j-t_i)(1-t_it_j)},\\
\chi_\lambda^{\mathfrak{o}(2n)}(t_1,\dots,t_n)&=\frac{\det_{1\leq i,j\leq n}
\left(t_i^{-(\lambda_j+n-j)}+t_i^{\lambda_j+n-j}\right)}{C\prod_{i=1}^nt_i^{1-n}
\prod_{1\leq i<j\leq n}(t_j-t_i)(1-t_it_j)},\end{align*}
where $\lambda=(\lambda_1,\dots,\lambda_n)$ is a partition, $C=2$ if $\lambda_n= 0$ and $C=1$ else.
It is  easy to check that
\begin{subequations}\label{wc}
\begin{align}\label{ssc}\chi(t_1,\dots,t_n;-)&=\chi_{\left[\frac{n-1}2\right],\left[\frac{n-2}2\right],\dots,1,1,0,0}^{\mathfrak{sp}(2n)}(t_1,\dots,t_n), \\
\label{soc}\chi(-;t_1,\dots,t_n)&=(-1)^{[n/2]}\chi_{\left[\frac{n+1}2\right],\left[\frac{n}2\right],\dots,2,2,1,1}^{\mathfrak{o}(2n)}(t_1,\dots,t_n). \end{align}
\end{subequations}
Indeed, compared to 
\eqref{chi}, the order of the columns is reversed, which is compensated for by the change of sign in the factor $t_i-t_j$ in the denominator.
In the case of \eqref{soc}, we also multiply the even rows by $-1$, and the
number of such rows is $[n/2]$.
It follows from \eqref{wc} that
\begin{equation}\label{wdp}\chi\left(1^{(n)};-\right)>0, \qquad (-1)^{[n/2]}\chi\left(-;1^{(n)}\right)>0.\end{equation}
Indeed, these numbers are the dimension of the corresponding representation space, given explicitly by Weyl's dimension formula.

We state some useful properties of the polynomials $\chi$.

\begin{lemma}\label{xil}
The polynomials $\chi$ satisfy
$$\chi(-t_1,\dots,-t_k;-t_{k+1},\dots,-t_n)=\pm\chi(t_1,\dots,t_k;t_{k+1},\dots,t_n), $$
where the sign is negative if and only if $n\equiv 2k+1\ \operatorname{mod}\ 4$.
\end{lemma}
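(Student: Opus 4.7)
The plan is to track the effect of the substitution $t_i\mapsto -t_i$ (for all $i$) on both numerator and denominator of the defining formula \eqref{chi}. Writing $m_j=j+[(j-1)/2]$ for the exponents appearing in \eqref{xm}, one observes that in each of the three cases the entry $X_{ij}$ is an integer combination of $t_i^{\pm m_j}$, so the substitution multiplies $X_{ij}$ by exactly $(-1)^{m_j}$. Since this factor depends only on $j$, it amounts to a column scaling of the underlying matrix, and the determinant picks up the factor $\prod_{j=1}^{n}(-1)^{m_j}$.

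Next I would compute how the denominator transforms. The factor $\prod_{i=1}^{k}t_i^{-k}(1-t_i^2)$ contributes $(-1)^{k^2}=(-1)^k$; the factor $\prod_{i=k+1}^{n}t_i^{1-n+k}$ contributes $(-1)^{(n-k)(1-n+k)}=1$, because $a(1-a)$ is always even; and the two Vandermonde-type products contribute $(-1)^{\binom{k}{2}}$ and $(-1)^{\binom{n-k}{2}}$ respectively. Combining the numerator and denominator signs yields
\[
\chi(-t_1,\dots,-t_k;-t_{k+1},\dots,-t_n)=(-1)^{S}\,\chi(t_1,\dots,t_k;t_{k+1},\dots,t_n),
\]
where $S=\sum_{j=1}^{n}m_j+\tfrac{k(k+1)}{2}+\tfrac{(n-k)(n-k-1)}{2}$.

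The remaining step is an arithmetic verification. A direct inspection shows that $m_j\bmod 2$ cycles through the pattern odd, even, even, odd with period $4$, so that $\sum_{j=1}^{n}m_j\equiv 0\pmod 2$ precisely when $n\equiv 0\pmod 4$. After rewriting $\tfrac{k(k+1)}{2}+\tfrac{(n-k)(n-k-1)}{2}\equiv kn+\tfrac{n(n-1)}{2}\pmod 2$, a case analysis on the residue of $n$ modulo $4$ confirms that $S$ is odd exactly when $n\equiv 2k+1\pmod 4$, which is the stated claim. The argument is entirely elementary; the only real obstacle is getting the parity bookkeeping right.
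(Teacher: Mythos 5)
Your proof is correct and follows essentially the same route as the paper: replace $t_i$ by $-t_i$, note that the $j$-th column of the matrix in \eqref{xm} is scaled by $(-1)^{j+[(j-1)/2]}$ (an odd number of sign-changing columns unless $n\equiv 0\ \operatorname{mod}\ 4$), track the factor $(-1)^{k+\binom{k}{2}+\binom{n-k}{2}}$ from the denominator of \eqref{chi}, and finish with the mod-$4$ parity check. The bookkeeping in your argument matches the paper's, so no changes are needed.
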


\begin{proof}
Replacing all the variables  by their negative, the $j$:th column in \eqref{xm} changes sign if $j+[(j-1)/2]$ is odd. The number of such columns is odd unless $n\equiv 0\mod 4$. Moreover, the denominator in \eqref{chi} is multiplied by
$(-1)^{k+\binom{k}2+\binom{n-k}2}$, which is positive if and only if 
 $n\equiv 0$ or $n\equiv 2k+1\mod 4$. 
\end{proof}

If $f$  satisfies \eqref{vnd}, then  $(\sigma f)(-1)=(2\ti/\sqrt 3)f(-\om)$.
This implies the following result.

\begin{lemma}\label{xssl}
One has
\begin{multline*}\chi(t_1,\dots,t_k;t_{k+1},\dots,t_n,-1)\\
=-\frac{2\prod_{i=1}^k(1-t_i-t_i^{-1})}{\prod_{i=k+1}^nt_i^{-1}(1+t_i)^2}\,\chi(t_1,\dots,t_k,-\om;t_{k+1},\dots,t_n).\end{multline*}
\end{lemma}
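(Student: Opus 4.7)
The plan is to work directly from the determinant formula \eqref{chi} and exploit the identity noted just before the lemma: if $f\in V_n$, then $(\sigma f)(-1) = (2\ti/\sqrt 3) f(-\om)$ (which follows immediately from $f((-\om)^{-1}) = f(-\om^2) = -f(-\om)$). In \eqref{xm}, the row $X_{ij}$ for $i\ge k+1$ is precisely $(\sigma f_j)(t_i)$, where $f_j(t) = t^{-(j+[(j-1)/2])}-t^{j+[(j-1)/2]}$ is the $j$-th first-group basis element. Specializing $t_n=-1$ in the determinant for $\chi(t_1,\dots,t_k;t_{k+1},\dots,t_n,-1)$ therefore converts the last row into $\frac{2\ti}{\sqrt 3}$ times the row $(f_j(-\om))_j$, i.e.\ into a first-group row at the argument $-\om$.

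I would then permute this transformed row into position $k+1$ so that all first-group rows are consecutive, which costs a sign $(-1)^{n-k}$ from adjacent row swaps. The resulting matrix is exactly the one whose determinant gives $\chi(t_1,\dots,t_k,-\om;t_{k+1},\dots,t_n)$ up to the denominator in \eqref{chi}. The remaining task is to compare the two denominators. Moving $-\om$ from the second group to the first group shifts $t_i^{-k}$ to $t_i^{-(k+1)}$ for $i\le k$, shifts $t_i^{1-n+k}$ to $t_i^{2-n+k}$ for the surviving second-group variables, introduces the self-factor $(-\om)^{-(k+1)}(1-\om^2)$, and introduces new Vandermonde factors $(t_i+\om)(1+\om t_i)$ for $i\le k$. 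Removing $t_n=-1$ from the second group strips $(-1)^{1-n+k}$ together with the factor $\prod_{i=k+1}^n(1+t_i)^2$ from $V_2$.

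The crucial algebraic simplification is
\[
(t+\om)(1+\om t) \;=\; -\om\,t\,(1-t-t^{-1}),
\]
obtained from $1+\om+\om^2=0$, which combines with the new $t_i^{-1}$ factors to produce exactly the product $\prod_{i=1}^k(1-t_i-t_i^{-1})$. A final collection of constants uses $\om-\om^2=\ti\sqrt 3$ to reduce the accumulated roots-of-unity factor, and the two copies of $(-1)^{n-k}$ (from the row permutation and from the denominator shift) cancel, leaving the overall sign $-2 = 2\ti^2$. The main obstacle is not conceptual but purely bookkeeping: keeping track of the various exponent shifts, the parity of the row permutation, and the sign arising from $t_n^{1-n+k}$ at $t_n=-1$, so that all the $(-1)^{n-k}$ contributions cancel cleanly and the surviving factor matches the right-hand side of the lemma.
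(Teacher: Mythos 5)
Your proposal is correct and is essentially the paper's own argument: the paper proves the lemma by exactly the observation $(\sigma f)(-1)=(2\ti/\sqrt 3)f(-\om)$ applied row-by-row in the determinant \eqref{chi}, with the same bookkeeping of row permutation signs, denominator exponent shifts, and the simplification $(t+\om)(1+\om t)=-\om t(1-t-t^{-1})$, $\om-\om^2=\ti\sqrt 3$. The only blemish is a harmless indexing slip (the exponents in \eqref{chi} should be taken with total variable count $n+1$, e.g.\ $t_i^{k-n}\to t_i^{k+1-n}$), which does not affect the ratio of denominators or the final constant $-2$.
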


The following closely related identities can be viewed as  trigonometric limits of Proposition~\ref{tlrs}. 

\begin{lemma}
The polynomials $\chi$ satisfy
\begin{multline}\label{xsi}\chi(t_1,\dots,t_k;t_{k+1},\dots,t_{n},1,-1)\\
=(-1)^{[n/2]+k(n-k)}2\prod_{j=1}^{n}\left(t_j^2+t_j^{-2}+1\right)\chi(t_{k+1},\dots,t_{n};t_1,\dots,t_k),\end{multline}
\begin{multline}\label{osia}\chi(t_1,\dots,t_k;t_{k+1},\dots,t_{n},1)\\
=(-1)^{[n/2]+k(n-k)+1}\prod_{j=1}^{n}\frac{t_j+t_j^{-1}-1}{t_j+t_j^{-1}+1}\,\chi(t_{k+1},\dots,t_{n};t_1,\dots,t_k,-1).\end{multline}
\end{lemma}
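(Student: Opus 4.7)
Both identities are trigonometric analogs of Proposition~\ref{tlrs}, and I would mimic that proof in the $p=0$ setting rather than attempt the limit $\zeta\to-2$ of the elliptic identity (which is delicate because $\xi_1,\xi_2\to\infty$ there). The key observation is that multiplication by $h(t):=t^3-t^{-3}$ defines an isomorphism
$$V_n\;\xrightarrow{\sim}\;\{g\in\sigma V_{n+2}:\,g(1)=g(-1)=0\},$$
paralleling the multiplicative isomorphism $e^{-6\pi\ti z}\theta(e^{12\pi\ti z};p^6)\cdot-$ used in the elliptic proof. Indeed, $h$ is odd under $t\mapsto t^{-1}$, invariant under $t\mapsto\omega t$, and vanishes at $t=\pm1$, so multiplication by $h$ maps $V_n$ into the asserted subspace; injectivity is immediate, and both sides have dimension $n$ ($\dim V_n=n$ on the left, and $\dim\sigma V_{n+2}=n+2$ minus two linear conditions on the right).

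To prove \eqref{xsi}, I would specialise the (up to scalar unique) spanning element $\Psi$ of $(\id^{\otimes k}\otimes\sigma^{\otimes(n+2-k)})V_{n+2}^{\wedge(n+2)}$ at $t_{n+1}=1$, $t_{n+2}=-1$. For $k<i\le n$ (sigma-block) the specialisation, as a function of $t_i$, lies in $\{g\in\sigma V_{n+2}:g(\pm1)=0\}=hV_n$ by antisymmetry with the specialised rows; for $i\le k$ (id-block) a dual analysis shows it lies in $h\sigma V_n\subset V_{n+2}$. Hence $\Psi(t_1,\dots,t_n,1,-1)=\prod_{j=1}^n h(t_j)\cdot\Psi'(t_1,\dots,t_n)$, where $\Psi'$ spans $(\sigma^{\otimes k}\otimes\id^{\otimes(n-k)})V_n^{\wedge n}$; the id and sigma blocks have swapped roles because $h$ exchanges odd and even functions. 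Translating to $\chi$-form via \eqref{chi} and using $h(t)=t^{-1}(t^2-1)(t^2+t^{-2}+1)$, the $(1-t_j^2)$ and $(t_j^2-1)^2$ factors cancel cleanly against the corresponding factors in the $\chi$-denominators (the latter coming from sigma-block cross terms $(t_i-1)^2(t_i+1)^2$ with the specialised rows), leaving the clean prefactor $\prod_j(t_j^2+t_j^{-2}+1)$ visible in \eqref{xsi}. Identity \eqref{osia} I would treat analogously, specialising only $t_{n+1}=1$ in a spanning element of $(\id^{\otimes k}\otimes\sigma^{\otimes(n+1-k)})V_{n+1}^{\wedge(n+1)}$ and using $(\sigma f)(1)=\tfrac{2\ti}{\sqrt3}f(\omega)$ to interpret the evaluation, then matching the shape of \eqref{osia} via Lemma~\ref{xssl}.

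\textbf{Main obstacle.} The conceptual content is fully parallel to the proof of Proposition~\ref{tlrs}; the principal difficulty is the bookkeeping of signs, scalars, and row/column orderings when translating the isomorphism back into $\chi$-notation. The sign $(-1)^{[n/2]+k(n-k)}$ combines the column-dependent signs in \eqref{xm} from the action of $\sigma$ on basis elements (cf.\ \eqref{sitr}), the permutation sign from swapping the id and sigma blocks of sizes $k$ and $n-k$, and the orientations chosen in the Laplace expansion along the two specialised rows. A cleaner alternative that sidesteps much of this bookkeeping is an induction on $n$ using trigonometric analogs of the Jacobi--Desnanot identities of Lemma~\ref{jdl}, verifying base cases $n\in\{0,1,2\}$ by hand and then propagating, in direct parallel with the inductive step in the proof of Proposition~\ref{tlrs}.
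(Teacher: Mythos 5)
Your structural skeleton is sound and is in fact close to the paper's own argument in different clothing: after clearing denominators, the claim that the specialised alternant is divisible by $\prod_j(t_j^3-t_j^{-3})$ and that the quotient lies in a one-dimensional alternant space of the swapped type is exactly how the paper proceeds (it treats the extreme case first and then applies $\sigma$ to $n-k$ of the variables, whereas you work with general $k$ directly; both are fine). Two points in your writeup are thinner than they look. For the $\sigma$-block variables, vanishing at $t_i=\pm1$ does imply membership in $hV_n$, but only because evenness together with $g(t)+g(\omega t)+g(\omega^2t)=0$ forces vanishing at $\pm\omega$ as well (your dimension count ``$n+2$ minus two conditions'' by itself only bounds the dimension from below). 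For the id-block variables the ``dual analysis'' is not dual at all: vanishing at $t_i=\pm1$ is automatic for every element of $V_{n+2}$ and gives nothing, and what you actually need is that at $t_i=\pm\omega$ the $i$-th row of \eqref{xm} becomes proportional to one of the two specialised rows, using that $m_j=j+[(j-1)/2]$ is congruent to $1$ or $2$ mod $3$ according to the parity of $j$. This proportionality of rows at sixth roots of unity is precisely the paper's key divisibility observation, and it must be spelled out; it is true, but it is not a formal consequence of antisymmetry.

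The genuine gap is the constant. Your argument, like the remark motivating Proposition \ref{tlrs}, only yields the two sides of \eqref{xsi} up to a factor independent of the variables, while the content of the lemma is the explicit factor $(-1)^{[n/2]+k(n-k)}\,2$ (and the corresponding prefactor in \eqref{osia}); you never compute it, but instead list the sources of signs and defer to ``bookkeeping'' or to an unexecuted induction via trigonometric Jacobi--Desnanot relations. The paper settles this concretely by extracting the coefficient of $\prod_j t_j^{-(j+[(j-1)/2]+3)}$ on both sides (diagonal of the upper-left block times the complementary $2\times2$ minor on one side, the diagonal alone on the other); some step of this kind is indispensable, since the exact signs feed into Lemmas \ref{nvcl} and \ref{nvdl} and hence into Corollary \ref{tnvc} and Theorem \ref{dt}. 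Finally, your treatment of \eqref{osia} is only a gesture: there both sides carry a specialised slot (at $1$ on the left, at $-1$ on the right), so it is not a plain ``specialise and reduce'' of the same shape, and quoting $(\sigma f)(\pm1)=(2\ti/\sqrt3)f(\pm\omega)$ together with Lemma \ref{xssl} does not by itself produce the prefactor $\prod_j(t_j+t_j^{-1}-1)/(t_j+t_j^{-1}+1)$; an argument parallel to the one for \eqref{xsi}, again with an explicit coefficient comparison, is needed.
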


\begin{proof}
We first prove the case $k=n$ of \eqref{xsi}, that is,
\begin{equation}\label{osi}\chi(-;t_1,\dots,t_n,1,-1)=(-1)^{[n/2]}2\prod_{j=1}^n\left(t_j^2+t_j^{-2}+1\right)\chi(t_1,\dots,t_n;-). \end{equation}
Clearing denominators, this identity takes the form
\begin{equation}\label{osc}\det_{1\leq i,j\leq n+2}\left(X_{ij}^{(1)}\right)=(-1)^{[n/2]+1}8\prod_{j=1}^n(t_j^{-3}-t_j^3)\det_{1\leq i,j\leq n}\left(X_{ij}^{(2)}\right), \end{equation}
where $X^{(1)}$ and $X^{(2)}$ are the special cases of \eqref{xm} corresponding to the left-hand and right-hand sides of \eqref{osi}.
We note that the Laurent polynomial $\prod_{j=1}^n(t_j^{-3}-t_j^3)$ divides the left-hand side of \eqref{osc}. Indeed, if $t_j$ is a sixth root of unity, then the $j$:th row of $X^{(1)}$ is equal to one of the last two rows.
It is then straight-forward to check that the quotient is in the one-dimensional space $V_n^{\wedge n}$, so that \eqref{osc} holds up to a multiplicative constant. To identify the constant, we compute the coefficient
of $\prod_{j=1}^nt_j^{-(j+[(j-1)/2]+3)}$ for the two sides.
On the right-hand side, only the diagonal entries in $X^{(2)}$ contribute, so the coefficient is $(-1)^{[n/2]+1}8$. On the left-hand side, we get a contribution from the diagonal in the upper left $n\times n$-block, and from the complementary $2\times 2$-minor. This yields 
 a factor $(-1)$ from the entry $X_{jj}^{(1)}$ when $1\leq j\leq n$ and $j$ even, contributing in total $(-1)^{[n/2]}$. Moreover, $\det_{n+1\leq i,j\leq n+2}(X_{ij}^{(1)})=-8$. This completes the proof of \eqref{osi}. 
Applying $\sigma$ to the last $n-k$ variables in \eqref{osc} yields \eqref{xsi}. The proof of \eqref{osia} is similar and we do not give the details.
\end{proof}

\subsection{Factorization in the trigonometric limit}

By the following result,   appropriate specializations of
the polynomials $T$ factor in the trigonometric limit
as a product of generalized characters.
 Note that the relation $x_j=-(1+t_j+t_j^{-1})$ between the variables is natural since 
 $x(z)=-(1+e^{2\pi\ti z}+e^{-2\pi\ti z})$ when $p=0$.

\begin{proposition}\label{tdtp}
Let $k$, $l$, $m$, $n$ and $N$  be non-negative integers 
with $0\leq k\leq m$, $0\leq l\leq n$ and $m+n=2N$. Then, 
\begin{multline}\label{tcl}\lim_{\zeta\rightarrow -2}\left(\frac{\zeta+2}6\right)^{n(2N+k)-l(2k+n)+(2l-n)(N-1)-\delta(2l-n-1) }\\
\begin{split}&\quad\times T\left(x_1,\dots,x_k,\frac{\zeta(2\zeta+1)}{(\zeta+2)y_1},\dots,\frac{\zeta(2\zeta+1)}{(\zeta+2)y_l}; \right.\\
&\quad\qquad\qquad\qquad\left.x_{k+1},\dots,x_m,\frac{\zeta(2\zeta+1)}{(\zeta+2)y_{l+1}},\dots,\frac{\zeta(2\zeta+1)}{(\zeta+2)y_n}\right)\\
&=(-1)^{kl+k+l+m} 2^{N(N-1)}6^{m+n-k-l}\\
& \quad\times\prod_{j=1}^l\frac 1{(1+u_j+u_j^{-1})^{N+m+n-k-l-1}}\prod_{j=l+1}^n\frac 1{(1+u_j+u_j^{-1})^{N+k+l+1}}\\
&\quad\times
\chi(t_1,\dots,t_k;t_{k+1},\dots,t_{m})\,\chi(u_1,\dots,u_l;u_{l+1},\dots,u_n)
,\end{split}\end{multline}
where $x_j=-(1+t_j+t_j^{-1})$, $y_j=-(1+u_j+u_j^{-1})$ and $\delta(N)=[N^2/4]$.
\end{proposition}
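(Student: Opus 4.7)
The plan is to pass to the limit in the determinantal formula \eqref{sdd}. The first step is to apply Lemma \ref{pfl} to each row evaluated at a shifted argument $\zeta(2\zeta+1)/((\zeta+2)y_i)$: such an entry $P_j(\zeta(2\zeta+1)/((\zeta+2)y_i))$ converts to an explicit monomial prefactor in $\zeta$, $(2\zeta+1)/(\zeta+2)$, and $y_i$, times $P_{2N+1-j}(y_i)$, and analogously for $\hat\sigma P_j$. The only singular piece of the prefactor at $\zeta=-2$ is $((2\zeta+1)/(\zeta+2))^{[3(j-1)/2]}\sim(-3/(\zeta+2))^{[3(j-1)/2]}$; the scaling $((\zeta+2)/6)^E$ in \eqref{tcl} is calibrated precisely to absorb these column-dependent $(\zeta+2)^{-[3(j-1)/2]}$ divergences when the determinant is expanded as a sum over permutations.

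The second step is to specialise at $\zeta=-2$. From \eqref{psb}, $P_j(x)|_{\zeta=-2}=2^{N-[(j+1)/2]}\,x^{j-1}(x+3)^{[(j-1)/2]}$, and Lemma \ref{spl} together with $e_0|_{\zeta=-2}=-6(x+1)$, $e_1|_{\zeta=-2}=6(x+3)$ yields the corresponding form of $\hat\sigma P_j$. Substituting $x=-(1+t+t^{-1})$ gives $x+3=-t^{-1}(1-t)^2=2-t-t^{-1}$ and $x+1=-(t+t^{-1})$. After extracting the column scalars $2^{N-[(j+1)/2]}$ (which combine to the constant $2^{N(N-1)}$ in the statement) and the row/column factors from the remaining monomials, each matrix entry reduces to a Laurent monomial of the form $t_i^{-m_j}\pm t_i^{m_j}$ with $m_j=j+[(j-1)/2]$, matching the alternant matrix \eqref{xm} defining $\chi$.

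The third step is to extract the factorisation into two $\chi$-polynomials. The resulting specialised matrix is, up to scalar row/column operations, the alternant of $V_{2N}$ evaluated at $2N$ arguments of which $m$ are $t_i$'s (from $x$-rows) and $n$ are $u_i$'s (from $y$-rows, after the column reversal from Lemma \ref{pfl}). The $\hat\sigma$-pattern distinguishes rows contributing $t^{-m_j}-t^{m_j}$ from those contributing $t^{-m_j}+t^{m_j}$, which is exactly the row-parity distinction in \eqref{xm}. The alternant decouples into an $m\times m$-determinant in the $t_i$'s times an $n\times n$-determinant in the $u_i$'s, a reflection of the fact that the trigonometric limit of $\Theta_N^{\wedge 2N}$, restricted to the two asymptotic sectors (regular versus $\tau/2$-translated), factorises as $V_m^{\wedge m}\otimes V_n^{\wedge n}$. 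Each decoupled determinant, after division by the appropriate Vandermonde from \eqref{sdd} and the Vandermonde-like denominator from \eqref{chi}, is a $\chi$-polynomial, with the $\hat\sigma$-partition dictating the left/right split in $\chi$.

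The main obstacle is scalar bookkeeping: combining the $(\zeta+2)$-powers from Lemma \ref{pfl}, the $2^{N-[(j+1)/2]}$'s, the $-6$ and $6$ coming from $e_0, e_1$, the column factors $-t_i^{-1}(1-t_i)^2$ from each $x_i+3$, the row factors $-(t_i+t_i^{-1})$ from each $x_i+1$, the sign from the $y$-row column reversal, and the various Vandermonde quotients, to verify that they assemble to exactly the stated exponent $n(2N+k)-l(2k+n)+(2l-n)(N-1)-\delta(2l-n-1)$ and constant $(-1)^{kl+k+l+m}2^{N(N-1)}6^{m+n-k-l}$. A useful sanity check is to set all $t_i=u_j=1$: both sides then reduce via \eqref{wc} and \eqref{wdp} to explicit products of positive integers coming from Weyl's dimension formula for symplectic and orthogonal representations, which will pin down any residual normalisation ambiguity.
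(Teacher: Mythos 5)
Your overall route is the same as the paper's: start from the determinant \eqref{sdd}, apply Lemma \ref{pfl} to the rows carrying the arguments $\zeta(2\zeta+1)/((\zeta+2)y_i)$, specialise at $\zeta=-2$ using $P_j^{\trig}(x)=x^{j-1}(x+3)^{[(j-1)/2]}$, $e_0\sim -6(x+1)$, $e_1\sim 6(x+3)$, and identify the resulting blocks with the matrix \eqref{xm} under $x=-(1+t+t^{-1})$. Steps 1 and 2 are fine in spirit. The genuine gap is in your third step, i.e.\ precisely where the factorisation into two $\chi$'s has to come from. A $2N\times 2N$ alternant of $V_{2N}$ evaluated at $m$ arguments $t_i$ and $n$ arguments $u_i$ does \emph{not} decouple into an $m\times m$ times an $n\times n$ determinant, and the appeal to ``the trigonometric limit of $\Theta_N^{\wedge 2N}$ factorising as $V_m^{\wedge m}\otimes V_n^{\wedge n}$'' is not a fact available to you; it is a restatement of what must be proved. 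The true mechanism is the column dependence of the divergences you noticed in step 1: the entry of a $y$-row in column $j$ carries $\xi^{[3(j-1)/2]}$ (one extra power of $\xi$ for the $\hat\sigma$-rows), with $\xi=\zeta(2\zeta+1)/(\zeta+2)\to\infty$, while the $x$-rows stay finite. Because these exponents are strictly increasing in $j$, in the permutation expansion the terms of maximal order are exactly those assigning the last $n$ columns to the $y$-rows; a single global factor $((\zeta+2)/6)^{E}$ matches only that maximal order, so all other permutations die in the limit and the determinant degenerates to the product of the two diagonal blocks. The paper implements this by pulling $\xi^{\rho_m}$ (resp.\ $\xi^{\rho_m+1}$) out of the $y$-rows and $\xi^{\rho_j-\rho_m}$ out of the columns $j>m$, after which the off-diagonal block of $x$-rows in those columns vanishes at $\zeta=-2$ and block-triangularity gives the factorisation. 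Your write-up asserts the conclusion of this step without the argument, and your ``calibrated to absorb the divergences'' remark glosses over the fact that different permutation terms diverge at different rates — which is exactly the point.

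Two smaller inaccuracies to repair once the above is fixed: the specialised entries are not literally binomials $t_i^{-m_j}\pm t_i^{m_j}$; one has $(t^{-1}-t)P_j^{\trig}(x)\in V_j\setminus V_{j-1}$, so passing to the matrix \eqref{xm} requires unitriangular column operations and produces a constant $C_n=\pm1$ that must be computed (the paper does this by comparing the coefficient of $\prod_i t_i^{-i-[(i-1)/2]}$); and the $\hat\sigma$-rows must be matched with \eqref{xm} via the identities $P_j^{\trig}e_0^{\trig}$, $P_j^{\trig}e_1^{\trig}=\sigma\big((t^{-1}-t)P_j^{\trig}\big)$ together with \eqref{sitr}, which also produce the parity-dependent signs. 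Your suggestion of checking the normalisation at $t_i=u_j=1$ via \eqref{wc} is a reasonable cross-check but cannot replace the divergence bookkeeping, since without it neither the exponent $E$ nor the finiteness and non-vanishing of the limit is established.
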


\begin{proof}
Starting from \eqref{sdd}, we 
permute the rows so that the $i$:th row involves the variable $x_i$
for $1\leq i\leq m$ and $y_{i-m}$ for $m+1\leq i\leq m+n$.  
To the last $n$ rows, we apply Lemma \ref{pfl}
(with $n$ replaced by $N$).
 Let $\rho_j=[3(j-1)/2]$ and $\xi=\zeta(2\zeta+1)/(\zeta+2)$.  We pull out the factor
 $\xi^{\rho_m}$ from the $i$:th row when  $m+1\leq i\leq m+l$,
 $\xi^{\rho_{m}+1}$ from the $i$:th row when  $m+l+1\leq i\leq m+n$ and 
  $\xi^{\rho_j-\rho_m}$ from the $j$:th column, when $m+1\leq j\leq m+n$. 
At the point $\zeta=-2$, the resulting matrix is regular with
the upper right $n\times m$ block vanishing. Thus, the  determinant
 factors as the product of the diagonal blocks.

To describe these blocks, we introduce the notation
$$T^{\trig}(x_1,\dots,x_k;x_{k+1},\dots,x_{n})=\frac{\det_{1\leq i,j\leq n}(Y_{ij})}{\Delta(x_1,\dots,x_k)\Delta(x_{k+1},\dots,x_{n})},$$
where
\begin{equation}\label{yd}Y_{ij}=\begin{cases}P_j^\trig(x_i), & 1\leq i\leq k,\\
e_1^\trig(x_i)P_j^\trig(x_i), & k+1\leq i\leq n,\  j \text{ even},\\
e_0^\trig(x_i)P_j^\trig(x_i), & k+1\leq i\leq n,\  j \text{ odd},
\end{cases} \end{equation}
with
$$P_j^{\trig}(x)=2^{[(j+1)/2]-N}P_j(x)\Big|_{\zeta=-2}=x^{j-1}(x+3)^{\left[\frac{j-1} 2\right]}, $$
$$e_0^{\trig}(x)=\frac 16\,e_0(x)\Big|_{\zeta=-2}=-x-1,\qquad e_1^{\trig}(x)=\frac 16\,e_1(x)\Big|_{\zeta=-2}=x+3.   $$
After straight-forward simplification,  
we  find that the left-hand side of \eqref{tcl} equals
\begin{multline}\label{tlf}(-1)^{kl+n(k+l+N)} 2^{N(N-1)}6^{m+n-k-l}\prod_{j=1}^l{(-y_j)^{k+l+1-m-n-N}}\prod_{j=l+1}^n{(-y_j)^{-k-l-1-N}}\\
\times T^{\trig}(x_1,\dots,x_k;x_{k+1},\dots,x_m)\,T^{\trig}(y_1,\dots,y_l;y_{l+1},\dots,y_n).
\end{multline}

We now observe that, with $x=-1-t-t^{-1}$,
\begin{equation}\label{ptr}(t^{-1}-t)P_j^{\trig}(x)=\begin{cases}(t^{-1}-t)(2-t^3-t^ {-3})^{\frac{j-1}2}, & j \text{ odd},\\
(t^{2}-t^{-2}+t-t^{-1})(2-t^3-t^ {-3})^{\frac{j-2}2}, & j \text{ even},
\end{cases} \end{equation}
which is an element of  $V_j\setminus V_{j-1}$. 
Thus, for some constant $C_n$,
\begin{equation}\label{ssd}\det_{1\leq i,j\leq n}\left((t_i^{-1}-t_i)P_j^{\trig}(x_i)\right)=C_n\,{\det_{1\leq i,j\leq n}
\left(t_i^{-j-[(j-1)/2]}-t_i^{j+[(j-1)/2]}\right)}.\end{equation}
Identifying  the coefficient of
$\prod_{i=1}^n t_i^{-i-[(i-1)/2]}$ gives
$C_n=\prod_{i=1}^n(-1)^{i-1+[(i-1)/2]}$, that is,
 $C_n=-1$ when $n\equiv 2\ \operatorname{mod}\ 4$ and $C_n=1$ else. 

Next, we observe that
\begin{align*}
P_j^\trig(x)e_0^\trig(x)&=(t+t^{-1})(2-t^3-t^{-3})^{\frac{j-1}2}, & j&  \text{ odd},\\
P_j^\trig(x)e_1^\trig(x)&=(t^2+t^{-2}-t-t^{-1})(2-t^3-t^{-3})^{\frac{j-2}2}, & j&  \text{ even},
\end{align*}
which, by \eqref{sitr} and \eqref{ptr},  equals
  $\sigma((t^{-1}-t)P_j^{\trig}(x))$.
Thus, applying $\sigma$ to the last $n-k$ rows in \eqref{ssd} gives
$$\prod_{i=1}^k(t_i^{-1}-t_i)\det(Y)=C_n\det(X), $$
where $X$ and $Y$ are as in \eqref{xm} and \eqref{yd}.
Using also
$$\Delta(x_1,\dots,x_k)=(-1)^{\binom k2}\prod_{i=1}^kt_i^{1-k}\prod_{1\leq i<j\leq k}(t_i-t_j)(1-t_it_j), $$
it follows that
$$T^{\trig}(x_1,\dots,x_k;x_{k+1},\dots,x_n)=C_n(-1)^{\binom k2+\binom{n-k}2}\chi(t_1,\dots,t_k;t_{k+1},\dots,t_n). $$
Inserting this in \eqref{tlf} and simplifying completes the proof.
\end{proof}

Applying Proposition \ref{tdtp} to \eqref{tnk} leads after  simplification to
\begin{multline}\label{tcls}\lim_{\zeta\rightarrow -2}\left(\frac{\zeta+2}6\right)^{(k_1+k_2)(n-1)-\delta(k_1+k_2-1) }T_n^{(\mathbf k)}(x_1,\dots,x_m)\\
=(-1)^{\binom{|\mathbf k^-|}2+|\mathbf k^-|(k_1^++k_2^+)+k_0^-+k_3^-+k_2(n-1)}2^{n(n-1)-|\mathbf k^-|}3^{|\mathbf k^-|-k_1(n-1)}\prod_{j=1}^m\frac{1}{x_j^{2k_0^-}(x_j+2)^{2k_3^-}}\\
\begin{split}&\quad\times
\chi\left(t_1,\dots,t_m,1^{(k_0^+)},(-1)^{(k_3^+)};1^{(k_0^-)},(-1)^{(k_3^-)}\right)\\
&\quad\times \chi\left(1^{(k_1^+)},(-1)^{(k_2^+)};1^{(k_1^-)},(-1)^{(k_2^-)}\right).
\end{split}\end{multline}

\subsection{Non-vanishing character values}
\label{nvcvs}

The key fact for proving Theorem \ref{dt} is that $T_n^{(\mathbf k)}$ does not vanish identically. By \eqref{tcls} and the fact that $k_j^+k_j^-=0$, this would follow from the non-vanishing of the quantities
\begin{subequations}
\begin{align}\label{xsa}&\chi\left(1^{(k)},(-1)^{(l)};-\right),\\
\label{xsb} &\chi\left(1^{(k)};(-1)^{(l)}\right),\\
\label{xsc}&\chi\left((-1)^{(k)};1^{(l)}\right),\\
\label{xsd} &\chi\left(-;1^{(k)},(-1)^{(l)}\right). 
\end{align}
\end{subequations}
By Lemma \ref{xil}, \eqref{xsc} can be reduced to \eqref{xsb}. 
Moreover, by  \eqref{xsi}, if $k,l\geq 1$ \eqref{xsd} can be reduced to \eqref{xsa}. The non-vanishing of \eqref{xsd} when $l=0$  follows from \eqref{wdp}, and the case $k=0$ then  follows using Lemma \ref{xil}. In conclusion, the non-vanishing of $T_n^{(\mathbf k)}$ would follow from the non-vanishing of \eqref{xsa} and \eqref{xsb}. 

To investigate these two cases, we will use the following trigonometric limits of the recursions \eqref{jd}. 

\begin{lemma}
The polynomial $\chi$ satisfies
\begin{subequations}
\begin{multline}\label{xrb}
\frac{(a-b)(1-ab)}{ab}\,\chi(\mathbf t;\mathbf u)\chi(\mathbf t,a,b,c;\mathbf u)\\
=g(b,c)\chi(\mathbf t,a,c;\mathbf u)\chi(\mathbf t,b;\mathbf u)-g(a,c)\chi(\mathbf t,a;\mathbf u)\chi(\mathbf t,b,c;\mathbf u),
\end{multline}
\begin{multline}\label{xrd}
\chi(\mathbf t;\mathbf u)\chi(\mathbf t,a,c;\mathbf u,b)
=\frac{(c-b)(1-bc)}{bc}\,g(b,c)
\chi(\mathbf t,a,c;\mathbf u)\chi(\mathbf t;\mathbf u,b)\\
-g(a,c)\chi(\mathbf t,a;\mathbf u)\chi(\mathbf t,c;\mathbf u,b).
\end{multline}
\end{subequations}
Here, $\mathbf t$ and $\mathbf u$ are  (possibly empty) vectors, $a$, $b$, $c$, $d$  are scalars and 
$$g(t,u)=\left(tu+1+\frac 1{tu}\right)\left(\frac tu+1+\frac ut\right).$$
\end{lemma}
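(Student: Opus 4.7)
The plan is to derive \eqref{xrb} and \eqref{xrd} as the trigonometric limits $\zeta\to -2$ of two of the Jacobi--Desnanot recursions gathered in \emph{Lemma \ref{jdl}}, using \emph{Proposition \ref{tdtp}} to convert each $T$-factor into a product of $\chi$'s. Specifically, for \eqref{xrb} I would start from the second (unlabelled) identity in Lemma \ref{jdl}, the mixed-insertion recursion with $a,b,c$ inserted into the left group and $d$ into the right group of the appropriate $T$'s; for \eqref{xrd} I would start from \eqref{dlc}. In both cases the three insertions $a,b,c$ that survive in the target $\chi$-identity are taken ``unprimed'' in the sense of Proposition \ref{tdtp}, i.e.\ $a=-(1+t_a+t_a^{-1})$ and similarly for $b,c$, whereas the fourth insertion $d$, which is absent from the $\chi$-identity, is taken ``primed'': $d=\zeta(2\zeta+1)/((\zeta+2)y_d)$ with $y_d=-(1+u_d+u_d^{-1})$. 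The witness variables $\mathbf x$, $\mathbf y$ are likewise unprimed, playing the roles of $\mathbf t$ and $\mathbf u$ in the limit.

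Next I would expand each piece of the $T$-identity. Proposition \ref{tdtp} factors each $T$ as a product of two $\chi$'s; since only $d$ is primed, the ``secondary'' $\chi$-factor is either $1$ or $\chi(-;u_d)=u_d+u_d^{-1}$, depending on whether $d$ appears in that particular $T$. The scalar prefactors are expanded using: for two unprimed arguments, $G(x,y)\big|_{\zeta=-2}=2g(t_x,t_u)$ and $x-y=-(t_x-t_y)(1-t_xt_y)/(t_xt_y)$, a direct check from \eqref{gd} and the definitions; for the primed variable one finds $d\sim 6/((\zeta+2)y_d)$, $(a-d)\sim -6/((\zeta+2)y_d)$ and $G(a,d)\sim 72/((\zeta+2)^2y_d^2)$ as $\zeta\to -2$. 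These blow-up factors supply exactly the powers of $(\zeta+2)^{-1}$ needed to balance the distinct $(\zeta+2)$-weights prescribed by Proposition \ref{tdtp} for the various $T$-factors on the two sides of the recursion.

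The final step is verification that all $u_d$-dependence cancels once one divides through by the common factor $\chi(-;u_d)$. This is forced: the remaining $u_d$-weights (from the $(1+u_d+u_d^{-1})^{-*}$ factors in Proposition \ref{tdtp}, combined with the $y_d$-powers in $(a-d)G(a,d)$ and $(b-d)G(b,d)$) must combine to leave the identity free of $u_d$, since the underlying $T$-identity holds for all $\zeta$ and all values of $y_d$. Matching the surviving coefficients against $g(b,c)$, $g(a,c)$ and $(a-b)(1-ab)/(ab)$ delivers \eqref{xrb}; the same strategy applied to \eqref{dlc} yields \eqref{xrd}.

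The main obstacle is the bookkeeping: one must track the sign $(-1)^{kl+k+l+m}$ and the three kinds of exponents of $(\zeta+2)$, $6$ and $(1+u_d+u_d^{-1})$ from Proposition \ref{tdtp} across six $T$-factors and the accompanying scalar coefficients, verify that the $(\zeta+2)$-balance works out term by term, and that the $u_d$-dependence indeed cancels. No new ideas are needed beyond correctly identifying the scaling of the primed variable $d$, but the calculation is lengthy.
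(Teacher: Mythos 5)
Your strategy is in essence the paper's own: take the $\zeta\to-2$ limit of a Jacobi--Desnanot recursion from Lemma \ref{jdl}, convert every $T$-factor via Proposition \ref{tdtp}, and make the fourth insertion $d$ ``primed'' (i.e.\ of the form $\zeta(2\zeta+1)/((\zeta+2)y_d)$) so that its contributions, including the factor $\chi(-;u_d)=u_d+u_d^{-1}$ and the blow-ups of $(a-d)G(a,d)$, are common to all terms and cancel. The paper starts from \eqref{lc} (with $d$ primed in the left group) rather than the unlabelled mixed recursion, and uses \eqref{dlc} for \eqref{xrd} exactly as you do; that difference is immaterial, and your scaling estimates for the primed variable are correct (the $(\zeta+2)$-orders and $y_d$-weights do balance term by term in your setup).

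There is, however, one genuine gap. Because every polynomial $T(\cdot;\cdot)$ is defined only on an even total number of variables, taking the witness vectors $\mathbf x,\mathbf y$ entirely unprimed forces $|\mathbf t|+|\mathbf u|$ to be even in the limiting identity; your argument as written therefore proves \eqref{xrb} and \eqref{xrd} only for that parity, while the lemma is stated, and later used (in Lemmas \ref{nvcl} and \ref{nvdl}, e.g.\ with $\mathbf t=1^{(n)}$, $\mathbf u=\emptyset$ for arbitrary $n$), for both parities. This is exactly why the paper pads the left group with primed dummy variables $w_1,\dots,w_l$, choosing $l$ so that $k+l+m=2N$: these produce second $\chi$-factors $\chi(\mathbf w$-variables$;-)$ and $\chi(\mathbf w$-variables$,u_d;-)$ and elementary factors that occur once in each product of two $T$'s and hence cancel. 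Adding (at least) one such primed dummy when $|\mathbf t|+|\mathbf u|$ is odd repairs your argument; without it, half the cases are missing. A cosmetic point: with $x=-(1+t_x+t_x^{-1})$ one has $x-y=(t_x-t_y)(1-t_xt_y)/(t_xt_y)$, without the minus sign you wrote, which matters only for the final bookkeeping of constants.
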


\begin{proof}
To prove \eqref{xrb}, we let 
$$\mathbf x=\left(x_1,\dots,x_k,\frac{\zeta(2\zeta+1)}{(\zeta+2)w_1},\dots,\frac{\zeta(2\zeta+1)}{(\zeta+2)w_l}\right)$$
and $\mathbf y=(y_1,\dots,y_m)$ in \eqref{lc}, where $k+l+m=2N$.
 We  replace
 $d$ by $\zeta(2\zeta+1)/(\zeta +2)d$,
multiply both sides with $(\zeta+2)^{(2l+1)(N+m)+2-\delta(l-1)-\delta(l)}$
 and let $\zeta\rightarrow -2$. Using Proposition \ref{tdtp}
and
$$\lim_{\zeta\rightarrow -2}G\left(-1-t-t^{-1},-1-u-u^{-1}\right)=2g(t,u),$$
$$\lim_{\zeta\rightarrow -2}\left(\frac{\zeta+2}6\right)^2G\left(x,\frac{\zeta(2\zeta+1)}{(\zeta+2)y}\right)=\frac{2}{y^2}, $$
we find that all  factors involving the variables $w_j$  and $d$
cancel. After simplification and a  change of variables   we obtain \eqref{xrb}.
The identity \eqref{xrd} is proved in the same way starting from \eqref{dlc}.
\end{proof}


For the next result, we use the notation \cite{r}
\begin{align}\label{ph}\phi_n(x)&={}_2F_1\left(\begin{matrix}-n/2,\,-(n-1)/2\\n+3/2\end{matrix};x\right), \\
\notag\psi_n(x)&={}_3F_2\left(\begin{matrix}-n/2,\,-(n+1)/2,\,(n+5)/4\\n+3/2,\,(n+1)/4\end{matrix};x\right). \end{align}
Note that $\phi_n$ and $\psi_n$ are polynomials in $x$ with positive coefficients.

\begin{proposition}
When all but one variable is specialized to $1$, the symplectic character \eqref{ssc} is given by
\begin{multline}\label{shr}\chi(t,1^{(n)};-)\\
=\begin{cases}\displaystyle\frac{3^{n(n+2)/4}}{4^n}\frac{\prod_{j=1}^{n/2}(6j+2)(6j-1)!(2j)!}{\prod_{j=1}^{n}(2j+1)!}\left(\frac{(1+t)^2}t\right)^{\frac n2}\psi_n\left(y\right), & n \text{ even},\\[5mm]
\displaystyle\frac{3^{(n-1)(n+1)/4}}{4^{n-1}}\frac{\prod_{j=1}^{(n-1)/2}(6j+4)!(2j)!}{\prod_{j=1}^{n-1}(2j+3)!}\left(\frac{(1+t)^2}t\right)^{\frac {n-1}2}\phi_n\left(y\right), & n \text{ odd},
\end{cases} \end{multline}
where $y=-(t-1)^2/3(t+1)^2$.
\end{proposition}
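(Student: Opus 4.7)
\textbf{The plan} is to proceed by induction on $n$, deriving a three-term recursion for $f_n(t) := \chi(t, 1^{(n)}; -)$ from the Jacobi--Desnanot-type identity \eqref{xrb} and then checking that the proposed right-hand side of \eqref{shr} obeys the same recursion.

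First I would specialize \eqref{xrb} with $\mathbf t = (1^{(n-1)})$, $\mathbf u = \emptyset$, $a = t$, $b = c = 1$; the coincidence $b = c$ is harmless because $\chi$ is symmetric in its first group of variables. Using $g(1,1) = 9$ and $g(t,1) = (1 + t + t^{-1})^2$, this yields
\begin{equation*}
-\frac{(t-1)^2}{t}\, d_{n-1}\, f_{n+1}(t) = 9\, d_n\, f_n(t) - (1 + t + t^{-1})^2\, d_{n+1}\, f_{n-1}(t),
\end{equation*}
where $d_k := \chi(1^{(k)}; -)$. By \eqref{ssc} the scalars $d_k$ are symplectic dimensions for the staircase partition $\lambda(k) = ([(k-1)/2], \ldots, 1, 1, 0, 0)$; Weyl's dimension formula reduces them, after splitting by the parity of $k$ and telescoping the two products, to explicit products whose ratios match exactly the quotients of the constant prefactors appearing in \eqref{shr} for adjacent values of $n$. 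A quick sanity check: for $n=1$ one gets $f_2(t) = t + 4 + t^{-1}$, matching both $\chi_{(1,0,0)}^{\mathfrak{sp}(6)}(t,1,1)$ and the formula $(3/2)((1+t)^2/t)(1+y)$ via $\psi_2(y) = 1 + y$.

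Next I would substitute the Ansatz \eqref{shr} into this recursion. The identity $(t-1)^2/t = -3y\,(1+t)^2/t$ together with $1 + t + t^{-1} = (1+t)^2/t - 1$ combines all the $t$-dependence on the right with the prefactors $((1+t)^2/t)^{(n\pm 1)/2}$ on the left, and after dividing by a common power of $(1+t)^2/t$ the recursion collapses to a mixed contiguous identity
\begin{equation*}
\alpha(y)\, H_{n+1}(y) = \beta_n(y)\, H_n(y) - \gamma_n(y)\, H_{n-1}(y),
\end{equation*}
where $(H_{n+1}, H_n, H_{n-1})$ is $(\phi_{n+1}, \psi_n, \phi_{n-1})$ for odd $n$ and $(\psi_{n+1}, \phi_n, \psi_{n-1})$ for even $n$. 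This follows from the standard three-term contiguous relations for ${}_2F_1$ and ${}_3F_2$, since passing from $n$ to $n \pm 1$ shifts each hypergeometric parameter by $\pm 1/2$. The base cases $n = 0, 1$ reduce to $\psi_0(y) = \phi_1(y) = 1$, which is immediate from the truncating parameter $0$ in each series.

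\textbf{The main obstacle} is verifying the mixed contiguous relation coupling $\phi_n$ and $\psi_n$: it relates a ${}_2F_1$ to a ${}_3F_2$ with interleaved half-integer parameter shifts, and the algebra is delicate because the prefactor $((1+t)^2/t)^{(n-1)/2}$ is not a polynomial in $y$ alone—the cancellation of its odd powers depends on the precise combinatorial form of the Weyl dimensions produced by the first step. A possible alternative that bypasses the contiguous relations entirely is to observe that both sides of \eqref{shr} are polynomials of the same degree in $t + t^{-1}$ whose zero loci include appropriate sixth roots of unity (via Lemma \ref{xssl} and \eqref{osi}) and which agree at $t = 1$ via the Weyl dimension computation; but verifying that these constraints suffice to pin down the polynomial uniquely is itself nontrivial.
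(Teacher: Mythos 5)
Your proposal follows essentially the same route as the paper's own proof: the same specialization $a=t$, $b=c=1$, $\mathbf u=\emptyset$, $\mathbf t=1^{(\cdot)}$ of \eqref{xrb} produces exactly the recursion \eqref{scr}, and the ansatz \eqref{shr} is then verified against it by induction, the whole matter reducing to the two mixed $\phi_n$/$\psi_n$ identities which the paper quotes from \cite[Eq.\ (8.19--8.20)]{r} and which are checked directly as terminating polynomial identities. Two small remarks: your appeal to ``standard three-term contiguous relations with half-integer parameter shifts'' is not quite right (no such standard relations mix a ${}_2F_1$ with a ${}_3F_2$; one simply compares coefficients of the terminating series), and your worry about odd powers of $(1+t)^2/t$ is moot, since $(1+t)^2/t=4/(1+3y)$ is itself a rational function of $y$, so the recursion collapses to a statement purely in $y$.
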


\begin{proof}
In \eqref{xrb}, we let $\mathbf t=1^{(n)}$,  $\mathbf u=\emptyset$, $a=t$ and $b=c=1$. This gives 
\begin{multline}\label{scr}
-\frac{(t-1)^2}{t}\,\chi(1^{(n)};-)\chi(t,1^{(n+2)};-)\\
=9\chi(1^{(n+1)};-)\chi(t,1^{(n+1)};-)-
\left(1+t+t^{-1}\right)^2\chi(1^{(n+2)};-)\chi(t,1^{(n)};-).
\end{multline}
We need to prove that, with the initial conditions
$\chi(-;-)=\chi(t;-)=\chi(t,1;-)=1$, the recursion \eqref{scr} is solved by \eqref{shr}. 

Plugging \eqref{shr} into \eqref{scr}, 
using $\phi_n(0)=\psi_n(0)=1$, gives after simplification for even $n$
$$\frac{3(3n+5)(3n+8)}{(2n+3)(2n+5)}\,y\,\psi_{n+2}(y)=(3y+1)^2\phi_{n+1}(y)-(y-1)^2\psi_n(y) $$
and for odd $n$
\begin{equation}\label{phr}\frac{3(3n+4)(3n+7)}{(2n+3)(2n+5)}\,y\,\phi_{n+2}(y)=(3y+1)\psi_{n+1}(y)-(y-1)^2\phi_n(y). \end{equation}
These identities (which are trivial to verify and hold regardless of the parity of $n$) were given 
in  \cite[Eq.\ (8.19--8.20)]{r}.
\end{proof}

We are now ready to prove the non-vanishing of \eqref{xsa}.

\begin{lemma}\label{nvcl}
We have
$$\varepsilon(k,l)\chi\left(1^{(k)},(-1)^{(l)};-\right)>0, $$
where $\varepsilon(k,l)=(-1)^{[l/2]+1}$ if $l> k+1$ and $l\equiv k+2\ \operatorname{mod}\ 4$, and $\varepsilon(k,l)=(-1)^{[l/2]}$ else.
\end{lemma}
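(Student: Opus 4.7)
The plan is to argue by induction on $k+l$, writing $A_{k,l} = \chi(1^{(k)}, (-1)^{(l)};-)$ and combining the explicit formula \eqref{shr}, the reflection symmetry of Lemma \ref{xil}, and a trigonometric specialization of \eqref{xrb}. First I dispose of the base cases $l = 0$ and $l = 1$: by \eqref{ssc}, $A_{k,0}$ is the dimension of an $\mathfrak{sp}(2k)$-representation and is positive, matching $\varepsilon(k,0) = 1$; and $A_{k,1}$ is the limit $t \to -1$ of \eqref{shr}, which I expect to be positive because the prefactor $((1+t)^2/t)^{[n/2]}$ vanishes at $t = -1$ of precisely the order that cancels the blow-up of $\psi_n(y(t))$ or $\phi_n(y(t))$ as $y = -(t-1)^2/(3(t+1)^2) \to -\infty$, leaving a positive multiple of the (positive) leading coefficient of $\psi_n$ or $\phi_n$. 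Next, applying Lemma \ref{xil} with all variables in the first group yields $A_{l,k} = \sigma A_{k,l}$, where $\sigma = -1$ iff $k+l \equiv 3 \pmod 4$; a case analysis modulo $4$ verifies $\varepsilon(k,l)\varepsilon(l,k) = \sigma$, so the claim for $(k,l)$ is equivalent to that for $(l,k)$, and one may assume $l \geq k$.

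For the inductive step at $(k,l)$ with $l \geq k \geq 2$, I specialize \eqref{xrb} with $\mathbf t = (1^{(k-2)}, (-1)^{(l-1)})$, $\mathbf u = \emptyset$, $a = 1$, $b = -1$, $c = 1$. Using $g(-1,1) = 1$ and $g(1,1) = 9$, this yields
\begin{equation*}
4\, A_{k-2, l-1}\, A_{k, l} \;=\; 9\, A_{k-1, l-1}\, A_{k-1, l} \;-\; A_{k-2, l}\, A_{k, l-1}.
\end{equation*}
Every index on the right has $k' + l' < k + l$, except in the boundary case $l = k$ where $A_{k, k-1}$ lies in the region $l' < k'$ and is resolved via the symmetry $A_{k,k-1} = \sigma A_{k-1,k}$, reducing to the good-region value at total $2k - 1$.

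The sign bookkeeping is the main obstacle. Decomposing $\varepsilon(k,l) = (-1)^{[l/2]}\varepsilon_{\mathrm{exc}}(k,l)$, where $\varepsilon_{\mathrm{exc}}(k,l) = -1$ iff $l > k + 1$ and $l - k \equiv 2 \pmod 4$, the $(-1)^{[\cdot/2]}$ factors cancel pairwise in the product $\varepsilon(k-1,l-1)\varepsilon(k-1,l)\varepsilon(k-2,l)\varepsilon(k,l-1)$, leaving only four $\varepsilon_{\mathrm{exc}}$'s. Writing $d = l - k$, the four conditions $l' - k' \equiv 2 \pmod 4$ read $d, d+1, d+2, d-1$, cycling through all residues mod $4$; for $d \geq 0$, exactly one is active, the accompanying inequality $l' > k' + 1$ being automatic. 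Hence the product equals $-1$, so the two terms on the right of the recursion have opposite signs and cannot cancel. A parallel mod-$4$ computation shows $\varepsilon(k,l)\varepsilon(k-2,l-1)\varepsilon(k-1,l-1)\varepsilon(k-1,l) = +1$: the pairs $(k,l),(k-1,l-1)$ and $(k-2,l-1),(k-1,l)$ each share their exceptional residue condition, contributing $+1$ from each pair. Combined with the inductive hypothesis this gives $\sgn(A_{k,l}) = \varepsilon(k,l)$, closing the induction.
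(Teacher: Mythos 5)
Your proof is correct and follows essentially the same route as the paper's: base cases from \eqref{wdp} and the $t\to -1$ limit of \eqref{shr}, the $k\leftrightarrow l$ symmetry from Lemma \ref{xil}, and a specialization of \eqref{xrb} at $\pm 1$ that becomes a manifestly positive recursion for $Y(k,l)=\varepsilon(k,l)\chi(1^{(k)},(-1)^{(l)};-)$. The only difference is that you work in the region $l\geq k$ (the paper assumes $k\geq l$, where none of the six indices in the recursion triggers the exceptional sign, so its bookkeeping is immediate), which makes your mod-$4$ sign analysis heavier, but it checks out.
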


\begin{proof}
Let  $X(k,l)=\chi(1^{(k)},(-1)^{(l)};-)$. By Lemma \ref{xil},
$X(k,l)=\pm X(l,k)$, where the sign is negative if and only if 
 $k+l\equiv 3 \ \operatorname{mod}\ 4$, which is equivalent to 
 $\varepsilon(k,l)\varepsilon(l,k)=-1$. Thus, we may assume that $k\geq l$. 
The case $l=0$ follows from  \eqref{wdp}. When $l=1$, we must prove that
$\chi(1^{(k)},-1;-)>0$. If we let $t\rightarrow -1$ in \eqref{shr}, then only the highest coefficient of the polynomials $\phi_n$ and $\psi_n$ 
contribute to the limit. Since this coefficient is positive, the result follows.

Assuming from now on that   $k\geq l\geq 2$, we  choose
  $\mathbf t=(1^{(k-1)},(-1)^{(l-2)})$, $\mathbf u=\emptyset$,  $a=1$ and $b=c=-1$ in \eqref{xrb}. This gives
\begin{equation}\label{xrbs}4X(k-1,l-2)X(k,l)=X(k,l-2)X(k-1,l)-9X(k,l-1)X(k-1,l-1). \end{equation}
For each of the six occurrences $X(k',l')$  of $X$ 
 in \eqref{xrbs}, we have $l'\leq k'+1$, so  
$\varepsilon(k',l')=(-1)^{[l'/2]}$. 
It follows that $\varepsilon(k,l-2)\varepsilon(k-1,l)=\varepsilon(k-1,l-2)\varepsilon(k,l)=-1$, $\varepsilon(k,l-1)\varepsilon(k-1,l-1)=1$. Thus, the numbers $Y(k,l)=\varepsilon(k,l)X(k,l)$ satisfy
$$4Y(k-1,l-2)Y(k,l)=Y(k,l-2)Y(k-1,l)+9Y(k,l-1)Y(k-1,l-1),$$
which  implies $Y(k,l)>0$ by induction on $k+l$. 
\end{proof}

The recursion \eqref{xrbs} can in fact be solved explicitly, leading
to the identities
\begin{multline*}\chi_{(n,n-1,n-1,\dots,1,1,0,0)}^{\mathfrak{sp}(4n+2)}(1^{(2n+1-l)},(-1)^{(l)})
=(-1)^{\left[\frac l2\right]}3^{\frac 12((l-n)(l-n-1)+n(n-1))}\\
\times
\prod_{j=1}^{n+1}\frac{(3j-2)!}{(n+j)!}
\prod_{j=1}^{\max(n-l,l-n-1)}\frac{(j-1)!(6j-1)!(n+j)!(2n-2j+1)!}{(2j-1)!(3j-1)!(n-j)!(2n+2j+1)!},
\end{multline*}
\begin{multline*}\chi_{(n-1,n-1,\dots,1,1,0,0)}^{\mathfrak{sp}(4n)}(1^{(2n-l)},(-1)^{(l)})
=(-1)^{\left[\frac {l+1}2\right]+\min(l,n)}2^{2n+|l-n|}3^{\frac{|l-n|(|l-n|+1)+n(n-5)}{2}}\\
\times
\frac{(3/2)_n}{(4/3)_n}
\prod_{j=1}^{n+1}\frac{(3j-2)!}{(n+j)!}\prod_{j=1}^{|l-n|}\frac{(j-1)!(6j-5)!(n+j-1)!(2n-2j+1)!}{(2j-2)!(3j-3)!(n-j)!(2n+2j-1)!}.
\end{multline*}
Although we have no need for these evaluations, we include  them here since they
may have some independent interest.

Next, we consider the quantity \eqref{xsb}.

\begin{lemma}\label{nvdl}
We have
$$\varepsilon(k,l)\,\chi\left(1^{(k)};(-1)^{(l)}\right)>0, $$
where $\varepsilon(k,l)=(-1)^{[(l+1)/2]+kl+1}$ if
 $l>k+2$ and $l\equiv k+3 \ \operatorname{mod}\ 4$, and $\varepsilon(k,l)=(-1)^{[(l+1)/2]+kl}$ else.
\end{lemma}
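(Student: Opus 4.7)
My plan is to mirror the proof of Lemma~\ref{nvcl}, using a bilinear recursion derived from \eqref{xrd} in place of \eqref{xrbs}. Writing $X(k,l):=\chi(1^{(k)};(-1)^{(l)})$, the unique specialization of \eqref{xrd} that keeps all six $\chi$-values within the family $\{X(k',l')\}$ is $a=c=1$, $b=-1$; taking $\mathbf t=1^{(k-2)}$, $\mathbf u=(-1)^{(l-1)}$ with these values, and using $(c-b)(1-bc)/(bc)=-4$, $g(-1,1)=1$, $g(1,1)=9$, I obtain
\begin{equation*}
X(k-2,l-1)\,X(k,l)+4\,X(k,l-1)\,X(k-2,l)+9\,X(k-1,l-1)\,X(k-1,l)=0, \qquad k\geq 2,\ l\geq 1,
\end{equation*}
which determines $X(k,l)$ in terms of entries with strictly smaller $k+l$. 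Induction on $k+l$ thus reduces the claim to the boundary rows $k\in\{0,1\}$ and the column $l=0$.

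For $l=0$, $X(k,0)=\chi(1^{(k)};-)>0$ by \eqref{wdp}, matching $\varepsilon(k,0)=1$. For $k=0$, Lemma~\ref{xil} applied to $\chi(-;1^{(l)})$ introduces a sign flip precisely when $l\equiv 1\pmod 4$; together with \eqref{wdp}, this pins down $\sgn(X(0,l))$, and a routine case analysis modulo~$4$ verifies agreement with $\varepsilon(0,l)$, the ``special'' correction in the definition of $\varepsilon$ arising exactly where Lemma~\ref{xil} contributes the flip. The row $k=1$ has no symmetry argument and must be treated directly: factoring $t_1^{-m}-t_1^m=(t_1^{-1}-t_1)U_{m-1}(s_1)$ out of the vanishing first row of the determinantal formula \eqref{chi} at $t_1=1$, and canceling against the like factor in the denominator, expresses $\chi(1;t_2,\ldots,t_{l+1})$ as a polynomial in the variables $s_j=t_j+t_j^{-1}$; evaluating this at $s_1=2$, $s_j=-2$ together with the initial values $X(1,0)=1$, $X(1,1)=2$ gives the sign explicitly.

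The principal obstacle is the sign bookkeeping. Writing $Y(k,l):=\varepsilon(k,l)X(k,l)$, a direct computation of $\varepsilon$-products modulo~$4$ recasts the recursion as
\begin{equation*}
Y(k-2,l-1)\,Y(k,l)=9\,Y(k-1,l-1)\,Y(k-1,l)-4\,Y(k,l-1)\,Y(k-2,l),
\end{equation*}
which, unlike the analogue \eqref{xrbs} for Lemma~\ref{nvcl}, is a \emph{difference} rather than a sum. Establishing $Y(k,l)>0$ therefore requires carrying the strict inequality $9\,Y(k-1,l-1)\,Y(k-1,l)>4\,Y(k,l-1)\,Y(k-2,l)$ through the induction. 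The exceptional residue class $l>k+2$, $l\equiv k+3\pmod 4$ in the definition of $\varepsilon$ is precisely what is needed to absorb the sign flip propagating from the $k=0$ boundary through the recursion; verifying that this toggle is consistent across every six-tuple of cells linked by the recursion, and controlling the above inequality, will be the main technical work.
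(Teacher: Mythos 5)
Your derivation of the recursion is correct: with \eqref{xrd} as printed, the only specialization staying inside the family is $a=c=1$, $b=-1$, giving $X(k-2,l-1)X(k,l)+4X(k,l-1)X(k-2,l)+9X(k-1,l-1)X(k-1,l)=0$. But this is exactly where the argument fails to close. As you yourself compute, after inserting $\varepsilon$ this becomes $Y(k-2,l-1)Y(k,l)=9Y(k-1,l-1)Y(k-1,l)-4Y(k,l-1)Y(k-2,l)$, a \emph{difference}, so induction on $k+l$ does not yield $Y(k,l)>0$; you would have to establish the strict inequality $9Y(k-1,l-1)Y(k-1,l)>4Y(k,l-1)Y(k-2,l)$ at every step, and also track how the exceptional clause in $\varepsilon$ toggles among the six cells in the region $l>k+2$ (which cells are exceptional depends on $l-k\ \operatorname{mod}\ 4$, so the relation is not even uniformly of the above form there). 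You defer precisely this as ``the main technical work'', and there is no evident way to supply it from the recursion you chose; so the proposal has a genuine gap. The boundary row $k=1$ is a second gap: determining the sign of $\chi(1;(-1)^{(l)})$ for all $l$ is itself a nontrivial one-parameter positivity statement, and ``evaluating at $s_1=2$, $s_j=-2$'' is not an argument.

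The paper's proof avoids both issues. It first combines Lemma \ref{xil} with \eqref{osia} to get $X(k,l+1)=\pm 3^{k-l}X(l,k+1)$, with the sign equal to $\varepsilon(k,l+1)\varepsilon(l,k+1)$, which reduces everything to the region $k+1\geq l$, where the exceptional clause in $\varepsilon$ never occurs. There it uses the \emph{transposed} step
$$X(k,l)X(k-1,l-2)=4X(k,l-2)X(k-1,l)+9X(k,l-1)X(k-1,l-1),$$
i.e.\ the companion of \eqref{xrd} in which the doubled variable sits in the second group (it comes from \eqref{dlc} in the same way, specializing a variable of the other group; it is not the literal \eqref{xrd} you used). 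For this step all three $\varepsilon$-products equal $(-1)^{l+1}$, so the $Y$-recursion has only plus signs and positivity is immediate by induction on $k+l$, with base cases $l=0$ from \eqref{wdp} and $l=1$ from Lemma \ref{xssl} together with the explicit evaluation \eqref{shr} (positivity of the coefficients of $\phi_n$, $\psi_n$ at $t=-\om$). The preliminary symmetry reduction plus the choice of which group the recursion steps through by two is exactly what makes the sign bookkeeping come out as a sum rather than a difference; without it your induction does not go through.
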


\begin{proof} The proof is similar to that of Lemma \ref{nvcl}. Let $X(k,l)=\chi(1^{(k)};(-1)^{(l)})$.
Combining Lemma \ref{xil} and \eqref{osia}, we deduce that 
$X(k,l+1)=\pm 3^{k-l}X(l,k+1)$, where the sign is negative if and only if $k+l\equiv 1\ \operatorname{mod}\ 4$, which is equivalent to
$\varepsilon(k,l+1)\varepsilon(l,k+1)=-1$. Thus, we may assume that $k+1\geq l$. The case $l=0$ follows from  \eqref{wdp}. When $l=1$, we use 
Lemma \ref{xssl} to write  $\chi(1^{(k)};-1)=(-1)^{k+1}2\chi(1^k,-\om;-)$. 
We then apply \eqref{shr}, using that, for $t=-\om$, $(1+t)^2/t=3$ and 
$-(t-1)^2/3(t+1)^2=1/9$. Since  $\phi_n $ and 
 $\psi_n$ have positive coefficients, it follows that $(-1)^{k+1}\chi(1^{(k)};-1)>0$, which is the case $l=1$. 

In remaining cases,  $k+1\geq l\geq 2$, we can use \eqref{xrd} to write
$$X(k,l)X(k-1,l-2)=4X(k,l-2)X(k-1,l)+9X(k,l-1)X(k-1,l-1). $$
If $X(k',l')$ is any occurrence of $X$ in this identity, then 
$k'+2\geq l'$. 
 It follows that
$\varepsilon(k,l-2)\varepsilon(k-1,l)=\varepsilon(k-1,l-2)\varepsilon(k,l)=\varepsilon(k,l-1)\varepsilon(k-1,l-1)=(-1)^{l+1}$. Thus, the numbers $Y(k,l)=\varepsilon(k,l)X(k,l)$ satisfy
$$Y(k-1,l-2)Y(k,l)=4Y(k,l-2)Y(k-1,l)+9Y(k,l-1)Y(k-1,l-1),$$
which gives $Y(k,l)>0$ by induction on $k+l$.
\end{proof}

As was explained at the beginning of \S \ref{nvcvs}, Lemma \ref{nvcl} and Lemma \ref{nvdl}
have the following important consequence.

\begin{corollary}\label{tnvc}
The left-hand side of \eqref{tcls} and, consequently,
the function $T_n^{(\mathbf k)}$, do not vanish identically.
\end{corollary}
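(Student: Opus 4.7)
The plan is to derive the corollary directly from the trigonometric factorization \eqref{tcls}. Its right-hand side is a product of a manifestly non-zero scalar, a non-zero rational factor $\prod_{j=1}^m x_j^{-2k_0^-}(x_j+2)^{-2k_3^-}$, and two character values
\[\chi_1 = \chi\bigl(t_1,\dots,t_m,1^{(k_0^+)},(-1)^{(k_3^+)};1^{(k_0^-)},(-1)^{(k_3^-)}\bigr),\]
\[\chi_2 = \chi\bigl(1^{(k_1^+)},(-1)^{(k_2^+)};1^{(k_1^-)},(-1)^{(k_2^-)}\bigr),\]
where $\chi_1$ is a polynomial in $t_1,\dots,t_m$ and $\chi_2$ is a pure number. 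It therefore suffices to show that neither $\chi_1$ nor $\chi_2$ is identically zero.

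First I would dispose of $\chi_2$. Because $k_j^+k_j^-=0$ for every $j$, this value is of one of the four shapes \eqref{xsa}, \eqref{xsb}, \eqref{xsc}, \eqref{xsd}. Lemma \ref{xil} identifies \eqref{xsc} with \eqref{xsb} up to an explicit sign. For \eqref{xsd} with $k,l\geq 1$, identity \eqref{xsi} (specialized to $\pm 1$) reduces it to a non-zero scalar multiple of \eqref{xsa}, since the prefactor $\prod_j(t_j^2+t_j^{-2}+1)$ evaluates to a power of $3$ at $t_j=\pm 1$. The remaining boundary cases $k=0$ or $l=0$ of \eqref{xsd} come from \eqref{wdp} combined with Lemma \ref{xil}. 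The two surviving shapes \eqref{xsa} and \eqref{xsb} are then precisely what Lemmas \ref{nvcl} and \ref{nvdl} control.

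To handle the polynomial factor $\chi_1$ it is enough to exhibit a single specialization at which it does not vanish. Setting $t_1=\dots=t_m=1$ transforms $\chi_1$ into a character of the same form as $\chi_2$, namely $\chi(1^{(m+k_0^+)},(-1)^{(k_3^+)};1^{(k_0^-)},(-1)^{(k_3^-)})$, whose non-vanishing follows from the reductions of the previous paragraph. Combining this with the non-vanishing of $\chi_2$ and the scalar/rational prefactors in \eqref{tcls} shows that the left-hand side of \eqref{tcls}, and hence $T_n^{(\mathbf k)}$ itself, is not identically zero.

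The genuinely delicate work --- the sign-controlled induction on $k+l$ driven by the trigonometric recursions \eqref{xrb} and \eqref{xrd} --- is already packaged in Lemmas \ref{nvcl} and \ref{nvdl}, so no deep new obstacle is expected here. The only thing to be careful about is the bookkeeping: identifying which of the four shapes \eqref{xsa}--\eqref{xsd} actually occurs in each of $\chi_1$ and $\chi_2$ for a given $\mathbf k$, and checking that the auxiliary prefactor introduced by \eqref{xsi} remains non-zero under the chosen specializations. This is routine.
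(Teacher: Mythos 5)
Your argument is the paper's own: reduce via \eqref{tcls} to the non-vanishing of the specialized characters, dispose of the ``pure'' shapes \eqref{xsa}--\eqref{xsd} by Lemma \ref{xil}, \eqref{xsi}, \eqref{wdp} and Lemmas \ref{nvcl}--\ref{nvdl}, and specialize the remaining free variables to see that the polynomial factor is not identically zero. However, your one-line treatment of $\chi_1$ has a concrete flaw: setting $t_1=\dots=t_m=1$ gives $\chi\bigl(1^{(m+k_0^+)},(-1)^{(k_3^+)};1^{(k_0^-)},(-1)^{(k_3^-)}\bigr)$, and this is \emph{not} ``of the same form as $\chi_2$'' whenever $k_0^->0$ and $m>0$: it then has $1$'s in both groups, which matches none of \eqref{xsa}--\eqref{xsd}, so the reductions you cite do not apply to it as stated. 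For example, with $\mathbf k=(-1,0,0,0)$ your specialization produces $\chi(1^{(m)};1)$, and with $\mathbf k=(-1,0,0,-1)$, $n=0$, $m=2$, it produces $\chi(1,1;1,-1)$; neither is covered by your ``previous paragraph''.

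The fix is small but is exactly the bookkeeping you declared routine. When $k_0<0$ and $k_3\geq 0$, specialize $t_j=-1$ instead, landing in \eqref{xsc} (hence \eqref{xsb} via Lemma \ref{xil}). When $k_0<0$ and $k_3<0$ with $m>0$, no $\pm1$ specialization lands in the four shapes directly; one should first apply \eqref{xsi} with the $t_j$ still generic, i.e.\ strip the pair $1,-1$ from the second group of $\chi(t_1,\dots,t_m;1^{(-k_0)},(-1)^{(-k_3)})$ at the cost of the non-vanishing polynomial prefactor $\prod_j(t_j^2+t_j^{-2}+1)$, and only then specialize (iterating with \eqref{osia} and Lemma \ref{xil} if both groups remain mixed). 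To be fair, the paper is equally terse at this point --- it simply asserts that non-vanishing of \eqref{xsa}--\eqref{xsd} suffices --- so this is a gap in the case analysis rather than in the method; but as written your claim that the specialized $\chi_1$ is always one of the four handled shapes is false for part of the parameter range, and the corollary is precisely about all $\mathbf k$, including those with negative entries.
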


We can now prove Theorem \ref{dt}. By general linear algebra, if  
$L_1,\dots ,L_M$ are  linear functionals on  a vector space $V$ of dimension $N\geq M$, then
\begin{multline}\label{dkt}\dim\left(\bigcap_{j=1}^M\Ker(L_j)\right)=N-M\\
\Longleftrightarrow\quad
\left(\id^{\otimes (N-M)}\otimes L_1\otimes\dots\otimes L_M\right)V^{\wedge N}\neq \{0\}.
\end{multline}
We apply this to the situation described in Lemma \ref{kl}, where 
$\Theta_n^{\mathbf k}$ is obtained from $\Theta_{n+|\mathbf k^-|}$ by imposing a number of linear conditions. By construction (see the proof of
Proposition \ref{tnkp}), the space corresponding to the
right side of \eqref{dkt} is spanned by \eqref{ukl}. Thus,  Theorem \ref{dt}
follows from Corollary \ref{tnvc}.

\section{The hyperbolic cusp $\zeta=0$}
\label{ntcs}

In this section we study the behaviour of $T_n^{(\mathbf k)}$ at $\zeta=0$
(the behaviour at the other two hyperbolic cusps 
follows using Corollary \ref{sscc}).
This might at first seem  quite easy since, as $\zeta\rightarrow 0$,
$$G(x,y)=2xy(x+y-1)+\mathcal O(\zeta),$$
and thus \eqref{tn} reduces to the Cauchy determinant.  It follows that
$$T(x_1,\dots,x_{2N};-)\bigg|_{\zeta=0}=2^{N(N-1)}x_1^{N-1}\dotsm x_{2N}^{N-1}, 
$$
which is more elementary than the symplectic character appearing for $\zeta=-2$.
However, the situation is  more complicated when
some variables are specialized to 
$\xi_1$ or $\xi_2$, which tend to $0$ as $\zeta\rightarrow 0$.

\subsection{Statement of results}
\label{hsrs}

The  behaviour of $T_n^{(\mathbf k)}$ at $\zeta=0$ is   different
in different   regimes for the
 discrete parameters. Recall that $n,k_0,k_1,k_2,k_3\in\mathbb Z$ 
are such that  $m=2n-\sum_j k_j\in\mathbb Z_{\geq 0}$. We say that these parameters belong to regime A, B and C if, respectively,
\begin{align}\tag{A}  |k_1+k_2+1|&\leq m+|k_0+k_3+1|,\\
\tag{B} k_1+k_2+1&\leq -(m+|k_0+k_3+1|)
,\\
\tag{C} k_1+k_2+1&\geq m+|k_0+k_3+1| .
\end{align}
For fixed parameters, we define
$$ L=\begin{cases}(k_1+k_2)(2n-k_1-k_2-1) & \text{in regime A},\\
 (k_1+k_2)(2n-k_1-k_2-1)+(n+1)(n-k_0-k_3) & \text{in regime B},\\
(k_1+k_2)(2n-k_1-k_2-1)+(n+1-m)(k_1+k_2-n)& \text{in regime C}.
\end{cases}
$$
This is consistent at the overlaps of the regimes, as can be seen
 from 
\begin{align*}(n+1)(n-k_0-k_3)&=\frac{(k_1+k_2+1+m)^2-(k_0+k_3+1)^2}4,\\
(n+1-m)(k_1+k_2-n)&=\frac{(k_1+k_2+1-m)^2-(k_0+k_3+1)^2}4.
\end{align*}

We will denote by
$J(N;a,b,c,d)$ the subset of $\{0,1,\dots,N\}$ consisting of the $a$ smallest and  $b$ largest odd elements as well as the $c$ smallest and $d$ largest even elements.
We can then formulate the main result of \S \ref{ntcs} as follows.
By $\sim$, we mean that equality holds up to a  factor independent  
 of the variables $x_j$.
It is clear from the proof that this factor
can be computed explicitly, but for the sake of simplicity we have not done so.

\begin{theorem}\label{ht}
The function $T_n^{(\mathbf k)}$ is exactly $\mathcal O(\zeta^L)$ as $\zeta\rightarrow 0$, that is,  $C_n^{(\mathbf k)}=\lim_{\zeta\rightarrow 0}T_n^{(\mathbf k)}/\zeta^L$ exists and is not identically zero. 
When the parameters are in regime A and $k_0+k_3+1\geq 0$, let
$N=2(n+k_1^-+k_2^-)$. Then,
\begin{subequations}\label{ca}
\begin{multline}C_n^{(\mathbf k)}\sim\frac 1{\Delta(\mathbf x)}\prod_{j=1}^m\frac{x_j^{n-k_1-k_2-1}}{(1-x_j)^{k_0+k_3+1}(1-2x_j)^{2k_2^-}}\sum_{\lambda_1,\dots,\lambda_m=0}^{N}\Delta(\boldsymbol \lambda)\\
\times\prod_{j=1}^m\left(\binom{N}{\lambda_j}(1-2x_j)^{\lambda_j}\prod_{\mu\in J(N;k_2^+,k_1^+,k_2^-,k_1^-)}(\lambda_j-\mu)\right).\end{multline}
When the parameters are in regime A and $k_0+k_3+1\leq 0$, let
$N=2(n+k_1^-+k_2^--k_0-k_3-1)$. Then,
\begin{multline} C_n^{(\mathbf k)}\sim\frac 1{\Delta(\mathbf x)}\prod_{j=1}^m\frac{x_j^{n+k_0+k_3-k_1-k_2}}{(1-2x_j)^{2k_2^-}}\sum_{\lambda_1,\dots,\lambda_m=0}^{N}\Delta(\boldsymbol \lambda)\\
\times\prod_{j=1}^m\left(\binom{N}{\lambda_j}(1-2x_j)^{\lambda_j}(1+2(-1)^{\lambda_j})\prod_{\mu\in J(N;k_2^+,k_1^+,k_2^-,k_1^-)}(\lambda_j-\mu)\right).\end{multline}
\end{subequations}
When the parameters are in regime B,
$$C_n^{(\mathbf k)}\sim\prod_{j=1}^m\frac{x_j^{k_0+k_3-k_1-k_2-1}(1-2x_j)^{2k_2+1}}{(1-x_j)^{n+1}}\prod_{1\leq i<j\leq m}(x_i+x_j-1). $$
Finally, when the parameters are in regime C,
$$C_n^{(\mathbf k)}\sim\prod_{j=1}^m{x_j^{k_0+k_3}(1-x_j)^{k_1+k_2-n}}\prod_{1\leq i<j\leq m}(x_i+x_j-1). $$
\end{theorem}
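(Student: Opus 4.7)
The plan is to start from the determinant representation \eqref{sdd} of $T(\mathbf x;\mathbf y)$ together with the definition \eqref{tnk} of $T_n^{(\mathbf k)}$, and to extract the leading behaviour in $\zeta$ by careful row reductions. The crucial geometric fact is that at $\zeta=0$ the four values $\xi_j$ collapse pairwise: by Lemma \ref{xvl}, $\xi_0=1+2\zeta$ and $\xi_3=1$ both tend to $1$, while $\xi_1,\xi_2\to 0$ with $\xi_1-\xi_2=-2\zeta^2/(\zeta+2)$. Hence the rows of the matrix in \eqref{sdd} that are specialised to these values coalesce as $\zeta\to 0$, forcing the numerator to vanish to some order in $\zeta$. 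The task is to compute this order (and verify it equals $L$) and to identify the leading coefficient.

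First I would replace each block of coalescing rows by iterated finite differences in $\zeta$, exploiting $\xi_0-\xi_3=2\zeta$ between the ``unit'' cluster and $\xi_1-\xi_2=\mathcal O(\zeta^2)$ between the ``zero'' cluster. Rows of the form $(\hat\sigma P_j)(\xi_j)$ carry an extra factor of $\zeta$ from $e_0(0),e_1(0)=\mathcal O(\zeta)$; this must be tracked since it shifts the effective derivative orders available in the matrix. After these reductions the matrix becomes regular at $\zeta=0$, and the total extracted power of $\zeta$, combined with the denominators $\prod G(\xi_i,\xi_j)^{k_i^-k_j^+}$ in \eqref{tnk}, equals $L$. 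The three regime conditions correspond to whether the ``derivative-order budget'' at $0$ is balanced (regime A) or exceeds the matrix size in one direction (regimes B, C), the critical counts being $k_1+k_2+1$ (slots at $0$) against $m+|k_0+k_3+1|$ (slots at $1$ and free variables). Consistency of the three formulas for $L$ on the overlaps provides a useful sanity check.

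Next I would take the limit $\zeta\to 0$ and identify the limiting determinant. In regime A, Laplace expansion across the free variables produces the sum in \eqref{ca}: the $\binom{N}{\lambda_j}(1-2x_j)^{\lambda_j}$ weights come from the binomial expansion of $((\zeta+2)x-\zeta)^N=\sum_\lambda\binom{N}{\lambda}(2x)^{N-\lambda}(2x-1)^\lambda\zeta^\lambda$ appearing in $P_j$, the Vandermonde $\Delta(\boldsymbol\lambda)$ arises from a Vandermonde-type minor in the derivative-order indices, and the product $\prod_{\mu\in J}(\lambda_j-\mu)$ records the derivative orders already consumed by the specialised rows. The two subcases in \eqref{ca} reflect whether the $\xi_0,\xi_3$-cluster contributes $k_0+k_3+1$ derivatives at $x=1$ or, conversely, absorbs all such derivatives and forces a factor $(1-x_j)^{-(k_0+k_3+1)}$ to appear. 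In regimes B and C, there is excess coalescence that cannot be absorbed by derivatives alone, so additional linear dependencies persist among the rows indexed by $\mathbf x$; these are extracted as $\prod_{i<j}(x_i+x_j-1)$ via the factorisation $G(x,y)|_{\zeta=0}=2xy(x+y-1)$ combined with the Cauchy-determinant form \eqref{tn}, while the remaining monomial prefactors in the $x_j$ are read off from the diagonal values of the unreduced rows.

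The main obstacle will be the combinatorial bookkeeping: determining $L$ in each regime and identifying exactly which derivative orders are forbidden (i.e.\ which $\mu$ lie in $J(N;k_2^+,k_1^+,k_2^-,k_1^-)$) requires a careful treatment of the interaction between $P_j$- and $\hat\sigma P_j$-rows with $j$ of prescribed parity, because $\hat\sigma$ permutes the zero order at $x=0$ and at $x=1$ depending on parity. Non-vanishing of $C_n^{(\mathbf k)}$ is transparent in regimes B and C from the displayed product form. In regime A it reduces to the non-vanishing of a determinant obtained by Cauchy--Binet from the sum in \eqref{ca}, which can be verified by extracting the leading monomial in $\mathbf x$; alternatively, one may invoke Corollary \ref{tnvc} from the trigonometric analysis to deduce non-vanishing at a single regular point of $\zeta$, and then observe that the limit $C_n^{(\mathbf k)}$ has the correct polynomial structure to be identically nonzero.
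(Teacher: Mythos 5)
Your starting point (the determinant \eqref{sdd}, the pairwise collapse $\xi_0,\xi_3\rightarrow 1$ and $\xi_1,\xi_2\rightarrow 0$ with rates $\xi_0-\xi_3=2\zeta$, $\xi_1-\xi_2=\mathcal O(\zeta^2)$, the extra $\zeta$ from $e_0(0),e_1(0)$) matches the geometry behind the paper's proof, but the confluent/finite-difference scheme you sketch breaks down exactly where you defer to ``combinatorial bookkeeping''. The dominant powers of $\zeta$ at the small cluster are column-dependent and come from $P_j$ itself: at a point $x=\zeta/2+\mathcal O(\zeta^2)$ one has $(\zeta+2)x-\zeta=\mathcal O(\zeta^2)$ (indeed $(\zeta+2)\xi_1-\zeta=0$ exactly), and the normalized limits \eqref{pjh} vanish identically for one parity of $j$; such cancellations mean that any power of $\zeta$ you extract by row operations is only a lower bound on the order of vanishing, and the theorem's content is precisely that the bound $L$ is attained, i.e.\ that the reduced matrix is nonsingular at $\zeta=0$. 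The paper achieves this not by expanding at the exact $\zeta$-dependent points $\xi_j$, but by keeping a free parameter at each of the twelve lattice-translate branches \eqref{xlh}, computing the $\zeta\to 0$ limit as a function of these parameters (the symmetric functions $\psi^{(k)}$ of \eqref{psk}--\eqref{pskn} via \eqref{tesa}, \eqref{tesb}, \eqref{tese}), and only then sending the parameters to $1$, $0$ or $\infty$ using Lemma \ref{pssl} and Corollary \ref{pcor}; the regime-dependent table in \S\ref{tnkhs} chooses which branch realizes each $\xi_j$ precisely so that the resulting leading coefficient is provably nonzero. In regimes B and C the naive coalescence count undershoots $L$ (the extra terms $(n+1)(n-k_0-k_3)$ and $(n+1-m)(k_1+k_2-n)$), and the proof there hinges on the factorization \eqref{tese}, valid only when an odd number of variables sit at order-$\zeta$ points, together with Lemma \ref{opsl} (which is where $\prod_{i<j}(x_i+x_j-1)$ actually comes from) and the criterion \eqref{uvc}, whose validity is exactly the regime inequality. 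Your plan contains no substitute for this apparatus, and attributing the extra vanishing in B, C to ``dependencies among the rows indexed by $\mathbf x$'' extracted from \eqref{tn} does not quantify the extra power of $\zeta$ or prove exactness.

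Two further concrete problems. First, your fallback for non-vanishing via Corollary \ref{tnvc} is a non sequitur: that corollary only says $T_n^{(\mathbf k)}\not\equiv 0$ as a function of $(\mathbf x,\zeta)$, which gives no information about the order of vanishing at $\zeta=0$ and cannot show that $\lim_{\zeta\to 0}T_n^{(\mathbf k)}/\zeta^L$ is nonzero for your specific $L$. Second, the mechanism you give for the weights in \eqref{ca} is wrong in detail: $(\zeta+2)x-\zeta=2x+(x-1)\zeta$, so its $\zeta$-expansion produces $(x-1)^{\lambda}$, not $(1-2x)^{\lambda}$, and the exponent of this factor in $P_j$ depends on $j$; in the paper the factors $\binom{N}{\lambda_j}(1-2x_j)^{\lambda_j}$, the Vandermonde $\Delta(\boldsymbol\lambda)$ and the set $J$ all arise from the explicit expansion of $\psi^{(k)}$ in the variable $s_j=2x_j-1$ (Lemma \ref{mbl}, Corollaries \ref{psec} and \ref{pcor}), after the auxiliary variables at the small cluster have been sent to $0$ or $\infty$.
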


As will be clear from the proof, there are also determinant formulas  for $C_n^{(\mathbf k)}$ that may seem more transparent than \eqref{ca}. One advantage of the expressions \eqref{ca} is that they are easily seen 
not to vanish identically; cf.\ the proof of
Corollary \ref{pcor}.
Consequently, to prove Theorem \ref{ht} it is enough to prove the given expressions for $C_n^{(\mathbf k)}$. That is the goal of the remainder of \S \ref{ntcs}.

\subsection{Lattice translations  in the hyperbolic limit}

We need to understand the uniformization of 
translations by the lattice \eqref{lal} in the hyperbolic limit.
Consider the values $x(z+\gamma)$, where
$\gamma\in\Lambda $ and $(z,\tau)$ is changing in such a way that
$x=x(z,\tau)$ is fixed as $\zeta=\zeta(\tau)\rightarrow 0$. 
By  periodicity, we may reduce $\gamma$ by $\mathbb Z+\tau\mathbb Z$, leading
to twelve distinct values.
By Lemma \ref{hpl} and Lemma \ref{mll},   $x(z+1/2)$ behaves as $x/(2x-1)$ and one of the two branches
of $x(z\pm 1/3)$ as $1-x$. It will be convenient to write $x=(y+1)/2$, so that the corresponding transformations of $y$ are 
$y\mapsto 1/y$ and $y\mapsto-y$. Then, the
 other branch of $x(z\pm 1/3)$ behaves as
\begin{subequations}\label{xlh}
\begin{equation}\label{xph}\frac\zeta 2+\frac {y^2+3}{4(1-y^2)}\,\zeta^2+\mathcal O(\zeta^3). \end{equation}
By Lemma \ref{hpl},
\begin{equation}x\left(z+\frac\tau 2\right)=\frac{\zeta}{y+1}+\mathcal O(\zeta^2).\end{equation}
Combining these facts we find that one branch of
$x(z+\tau/2\pm 1/3)$ behaves as
\begin{equation}1+\frac{2y^2}{y^2-1}\,\zeta+\mathcal O(\zeta^2).\end{equation}
\end{subequations}
In conclusion, the twelve values  $x(z+\gamma)$ can be identified with 
the fixed point $(y+1)/2$, the three moving points \eqref{xlh}
and the additional three fixed and five moving points obtained from those four 
by replacing $y$ by $- y$ or $\pm y^{-1}$.

\subsection{The polynomials $P_j$ in the hyperbolic limit}
 
We will now investigate the limit  of the polynomials $P_j$, defined in \eqref{psb},  when the variable is fixed or specialized as in \eqref{xlh}.
Let  $f_j(y)=(1+y)^{[j/2]}(1-y)^{[(j-1)/2]}$. 
By a straightforward computation, we have as $\zeta\rightarrow 0$
\begin{subequations}\label{pjh}
\begin{align}\frac{2^{\left[\frac{3(j-1)}2\right]}P_j\left(\frac{y+1}2\right)}{(y+1)^{n-1}}&\rightarrow f_j(y),\\
\frac{2^{\left[\frac{3(j-1)}2\right]+1}(\hat\sigma P_j)\left(\frac{y+1}2\right)}{(y+1)^n}&\rightarrow(1-y)f_j(y)-2(1+y)f_j(-y),\end{align}
\begin{align}
\intertext{$\displaystyle\frac{2^{\left[\frac{3(j-1)}2\right]+2-n}y(1-y^2)^{n-1}P_j\left(
\frac\zeta 2+\frac {(y^2+3)\zeta^2}{4(1-y^2)}
\right)}{\zeta^{2n-2}}$}
&\rightarrow(1-y)f_j(y)-(1+y)f_j(-y),
\intertext{$\displaystyle\frac{2^{\left[\frac{3(j-1)}2\right]+1-n}(1-y^2)^{n}(\hat\sigma P_j)\left(\frac\zeta 2+\frac {(y^2+3)\zeta^2}{4(1-y^2)}\right)}{\zeta^{2n}}$}
&\rightarrow(1-y)f_j(y)+(1+y)f_j(-y).
\end{align}
Combining these relations with  Lemma \ref{pfl} 
gives
\begin{align}\MoveEqLeft
\frac{(-1)^{n+1}2^{2\left[\frac{j-1}{2}\right]+1-n}(y+1)^{2n-1}P_j\left(\frac\zeta{y+1}\right)}{\zeta^{\left[\frac j2\right]+n-1}}\rightarrow
f_{2n+1-j}(y)\\
\intertext{$\displaystyle\frac{(-1)^{n}2^{2\left[\frac{j-1}2\right]+1-n}(y+1)^{2n}(\hat\sigma P_j)\left(\frac\zeta{y+1}\right)}{\zeta^{\left[\frac j2\right]+n}}$}
&\rightarrow
(1-y)f_{2n+1-j}(y)-2(1+y)f_{2n+1-j}(-y)
\intertext{$\displaystyle\frac{(-1)^{n}2^{2\left[\frac{j-1}2\right]+3-2n}y(1-y^2)^{n-1}P_j\left(1-\frac{2y^2\zeta}{1-y^2}\right)}{\zeta^{\left[\frac j2\right]-1}}$}
&\rightarrow(1-y)f_{2n+1-j}(y)-(1+y)f_{2n+1-j}(-y),
\intertext{$\displaystyle\frac{(-1)^{n}2^{2\left[\frac{j-1}2\right]+1-2n}(1-y^2)^{n}(\hat\sigma P_j)\left(1-\frac{2y^2\zeta}{1-y^2}\right)}{\zeta^{\left[\frac j2\right]}}$}
&\rightarrow(1-y)f_{2n+1-j}(y)+(1+y)f_{2n+1-j}(-y).
 \end{align}
\end{subequations}

\subsection{Hyperbolic limit of the polynomials $T$}
\label{hts}

We will now consider the limit of the polynomials $T$, when
each variable is fixed or specialized as in \eqref{xlh}. That is, we
 consider the quantity
\begin{multline}\label{tes}
T\left(\frac{1+\mathbf s}2,
\frac\zeta 2+\frac {(\mathbf t^2+3)\zeta^2}{4(1-\mathbf t^2)},
\frac{\zeta}{\mathbf u+1},
1+\frac{2\mathbf v^2\zeta}{\mathbf v^2-1};\right.\\
\left.\frac{1+\mathbf w}2,\frac\zeta 2+\frac {(\mathbf x^2+3)\zeta^2}{4(1-\mathbf x^2)},\frac{\zeta}{\mathbf y+1},1+\frac{2\mathbf z^2\zeta}{\mathbf z^2-1}
\right),
\end{multline}
where $(1+\mathbf s)/2=((1+s_1)/2,\dots,(1+s_S)/2)$ and so on, with the total number of variables
$S+T+U+V+W+X+Y+Z=2N$.

We  express \eqref{tes} using the  determinant formula \eqref{sdd}.
Up to a  numerical factor, 
the leading term of the denominator is
\begin{multline*}\zeta^{2\binom T2+\binom U2+\binom V2+2\binom X2+\binom Y2+\binom Z2+TU+XY}\\
\begin{split}&\times\frac{\prod_{j=1}^S (1+s_j)^{T+U}(1-s_j)^V\prod_{j=1}^U(1-u_j)^T}{\prod_{j=1}^T(1-t_j^2)^{T-1}\prod_{j=1}^U(1+u_j)^{T+U-1}\prod_{j=1}^V(1-v_j^2)^{V-1}}\\
&\times\frac{\prod_{j=1}^W(1+w_j)^{X+Y}(1-w_j)^Z\prod_{j=1}^Y(1-y_j)^X}{\prod_{j=1}^X(1-x_j^2)^{X-1}\prod_{j=1}^Y(1+y_j)^{X+Y-1}\prod_{j=1}^T(1-z_j^2)^{Z-1}}\\
&\times\Delta(\mathbf s)\Delta(\mathbf t^2)\Delta(\mathbf u)\Delta(\mathbf v^2)\Delta(\mathbf w)\Delta(\mathbf x^2)\Delta(\mathbf y)\Delta(\mathbf z^2).
 \end{split}\end{multline*}
In the numerator, we replace each matrix element by its leading Taylor coefficient at $0$, which we obtain from \eqref{pjh}.
After reordering the rows and using 
$$\left[\frac{3(j-1)}2\right]-2\left[\frac{j-1}2\right]=\left[\frac j2\right],$$ this results in a numerical factor times
\begin{multline}\label{ed}\frac{\prod_{j=1}^S(1+s_j)^{N-1}}
{\prod_{j=1}^Tt_j(1-t_j^2)^{N-1}\prod_{j=1}^U(1+u_j)^{2N-1}\prod_{j=1}^Vv_j(1-v_j^2)^{N-1}}\\
\times\frac{\prod_{j=1}^W(1+w_j)^N}{\prod_{j=1}^X(1-x_j^2)^{N}\prod_{j=1}^Y(1+y_j)^{2N}\prod_{j=1}^Z(1-z_j^2)^{N}}\, \zeta^{2(N-1)T+(N-1)U-V+2NX+NY}\\
\times\det\left(\begin{matrix}f_j(s_i)\\
(1-w_i)f_j(w_i)-2(1+w_i)f_j(-w_i)\\
(1-t_i)f_j(t_i)-(1+t_i)f_j(-t_i)\\
(1-x_i)f_j(x_i)+(1+x_i)f_j(-x_i)\\
\zeta^{\left[\frac{j}2\right]}2^{\left[\frac{j}2\right]}f_{2N+1-j}(u_i)\\
\zeta^{\left[\frac{j}2\right]}2^{\left[\frac{j}2\right]}\big((1-y_i)f_{2N+1-j}(y_i)-2(1+y_i)f_{2N+1-j}(-y_i)\big)\\
\zeta^{\left[\frac{j}2\right]}2^{\left[\frac{j}2\right]}\big((1-v_i)f_{2N+1-j}(v_i)-(1+v_i)f_{2N+1-j}(-v_i)\big)\\
\zeta^{\left[\frac{j}2\right]}2^{\left[\frac{j}2\right]}\big((1-z_i)f_{2N+1-j}(z_i)+(1+z_i)f_{2N+1-j}(-z_i)\big)
\end{matrix}\right).\end{multline}
Here, each entry represents a block with  as many rows as the length of the corresponding vector variable and $2N$ columns.

Let us now introduce some notation.
For $k\in \mathbb Z_{\geq 0}$, $\mathbf s$, $\mathbf t$, $\mathbf u$, $\mathbf v$ vectors
and $p_j$ polynomials of degree $j-1$ we will write
\begin{align}\nonumber\psi^{(k)}(\mathbf s;\mathbf t;\mathbf u;\mathbf v)&\sim\frac 1{\prod_{j}u_j\Delta(\mathbf s)\Delta(\mathbf t)\Delta(\mathbf u^2)\Delta(\mathbf v^2)}\\
\label{psk}&\quad\times\det\left(\begin{matrix}p_j(s_i)\\
(1-t_i)^kp_j(t_i)-2(1+t_i)^kp_j(-t_i)\\
(1-u_i)^kp_j(u_i)-(1+u_i)^kp_j(-u_i)\\
(1-v_i)^kp_j(v_i)+(1+v_i)^kp_j(-v_i)
\end{matrix}\right).
 \end{align}
 The freedom to vary $p_j$ means that $\psi^{(k)}$ is defined only  up to 
a non-zero constant factor. It can be fixed  by choosing e.g.\ $p_j(x)=x^{j-1}$, but we prefer not to do so.
 By standard  arguments, $\psi^{(k)}$ is a symmetric polynomial in each of the four groups of variables.

We will consider three special cases of \eqref{ed}, the first one being
$U=V=Y=Z=0$. Since $f_j$ is a polynomial of degree $j-1$, \eqref{tes}
is equal to
\begin{equation}\label{tesa}
\frac{\prod_{j=1}^S(1+s_j)^{N-1-T}\prod_{j=1}^W(1+w_j)^{N-X}}{\prod_{j=1}^{T}(1-t_j^2)^{N-T}\prod_{j=1}^X(1-x_j^2)^{N-X+1}}\,\psi^{(1)}(\mathbf s;\mathbf w;\mathbf t;\mathbf x)\zeta^K+\mathcal O(\zeta^{K+1}),
\end{equation}
where
$K=(2N-T-1)T+(2N-X+1)X$.

Next, 
when  $S=T=W=X=0$, we may pull out the factor $\zeta^{\sum_{j=1}^{2N}[j/2]}=\zeta^{N^2}$ from \eqref{ed}, and conclude that \eqref{tes}    equals
\begin{multline}\label{tesb}\frac{1}{\prod_{j=1}^U(1+u_j)^{2N-U}\prod_{j=1}^V(1-v_j^2)^{N-V}\prod_{j=1}^Y(1+y_j)^{2N-Y+1}\prod_{j=1}^Z(1-z_j^2)^{N-Z+1}} \\
\times\psi^{(1)}(\mathbf u;\mathbf y;\mathbf v;\mathbf z)\zeta^K+\mathcal O(\zeta^{K+1}),\end{multline}
where now
$$K=N^2+(N-1)U-V+NY-\binom U2-\binom V2-\binom Y2-\binom Z2. $$
This also follows from \eqref{tesa} using the symmetry \eqref{sic}.

Finally, we consider \eqref{ed} when
\begin{equation}\label{oc}
U+V+Y+Z=2M+1\end{equation} is odd. We pull out the factor 
$\zeta^{[j/2]-M}$ from the $2M+1$ first columns, and $\zeta^M$ from the $2M+1$ last rows, giving in total a factor $\zeta^{M(M+1)}$. 
When $\zeta=0$, the resulting matrix is regular, with  the lower right
$(2M+1)\times(2N-2M-1)$ block vanishing. Thus, the determinant factors as the product of the lower left and the upper right block. 
 Again ignoring a numerical factor, we find that the determinant in \eqref{ed} equals
\begin{multline}\label{ptd}\zeta^{M(M+1)}\det_{2M+2\leq j\leq 2n}\left(\begin{matrix}f_j(s_i)\\
(1-t_i)f_j(t_i)-(1+t_i)f_j(-t_i)\\
(1-w_i)f_j(w_i)-2(1+w_i)f_j(-w_i)\\
(1-x_i)f_j(x_i)+(1+x_i)f_j(-x_i)
\end{matrix}\right)\\
\times
\det_{2n-2M\leq j\leq 2n}\left(\begin{matrix}
f_j(u_i)\\
(1-v_i)f_j(v_i)-(1+v_i)f_j(-v_i)\\
(1-y_i)f_j(y_i)-2(1+y_i)f_j(-y_i)\\
(1-z_i)f_j(z_i)+(1+z_i)f_j(-z_i)
\end{matrix}\right). \end{multline}

We now use that
$$f_{2M+1+j}(x)=(1-x)^M(1+x)^{M+1}f_j(-x). $$ 
Thus, the first determinant in \eqref{ptd}
is equal to
\begin{multline*}\prod_{j=1}^{S}(1+s_j)(1-s_j^2)^M\prod_{j=1}^T(1-t_j^2)^{M+1}
\prod_{j=1}^W(1-w_j^2)^{M+1}\prod_{j=1}^X(1-x_j^2)^{M+1}\\
\times\det_{1\leq j\leq 2n-2M-1}\left(\begin{matrix}f_j(-s_i)\\
f_j(-t_i)-f_j(t_i)\\
f_j(-w_i)-2f_j(w_i)\\
f_j(-x_i)+f_j(x_i)
\end{matrix}\right).
 \end{multline*}
Since  a similar identity holds for the second factor, \eqref{tes} is equal to
\begin{align}\notag\label{tese}\MoveEqLeft{\frac{\prod_{j=1}^s(1+s_j)^{M+N-T-U}(1-s_j)^{M-V}\prod_{j=1}^U(1-u_j)^{N-M-T-1}
}{\prod_{j=1}^T(1-t_j^2)^{N-M-T-1}\prod_{j=1}^U(1+u_j)^{M+N-T-U}\prod_{j=1}^V(1-v_j^2)^{M-V}}}\\
\notag &\times\frac{\prod_{j=1}^W(1+w_j)^{M+N-X-Y+1}(1-w_j)^{M-Z+1}\prod_{j=1}^Y(1-y_j)^{N-M-X}}{\prod_{j=1}^X(1-x_j^2)^{N-M-X}\prod_{j=1}^Y(1+y_j)^{M+N-X-Y+1}\prod_{j=1}^Z(1-z_j^2)^{M-Z+1}}\\
&\times \psi^{(0)}(\mathbf s;\mathbf w;\mathbf t;\mathbf x)\psi^{(0)}(\mathbf u;\mathbf y;\mathbf v;\mathbf z)\,\zeta^K+\mathcal O(\zeta^{K+1}),
\end{align}
where
\begin{multline*}K=2(N-1)T+(N-1)U-V+2NX+NY+M(M+1) \\
-2\binom T2-\binom U2-\binom V2-2\binom X2-\binom Y2-\binom Z2-TU-XY. 
\end{multline*}
This is reminiscent of Proposition \ref{tdtp}.
There is in general no such factorization when the left-hand side of \eqref{oc} is even. We stress that
\eqref{tesa}, \eqref{tesb} and
\eqref{tese} do \emph{not} always give the leading behaviour of \eqref{tes}, since  the polynomials $\psi^{(0)}$  and $\psi^{(1)}$ may vanish identically. Still, 
as we will see in \S \ref{tnkhs}, these identities
 together determine the leading behaviour of $T_n^{(\mathbf k)}$.

\subsection{The symmetric functions $\psi^{(k)}$}
\label{psks}

We will need some  properties of the symmetric polynomials
$\psi^{(k)}$ defined for $k\geq 0$ in \eqref{psk}. We 
extend the definition to $k\leq 0$ by 
\begin{align}\nonumber\psi^{(k)}(\mathbf s;\mathbf t;\mathbf u;\mathbf v)&\sim\frac{X(\mathbf s,\mathbf t,\mathbf u,\mathbf v)^k}{\prod_{j}u_j\Delta(\mathbf s)\Delta(\mathbf t)\Delta(\mathbf u^2)\Delta(\mathbf v^2)}\\
\label{pskn}&\quad\times\det\left(\begin{matrix}(1-s_i)^{-k}p_j(s_i)+2(1+s_i)^{-k}p_j(-s_i)\\
p_j(t_i)\\
(1-u_i)^{-k}p_j(u_i)-(1+u_i)^{-k}p_j(-u_i)\\
(1-v_i)^{-k}p_j(v_i)+(1+v_i)^{-k}p_j(-v_i)
\end{matrix}\right),
 \end{align}
where
$$X(\mathbf s,\mathbf t,\mathbf u,\mathbf v)=\prod_j(1+s_j)\prod_j(1+t_j)\prod_j(1-u_j^2)\prod_j(1-v_j^2). $$
For consistency, we  must  show that  \eqref{psk} and \eqref{pskn} agree 
 up to a constant factor when $k=0$.
To this end, we note that if $q_j(x)=p_j(x)-2p_j(-x)$, then
\begin{align*}
p_j(x)&=-\frac 13\big(q_j(x)+2q_j(-x)\big),\\
p_j(x)-p_j(-x)&=\frac 13\big(q_j(x)-q_j(-x)\big),\\
p_j(x)+p_j(-x)&=-\big(q_j(x)+q_j(-x)\big)
\end{align*}
Replacing  $p_j$ by $q_j$ in  \eqref{pskn} then gives the desired consistency. 

The following result shows that \eqref{pskn} is the natural extension of \eqref{psk}.

\begin{lemma}\label{pssl}
For $k\in\mathbb Z$,
 \begin{align}\label{pssa}\psi^{(k)}(\mathbf s;\mathbf t;\mathbf u;\mathbf v)\bigg|_{s_1=1}&\sim\psi^{(k+1)}(\hat{\mathbf  s};\mathbf t;\mathbf u;\mathbf v),
 \\
   \label{pssb}\psi^{(k)}(\mathbf s;\mathbf t;\mathbf u;\mathbf v)\bigg|_{t_1=1}
 &\sim X(\mathbf s;\hat{\mathbf  t};\mathbf u;\mathbf v)
 \psi^{(k-1)}(\mathbf s;\hat{\mathbf  t};\mathbf  u;\mathbf v)
 \end{align}
where $\hat{\mathbf  s}=(s_2,s_3,\dots)$.
\end{lemma}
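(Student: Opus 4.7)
The strategy is to exploit the freedom in choosing the polynomials $(p_j)_{j=1}^{2N}$ of degree $j-1$ that enter the determinant formulas (\ref{psk}) and (\ref{pskn}). For identity (\ref{pssa}), I would take $p_1(x)=1$ and $p_j(x)=(x-1)q_{j-1}(x)$ for $j\geq 2$, where $q_j$ ranges over polynomials of degree $j-1$; the family $(q_j)_{j=1}^{2N-1}$ is then a valid data for $\psi^{(k+1)}(\hat{\mathbf s};\mathbf t;\mathbf u;\mathbf v)$. For identity (\ref{pssb}), I would instead take $p_j(x)=(x+1)q_{j-1}(x)$ for $j\geq 2$.

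For (\ref{pssa}) in the range $k\geq 0$, using (\ref{psk}): at $s_1=1$ the first $\mathbf s$-row becomes $(p_1(1),0,\dots,0)$, so the determinant expands along this row into a $(2N-1)\times(2N-1)$ minor. Since $p_j(x)=(x-1)q_{j-1}(x)$ and $p_j(-x)=-(x+1)q_{j-1}(-x)$, inserting these into the remaining rows reshapes each block: the remaining $\mathbf s$-rows yield a factor $(s_i-1)$; the $\mathbf t$-row expression $(1-t_i)^k p_j(t_i)-2(1+t_i)^k p_j(-t_i)$ becomes $-\bigl[(1-t_i)^{k+1}q_{j-1}(t_i)-2(1+t_i)^{k+1}q_{j-1}(-t_i)\bigr]$, and the $\mathbf u$- and $\mathbf v$-rows are handled analogously. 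The pulled-out product $\prod_{i\geq 2}(s_i-1)$ cancels exactly with $\Delta(\mathbf s)/\Delta(\hat{\mathbf s})$, yielding the determinant form for $\psi^{(k+1)}(\hat{\mathbf s};\mathbf t;\mathbf u;\mathbf v)$.

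For (\ref{pssb}) with $k\geq 1$, the analogous calculation sends the first $\mathbf t$-row at $t_1=1$ to $(-2^{k+1}p_1(-1),0,\dots,0)$ since $(1-t_1)^k=0$ and $p_j(-1)=0$ for $j\geq 2$. After expansion, each remaining $\mathbf t$-row produces a factor $(1-t_i)(1+t_i)$, each $\mathbf s$-row a factor $(1+s_i)$, and each $\mathbf u$- or $\mathbf v$-row a factor $(1-u_i^2)$ or $(1-v_i^2)$. The sub-product $\prod_{i\geq 2}(1-t_i)$ cancels (up to sign) against $\Delta(\mathbf t)/\Delta(\hat{\mathbf t})$, and the remainder of the pulled-out product is precisely $X(\mathbf s;\hat{\mathbf t};\mathbf u;\mathbf v)$. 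The residual determinant has exponent $k-1$ in every row type, giving the right-hand side $X(\mathbf s;\hat{\mathbf t};\mathbf u;\mathbf v)\psi^{(k-1)}(\mathbf s;\hat{\mathbf t};\mathbf u;\mathbf v)$.

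The remaining cases, (\ref{pssa}) for $k\leq -1$ and (\ref{pssb}) for $k\leq 0$, are treated by exactly the same manipulation but starting from (\ref{pskn}) instead of (\ref{psk}); in (\ref{pskn}) the ``bare'' row is the $\mathbf t$-block rather than the $\mathbf s$-block, and the extra overall factor $X^k$ in front interacts with the pulled-out $(1\pm x_i)$ powers to produce the appropriate shift. Consistency across $k=0$ is automatic from the identification of the two definitions at $k=0$ carried out just before the lemma. The main bookkeeping hazard I anticipate is to keep the signs straight when pulling $(x\mp 1)$-factors out of the four different row types, and to verify that the leftover linear factors combine exactly into $X(\mathbf s;\hat{\mathbf t};\mathbf u;\mathbf v)$ in case (\ref{pssb}) after the cancellation with the Vandermonde denominator in $\mathbf t$; but no deeper difficulty should appear, since the calculation is purely a bilinear manipulation of the determinant.
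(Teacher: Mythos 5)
Your proposal is correct and follows essentially the same route as the paper's own proof: you exploit the freedom in the polynomials $p_j$ by making them divisible by $x-1$ (for the $s_1=1$ case) or $x+1$ (for the $t_1=1$ case) so that the specialized row has a single nonzero entry, expand, pull the linear factors out of the remaining rows, and cancel against the Vandermonde factor, handling the remaining values of $k$ via \eqref{pskn} just as the paper disposes of them as ``similar''. The only (immaterial) difference is that you keep general cofactors $q_{j-1}$ where the paper fixes the explicit bases $p_j(x)=(1-x)^{j-1}$ and $p_j(x)=(x+1)^{j-1}$.
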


\begin{proof}
Since \eqref{psk} and
\eqref{pskn} agree for $k=0$,  we may assume that  $\psi^{(k)}$ and $\psi^{(k\pm 1)}$  are either both given by \eqref{psk} or both by
\eqref{pskn}. We consider only the first case, the second case being similar.

To prove \eqref{pssa} for $k\geq 0$, we choose $p_j(x)=(1-x)^{j-1}$ in \eqref{psk}. Expanding the determinant along the first row, only the first entry contributes to the value at $s_1=1$. 
This leads to a factor $\prod_{i\geq 2}(s_i-1)^{-1}$ from the denominator, times \eqref{psk} with $\mathbf s$ replaced by $\hat{\mathbf  s}$ and $p_j(x)$ replaced by $p_{j+1}(x)=(1-x)p_j(x)$. Pulling out the factors $1-s_i$ from the corresponding rows  gives \eqref{pssa}.

To prove  \eqref{pssb} for $k\geq 1$, we choose instead 
$p_j(x)=(x+1)^{j-1}$.
In this case, we can pull out the factors $1+s_i$,  $1-t_i^2$, $1-u_i^2$ and
$1-v_i^2$ from the corresponding rows.  Cancelling the factor
$t_i-1$ appearing from the denominator yields \eqref{pssb}. 
\end{proof}

The following simple fact will be useful.

\begin{lemma}\label{opsl}
Suppose $\mathbf s$,  $\mathbf u$, $\mathbf v$ are vectors of length  $S$,  $U$, $V$  respectively,  Then, if $V=S+U+1$,
$$\psi^{(0)}(\mathbf s;-;\mathbf u;\mathbf v)\sim\prod_{i=1}^Ss_i\prod_{1\leq i<j\leq S}(s_i+s_j)\prod_{{1\leq i\leq S,\, 1\leq j\leq U}}(s_i^2-u_j^2) $$
and if $U=S+V-1$,
$$\psi^{(0)}(\mathbf s;-;\mathbf u;\mathbf v)\sim\prod_{1\leq i<j\leq S}(s_i+s_j)\prod_{{1\leq i\leq S,\, 1\leq j\leq V}}(s_i^2-v_j^2). $$
\end{lemma}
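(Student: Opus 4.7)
The plan is to exploit the freedom to choose the polynomials $p_j$, which is noted just after \eqref{psk}. I would take $p_{2l-1}(x)=x^{2l-2}$ (even in $x$) and $p_{2l}(x)=x^{2l-1}$ (odd in $x$). With this choice, in the defining determinant \eqref{psk} specialised to $\mathbf t=\emptyset$ and $k=0$, the $\mathbf u$-row entries $p_j(u_i)-p_j(-u_i)$ vanish for odd $j$, while the $\mathbf v$-row entries $p_j(v_i)+p_j(-v_i)$ vanish for even $j$.

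Next I would permute the $N:=S+U+V$ columns so that odd-indexed columns come first; note that $N$ is odd in both cases of the lemma. The matrix then has the block shape
\[
\begin{pmatrix} A & B \\ 0 & C \\ D & 0 \end{pmatrix},
\]
with row blocks of heights $S$, $U$, $V$ corresponding to the $\mathbf s$-, $\mathbf u$-, $\mathbf v$-variables, and with $\lceil N/2\rceil$ odd-indexed columns on the left and $\lfloor N/2\rfloor$ even-indexed columns on the right. In Case~1 one has $V=\lceil N/2\rceil=S+U+1$, so $D$ is square; in Case~2 one has $U=\lfloor N/2\rfloor=S+V-1$, so $C$ is square. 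In either case the zero pattern forces the generalized Laplace expansion to collapse to a single term:
\[
\det M=\pm\det D\cdot\det\!\begin{pmatrix}B\\C\end{pmatrix}\ \text{(Case 1)},\qquad \det M=\pm\det C\cdot\det\!\begin{pmatrix}A\\D\end{pmatrix}\ \text{(Case 2)}.
\]

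Each of the resulting factors is a Vandermonde-type determinant. For Case~1 one has $\det D=2^V\Delta(\mathbf v^2)$, and the complementary factor equals $2^U\prod_i s_i\prod_i u_i\cdot\Delta(\mathbf s^2,\mathbf u^2)$ after extracting $s_i$ from each $\mathbf s$-row and $2u_i$ from each $\mathbf u$-row. For Case~2, $\det C=2^U\prod_i u_i\cdot\Delta(\mathbf u^2)$ and the complementary factor is $2^V\Delta(\mathbf s^2,\mathbf v^2)$. Dividing by the denominator $\prod_j u_j\,\Delta(\mathbf s)\Delta(\mathbf u^2)\Delta(\mathbf v^2)$ and invoking the elementary identities $\Delta(\mathbf a^2)=\Delta(\mathbf a)\prod_{i<j}(a_i+a_j)$ and $\Delta(\mathbf a^2,\mathbf b^2)=\Delta(\mathbf a^2)\Delta(\mathbf b^2)\prod_{i,j}(b_j^2-a_i^2)$ produces the two stated product formulas after the obvious cancellations; signs are absorbed into the relation~$\sim$.

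The only subtle point is the parity bookkeeping in the second paragraph: verifying that $V$ (in Case~1) or $U$ (in Case~2) exactly matches $\lceil N/2\rceil$ or $\lfloor N/2\rfloor$, so that the relevant row block and column block have the same size and the Laplace expansion collapses to a product of two square-matrix determinants rather than a Cauchy--Binet-style sum. Once this dimension-matching is verified, everything else is routine algebra with Vandermonde determinants.
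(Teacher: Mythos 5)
Your proposal is correct and follows essentially the same route as the paper: the choice $p_{2l-1}(x)=x^{2l-2}$, $p_{2l}(x)=x^{2l-1}$ is exactly the paper's $p_j(x)=x^{j-1}$, and the paper likewise uses the resulting zero blocks to split the determinant into two Vandermonde-type factors in the squared variables. Your version just spells out the Laplace-expansion and dimension-matching bookkeeping that the paper leaves implicit.
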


\begin{proof}
Let $p_j(s)=s^{j-1}$ in \eqref{psk}. In the first case, all matrix entries in the final $S+U+1$ rows and the $S+U$ columns with odd index are zero. Thus, the determinant factors as a constant times the two Vandermonde determinants
 $$\det\left(\begin{matrix}s_i^{2j-1}\\u_i^{2j-1}\end{matrix}\right)\det(v_i^{2j-2})=\prod_js_j\prod_ju_j\,\Delta(\mathbf s^2,\mathbf u^2)\Delta(\mathbf v^2).$$
The second case is proved similarly.
\end{proof}

We will need the following multivariable  binomial theorem.
Although it can be obtained, for instance, as a limit case of \cite[Cor.\ 7.24]{rs1}, we include a proof for completeness.

\begin{lemma}\label{mbl} One has
$$\Delta(\mathbf x)\prod_{j=1}^m\frac{(k+m-1)!}{(k+j-1)!}\,(1+x_j)^k=\sum_{\lambda_1,\dots,\lambda_m=0}^{k+m-1}\Delta(\boldsymbol\lambda)\prod_{j=1}^m\binom{k+m-1}{\lambda_j}x_j^{\lambda_j}. $$
\end{lemma}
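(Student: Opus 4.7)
The plan is to rewrite both sides as a single determinant and match. Start by expressing the Vandermonde in the binomial basis: since the polynomial $\binom{x}{j-1}$ has degree $j-1$ with leading coefficient $1/(j-1)!$, we have $\Delta(\boldsymbol\lambda) = \prod_{j=0}^{m-1} j!\cdot\det_{1\le i,j\le m}\bigl(\binom{\lambda_i}{j-1}\bigr)$. Plugging this into the right-hand side and using multilinearity of the determinant to pull the sum inside gives
\[
\text{RHS} = \prod_{j=0}^{m-1} j!\cdot\det_{1\le i,j\le m}\!\left(\sum_{\lambda=0}^{k+m-1}\binom{\lambda}{j-1}\binom{k+m-1}{\lambda}\,x_i^{\lambda}\right).
\]

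Next, use the classical identity $\binom{\lambda}{j-1}\binom{k+m-1}{\lambda}=\binom{k+m-1}{j-1}\binom{k+m-j}{\lambda-j+1}$. After shifting the summation index to $\mu=\lambda-j+1$ and applying the ordinary binomial theorem, the inner sum evaluates to $\binom{k+m-1}{j-1}\,x_i^{j-1}(1+x_i)^{k+m-j}$. Pulling $\binom{k+m-1}{j-1}$ out of column $j$ and $(1+x_i)^{k}$ out of row $i$ leaves
\[
\text{RHS} = \prod_{j=0}^{m-1} j!\cdot \prod_{j=1}^{m}\binom{k+m-1}{j-1}\cdot\prod_{i=1}^{m}(1+x_i)^{k}\cdot\det_{1\le i,j\le m}\!\bigl(x_i^{j-1}(1+x_i)^{m-j}\bigr).
\]

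To handle the remaining determinant, write $x^{j-1}(1+x)^{m-j}=\sum_{q=0}^{m-1}\binom{m-j}{q-j+1}\,x^{q}$; this factorises the matrix as the product of the Vandermonde matrix $(x_i^{q})_{i,q}$ and the square matrix $B$ with $B_{q,j}=\binom{m-j}{q-j+1}$. The condition $q-j+1\ge 0$ makes $B$ triangular with unit diagonal, so $\det B = 1$ and the determinant reduces to $\Delta(\mathbf x)$.

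It remains to match prefactors. Using $(j-1)!\binom{k+m-1}{j-1}=(k+m-1)!/(k+m-j)!$ and reindexing $j\mapsto m-j+1$ one finds
\[
\prod_{j=0}^{m-1} j!\cdot \prod_{j=1}^{m}\binom{k+m-1}{j-1} = \prod_{j=1}^{m}\frac{(k+m-1)!}{(k+j-1)!},
\]
which is exactly the factor on the left-hand side. The only obstacle is bookkeeping of indices in the triangular factorisation and the constant; there is no conceptual difficulty, since the whole argument is a standard application of Lindström--Gessel--Viennot-style determinant manipulation combined with Vandermonde trigonalisation.
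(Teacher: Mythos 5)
Your proof is correct. Every step checks out: the binomial-basis expansion $\Delta(\boldsymbol\lambda)=\prod_{j=0}^{m-1}j!\,\det\bigl(\binom{\lambda_i}{j-1}\bigr)$, the row-multilinearity that lets you sum each $\lambda_i$ inside its row, the trinomial-revision identity $\binom{\lambda}{j-1}\binom{k+m-1}{\lambda}=\binom{k+m-1}{j-1}\binom{k+m-j}{\lambda-j+1}$ followed by the binomial theorem, the unit-triangular factorization giving $\det\bigl(x_i^{j-1}(1+x_i)^{m-j}\bigr)=\Delta(\mathbf x)$, and the constant bookkeeping via $(j-1)!\binom{k+m-1}{j-1}=(k+m-1)!/(k+m-j)!$ all hold.

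The route is the mirror image of the paper's rather than a copy of it. The paper starts from the left-hand side: it writes $\Delta(\mathbf x)=\Delta(\mathbf x+1)=\det_{ij}\bigl((1+x_i)^{j-1}\bigr)$, expands the permutation sum with the binomial theorem, and then collapses the resulting determinant of factorial ratios $\det_{ij}\bigl((k+m-1-\lambda_i)!/(k+j-1-\lambda_i)!\bigr)$ to $\Delta(\boldsymbol\lambda)$ by observing that the entries are monic polynomials in $-\lambda_i$ of degree $m-j$. You instead start from the right-hand side, trade $\Delta(\boldsymbol\lambda)$ for a determinant of binomial coefficients, evaluate the entry sums in closed form, and then collapse a determinant on the $x$-side via a triangular change of basis. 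Both arguments are elementary determinant manipulations powered by the binomial theorem; the paper's version is slightly more economical because the translation invariance of the Vandermonde and the monic-polynomial collapse replace your trinomial-revision step and triangular factorization, and no constant-matching is needed at the end. Your version has the mild advantage that each determinant entry is evaluated explicitly, so the intermediate objects are completely concrete.
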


\begin{proof}
Since $\Delta(\mathbf x)=\Delta(\mathbf x+1)=\det_{ij}((1+x_i)^{j-1})$, the left-hand side equals
\begin{align*}\MoveEqLeft\sum_{\sigma\in \mathrm S_m}\sgn(\sigma)\prod_{j=1}^m\frac{(k+m-1)!}{(k+j-1)!}(1+x_j)^{k+\sigma(j)-1}\\
&=\sum_{\sigma\in \mathrm S_m}\sgn(\sigma)\sum_{\lambda_1,\dots,\lambda_m}\prod_{j=1}^m\frac{(k+m-1)!}{(k+j-1)!}\binom{k+\sigma(j)-1}{\lambda_j}x_j^{\lambda_j}\\
&=\sum_{\lambda_1,\dots,\lambda_m}\det_{1\leq i,j\leq m}\left(\frac{(k+m-1-\lambda_i)!}{(k+j-1-\lambda_i)!}\right)\prod_{j=1}^m\binom{k+m-1}{\lambda_j}x_j^{\lambda_j}.
\end{align*}
Since the matrix elements are monic polynomials in $-\lambda_i$ of degree $m-j$, the final determinant equals $\det_{ij}((-\lambda_i)^{m-j})=\Delta(\boldsymbol\lambda)$.
\end{proof}

Note that $\psi^{(k)}$ is obtained from $\Delta(\mathbf x)\prod_j(1-x_j)^{|k|}\sim \det_{ij}((1-x_j)^{|k|}p_j(x_i))$ by applying operators
$f(x)\mapsto f(x)+\varepsilon f(-x)$ in each variable, where $\varepsilon\in\{0,\pm 1,\pm 2\}$. We can thus obtain explicit expressions for  $\psi^{(k)}$ from
Lemma~\ref{mbl} (with $x_j$ replaced by $-x_j$). We will only need this when the variables $t_j$ are absent. 

\begin{corollary} \label{psec}
Up to a  constant factor, the function
$$\psi^{(k)}(s_1,\dots,s_S;-;u_1,\dots,u_U;v_1,\dots,v_V) $$
can for $k\geq 0$ be expressed as
\begin{multline*}\frac 1{\prod_{j=1}^S(1-s_j)^k\prod_{j=1}^Uu_j\,\Delta(\mathbf s)\Delta(\mathbf u^2)\Delta(\mathbf v^2)}
\sum_{\substack{0\leq \lambda_1,\dots,\lambda_{S+U+V}\leq S+U+V+k-1\\ \lambda_{S+1},\dots,\lambda_{S+U}\text{\emph{ odd}}\\ \lambda_{S+U+1},\dots,\lambda_{S+U+V}\text{\emph{ even}}}}\Delta(\boldsymbol\lambda)\\
\times\prod_{j=1}^{S+U+V-1}\binom{S+U+V+k-1}{\lambda_j}\prod_{j=1}^S(-1)^{\lambda_j}s_j^{\lambda_j}\prod_{j=S+1}^{S+U}u_j^{\lambda_j}\prod_{j=S+U+1}^{S+U+V}v_j^{\lambda_j}
\end{multline*}
and for $k\leq 0$ as
\begin{multline*} \frac {\prod_{j=1}^S(1+s_j)^k\prod_{j=1}^U(1-u_j^2)^k\prod_{j=1}^V(1-v_j^2)^k}{\prod_{j=1}^Uu_j\,\Delta(\mathbf s)\Delta(\mathbf u^2)\Delta(\mathbf v^2)}
\sum_{\substack{0\leq \lambda_1,\dots,\lambda_{S+U+V}\leq S+U+V-k-1\\ \lambda_{S+1},\dots,\lambda_{S+U}\text{\emph{ odd}}\\ \lambda_{S+U+1},\dots,\lambda_{S+U+V}\text{\emph{ even}}}}\Delta(\boldsymbol\lambda)\\
\times\prod_{j=1}^{S+U+V-1}\binom{S+U+V-k-1}{\lambda_j}\prod_{j=1}^S(2+(-1)^{\lambda_j})s_j^{\lambda_j}\prod_{j=S+1}^{S+U}u_j^{\lambda_j}\prod_{j=S+U+1}^{S+U+V}v_j^{\lambda_j}.
\end{multline*}
\end{corollary}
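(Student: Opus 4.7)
The plan is to recognize the multiple sum on the right-hand side of Corollary~\ref{psec} as a determinant (via antisymmetrization of $\Delta(\boldsymbol\lambda)$) and to identify this determinant with the one defining $\psi^{(k)}$ in \eqref{psk} or \eqref{pskn} through a single change of basis in a natural $m$-dimensional space of polynomials.

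Consider first $k\geq 0$, writing $K=m+k-1$ and $D=x\,d/dx$ with $m=S+U+V$. Multiply the $\mathbf s$-rows of \eqref{psk} by $(1-s_i)^k$ to absorb the prefactor $\prod(1-s_j)^k$ appearing in the statement. Using the identity $(1+y)^kp_j(-y)=f_j(-y)$ with $f_j(x)=(1-x)^kp_j(x)$, the resulting determinant $\det M'$ has rows $f_j(s_i)$, $f_j(u_i)-f_j(-u_i)$, $f_j(v_i)+f_j(-v_i)$. On the other hand, by the antisymmetrization identity
\[
\sum_{\boldsymbol\lambda\in A_1\times\cdots\times A_m}\Delta(\boldsymbol\lambda)\prod_jc_j(\lambda_j)=\det_{i,j}\Bigl(\sum_{\lambda\in A_j}\lambda^{i-1}c_j(\lambda)\Bigr),
\]
the sum in the statement equals $\det M$, whose $(i,j)$-entry for an $\mathbf s$-variable is $G_i(s_j):=D^{i-1}(1-x)^K|_{x=s_j}$, for a $\mathbf u$-variable is $\tfrac12\bigl(G_i(-u_j)-G_i(u_j)\bigr)$, and for a $\mathbf v$-variable is $\tfrac12\bigl(G_i(v_j)+G_i(-v_j)\bigr)$; here one uses $D^{i-1}(1+y)^K=G_i(-y)$, which holds because $D$ commutes with $y\mapsto -y$.

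Since each $G_i$ is divisible by $(1-x)^{K-i+1}$ and $K-i+1\geq k$ for $i\leq m$, the polynomials $\{G_1,\dots,G_m\}$ and $\{f_1,\dots,f_m\}$ are two bases of the same $m$-dimensional subspace $(1-x)^k\,\mathbb{C}[x]_{\leq m-1}$. Letting $T$ be the invertible change-of-basis matrix given by $G_i=\sum_aT_{ia}f_a$, one verifies uniformly that $M=T(M')^T\mathrm{diag}(1^S,(-\tfrac12)^U,(\tfrac12)^V)$, so that $\det M\sim\det M'$. Dividing by $\Delta(\mathbf s)\Delta(\mathbf u^2)\Delta(\mathbf v^2)\prod u_j$ yields the stated formula for $k\geq 0$.

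The case $k\leq 0$ follows by the same argument starting from \eqref{pskn}, with $g_j(x)=(1-x)^{-k}p_j(x)$ playing the role of $f_j$. The $\mathbf s$-rows now take the form $g_j(s_i)+2g_j(-s_i)$, matching the coefficient expansion $(2+(-1)^{\lambda_j})s_j^{\lambda_j}=2s_j^{\lambda_j}+(-s_j)^{\lambda_j}$ in the statement, while the prefactor $X^k$ in \eqref{pskn} and the product $\prod(1+s_j)^k\prod(1-u_j^2)^k\prod(1-v_j^2)^k$ in the Corollary are mutually inverse and cancel. The only real obstacle is the bookkeeping of signs and scalar constants, which is harmless under the $\sim$ notation once one checks the invertibility of $T$; this invertibility follows from the distinct orders of vanishing of $G_1,\dots,G_m$ at $x=1$.
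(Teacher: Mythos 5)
Your proposal is correct, but it takes a different route from the paper at the decisive step. The paper derives Corollary \ref{psec} from Lemma \ref{mbl} (the multivariable binomial theorem): since $\psi^{(k)}$ is obtained from $\Delta(\mathbf x)\prod_j(1-x_j)^{|k|}\sim\det_{ij}\bigl((1-x_j)^{|k|}p_j(x_i)\bigr)$ by applying the parity operators $f(x)\mapsto f(x)+\varepsilon f(-x)$, $\varepsilon\in\{0,\pm1,\pm2\}$, in each variable, one just applies these operators term by term to the expansion of Lemma \ref{mbl} with $x_j$ replaced by $-x_j$; this imposes the parity constraints on the $\lambda_j$ and produces the factors $(-1)^{\lambda_j}$ resp.\ $2+(-1)^{\lambda_j}$. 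You instead antisymmetrize the $\lambda$-sum into a determinant whose entries are parity projections of $G_i(x)=D^{i-1}(1-x)^{K}$, with $D=x\,d/dx$ and $K=S+U+V+|k|-1$, and then match it with the defining determinant in \eqref{psk} or \eqref{pskn} through the change of basis between $\{G_i\}$ and $\{(1-x)^{|k|}p_j\}$ inside $(1-x)^{|k|}\,\mathbb C[x]_{\leq m-1}$, invertibility being guaranteed by the distinct orders of vanishing of the $G_i$ at $x=1$; in effect your argument contains an independent proof of the needed instance of Lemma \ref{mbl}, up to constants, which is all that the $\sim$-statement requires. The paper's route is a little shorter, since the binomial lemma is proved once and for all by an elementary column manipulation and the corollary then needs nothing beyond the operator remark; yours is self-contained and makes the determinant comparison transparent, at the price of the bookkeeping in $M=T(M')^{T}\mathrm{diag}(1^{S},(-\tfrac12)^{U},(\tfrac12)^{V})$. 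One minor phrasing slip: in the $k\leq 0$ case the prefactor $X^{k}$ of \eqref{pskn} is equal to (not inverse to) the prefactor in the Corollary; the two coincide and drop out of the comparison, which is exactly what your argument needs.
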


Since the points
$$\xi_1=\frac\zeta 2-\frac{\zeta^2}{4}+\mathcal O(\zeta^3),\qquad
\xi_2=\frac\zeta 2+\frac{3\zeta^2}{4}+\mathcal O(\zeta^3)$$
correspond to  \eqref{xph} with $y=\infty$ and $y=0$, respectively, 
we are interested in limits of Corollary \ref{psec} when  the variables 
$u_j$ and $v_j$ tend to  $\infty$ or $0$.

\begin{corollary}\label{pcor}
Let $U=U_1+U_2$, $V=V_1+V_2$ and $N=S+U+V+|k|-1$. Then, the limit
\begin{equation}\label{psl}\lim_{u_1,\dots,u_{U_2},v_1,\dots,v_{V_2}\rightarrow \infty}\frac{\psi^{(k)}(s_1,\dots,s_S;-;0^{(U_1)},u_1,\dots,u_{U_2};0^{(V_1)},v_1,\dots,v_{V_2})}{\prod_{j=1}^{U_2}u_j^{2[(S-U+V+k)/2]}\prod_{j=1}^{V_2}v_j^{2[(S+U-V+k+1)/2]}} \end{equation}
always exists finitely. It is non-zero if and only if
\begin{equation}\label{uvc}\left|U-V+\frac 12\right|\leq S+|k|+\frac 12.\end{equation}
Moreover, when \eqref{uvc} is satisfied, \eqref{psl} can for
 $k\geq 0$ be expressed as
\begin{subequations}\label{psls}
\begin{multline}\frac 1{\prod_{j=1}^S(1-s_j)^k\Delta(\mathbf s)}
\sum_{0\leq \lambda_1,\dots,\lambda_{S}\leq N}\Delta(\boldsymbol\lambda)\\
\times\prod_{j=1}^{S}\left(\binom{N}{\lambda_j}(-1)^{\lambda_j}s_j^{\lambda_j}\prod_{\mu\in J}(\lambda_j-\mu)\right)
\end{multline}
and for $k\leq 0$ as
\begin{multline} \frac {\prod_{j=1}^S(1+s_j)^k}{\Delta(\mathbf s)}
\sum_{0\leq \lambda_1,\dots,\lambda_{S}\leq N}\Delta(\boldsymbol\lambda)\\
\times\prod_{j=1}^{S}\left(\binom{N}{\lambda_j}(2+(-1)^{\lambda_j})s_j^{\lambda_j}\prod_{\mu\in J}(\lambda_j-\mu)\right),
\end{multline}
where $J=J(N;U_1,U_2,V_1,V_2)$ is defined in \emph{\S \ref{hsrs}}.
\end{subequations}
\end{corollary}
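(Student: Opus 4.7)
The plan is to deduce Corollary \ref{pcor} directly from the closed-form binomial expansion of $\psi^{(k)}$ in Corollary \ref{psec}. After substituting $\mathbf u = (0^{(U_1)}, u_1, \ldots, u_{U_2})$ and $\mathbf v = (0^{(V_1)}, v_1, \ldots, v_{V_2})$, both the numerator (via $\prod u_j^{\lambda_j}$ with odd $\lambda_j$) and the denominator (via $\Delta(\mathbf u^2)\prod u_j$) vanish at the coincident zeros; the key is to resolve the $0/0$ by antisymmetrizing the Vandermonde $\Delta(\boldsymbol\lambda)$ against the variables, so that the $u$-block factors as a Schur-type ratio $\det(u_i^{\lambda_j})/\bigl(\Delta(\mathbf u^2)\prod u_i\bigr)$.

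Interpret the two specializations as Schur-type projections. As $u_1=\cdots=u_{U_1}\to 0$, the lowest-weight configuration survives, forcing those indices to equal the $U_1$ smallest odd integers $\{1,3,\ldots,2U_1-1\}\subseteq\{0,\ldots,N\}$; as $u_{U_1+1},\ldots,u_{U}\to\infty$, the highest-weight configuration survives, forcing the remaining odd indices to equal the $U_2$ largest odd integers in that range. The same mechanism on the $v$-side freezes even indices to the $V_1$ smallest and $V_2$ largest even integers. In total, $\lambda_{S+1},\ldots,\lambda_{S+U+V}$ are forced to exhaust precisely $J(N;U_1,U_2,V_1,V_2)$, leaving $\lambda_1,\ldots,\lambda_S$ as free summation variables.

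Matching the top-weight monomial in each diverging $u_j$ (resp.\ $v_j$) against the denominators $\prod u_j\Delta(\mathbf u^2)$ (resp.\ $\Delta(\mathbf v^2)$) --- a short parity case-check using $N=S+U+V+|k|-1$ --- yields the claimed normalization exponents $2[(S-U+V+k)/2]$ and $2[(S+U-V+k+1)/2]$. The residual Vandermonde factors as
\[\Delta(\boldsymbol\lambda)\sim\Delta(\lambda_1,\ldots,\lambda_S)\,\Delta(J)\prod_{i=1}^S\prod_{\mu\in J}(\lambda_i-\mu),\]
so the sum collapses to \eqref{psls}, inheriting $(-1)^{\lambda_j}$ (resp.\ $2+(-1)^{\lambda_j}$) from Corollary \ref{psec} in the $k\geq 0$ (resp.\ $k\leq 0$) branch. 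Nonvanishing holds exactly when the chosen smallest and largest values do not overlap, i.e.\ when $U\leq\lfloor(N+1)/2\rfloor$ and $V\leq\lfloor N/2\rfloor+1$; combining these into a single inequality gives $|U-V+1/2|\leq S+|k|+1/2$, which is \eqref{uvc}.

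The main obstacle I anticipate is the Schur-projection step --- justifying rigorously that the simultaneous coincident-zero and divergence limits of several variables extract only the extremal index configurations without residual subleading contributions. The justification reduces to the classical evaluation of $s_\lambda(u_1,\ldots,u_r)$ at $0$ (which vanishes unless $\lambda$ is the staircase partition) together with its leading behaviour at $\infty$ (the top-degree monomial being that of the top staircase), applied separately to the $u$- and $v$-blocks and then multiplicatively combined; the $k\leq 0$ branch is handled in exactly the same way, the only change being the prefactor inherited from the second formula in Corollary \ref{psec}.
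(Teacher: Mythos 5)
Your route is essentially the paper's: both arguments start from the explicit expansion of Corollary \ref{psec}, use the symmetry in each group of variables to freeze the indices attached to the specialized and diverging variables at the extremal odd/even values (your ``Schur-type projection'' is exactly the paper's step of replacing $\Delta(\mathbf u^2)$ by $\prod_j u_j^{2(j-1)}$ and choosing $(\lambda_{S+1},\dots,\lambda_{S+U})$ to be the $U_1$ smallest and $U_2$ largest odd integers in $[0,N]$, and likewise for the $v$'s), and then match degrees to obtain the normalization exponents and the formulas \eqref{psls}; the equivalence of \eqref{uvc} with the existence of $U$ distinct odd and $V$ distinct even integers in $[0,N]$ is also the same step as in the paper.

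There is, however, one genuine omission in your treatment of non-vanishing. The non-overlap of the extremal configurations (equivalently \eqref{uvc}) only gives the ``only if'' half: when it fails, every term of the sum in Corollary \ref{psec} has a repeated constrained index, so the limit is identically zero. For the ``if'' half you must still show that \eqref{psls} itself is not identically zero as a function of $\mathbf s$: each term of the residual sum over $\lambda_1,\dots,\lambda_S$ is killed whenever two of these indices coincide or one of them lies in $J$ (because of the factor $\prod_{\mu\in J}(\lambda_j-\mu)$), so a priori the whole sum could vanish. The paper closes this with a counting argument: the complement $\{0,\dots,N\}\setminus J$ has cardinality $N+1-U-V=S+|k|\geq S$, so an admissible choice of $S$ distinct indices outside $J$ exists, and since the surviving terms are visibly linearly independent there is no cancellation. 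This short count (plus the linear-independence remark) needs to be added to your argument; without it the stated ``if and only if'' is not fully proved.
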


\begin{proof}
We first observe that  \eqref{uvc} is equivalent to
$U\leq[(N+1)/2]$ and $V\leq [(N+2)/2]$, which is in turn equivalent to the existence of $U$ distinct odd and $V$ distinct even integers
in $[0,N]$. 

We apply the expressions given in Corollary \ref{psec}. By symmetry in the 
variables $u_j$, we may compute the limit in those variables by replacing
 $\Delta( \mathbf u^2)$ by $\prod_{j=1}^Uu_j^{2(j-1)}$ and 
choosing $(\lambda_{S+1},\dots,\lambda_{S+U})$ as the $U_1$ smallest and $U_2$ largest odd integers in $[0,N]$, written in increasing order. The analogous statement holds for the variables $v_j$.
It follows that \eqref{psl} is equal to \eqref{psls} when \eqref{uvc} holds and is otherwise identically zero.

It remains to show that \eqref{psls} cannot vanish identically, assuming  \eqref{uvc}. Since the non-zero terms are visibly linearly independent, it is
enough to show that such terms exist. The terms vanish
if and only if two summation indices are equal or one of them is in   $J$. Since $J$ has cardinality $N+1-U-V=S+|k|\geq S$, non-zero terms do exist.  
\end{proof}

\subsection{Leading behaviour of $T_n^{(\mathbf k)}$}
\label{tnkhs}

We will now apply the results of \S \ref{hts}--\ref{psks} to 
$$T_n^{(\mathbf k)}\left(\frac{1+s_1}2,\dots,\frac{1+s_m}2\right).$$
The denominator in \eqref{tnk} behaves as $D\zeta^J+\mathcal O(\zeta^{J+1})$, where
\begin{align*}
J&=2(k_1^-+k_2^-)(m+k_0^++k_1^++k_2^++k_3^+)+2(k_1^++k_2^+)(k_0^-+k_3^-),\\
D&\sim \prod_{j=1}^m (s_j+1)^{2k_0^-+2k_3^-}s_j^{2k_2^-}.
\end{align*}
We will identify the numerator as a specialization of \eqref{tes}. This can be done in several ways. For instance, if 
$k_3^+=1$, one of the variables from the left group should be specialized to $1$, which we can achieve either by specializing one of the variables  $s_j$ in
\eqref{tes} to $1$ or one of the variables $v_j$ to $0$. The choice must be made judiciously, so that the results of \S \ref{hts} yield  the correct leading term $C_n^{(\mathbf k)}\zeta^L$  (rather than $0\cdot\zeta^{L'}$ for some $L'<L$).

To be more precise, let $\mathrm C_1$ denote the subregime of C defined by $m=0$, $k_1\geq 0$ and $k_2\geq 0$ and let $\mathrm C_2=\mathrm C\setminus \mathrm C_1$. We  then express $T_n^{(\mathbf k)}$ in terms of \eqref{tes} with  parameters chosen as in the following table, where we write for short $i=k_0^++k_3^+$, $j=k_1^++k_2^+$, $k=k_0^-+k_3^-$, $l=k_1^-+k_2^-$.\\

\begin{tabular}{c|cccccccc}
 & $S$ & $T$ & $U$ & $V $ & $W$ & $X$ & $Y$ & $Z$\\
\hline
A & $m+i$ & $j$ & 0 & 0 & $k$ & $l$ & 0 & 0\\
 B & $m$ & $j$ & 0 & $i$ &0 & $j+m+1$ & $l-j-m-1$ & $k$\\
$\mathrm C_1$ & $0$ & $0$ & $j$ & $i$ &0 & $0$ & $0$ & $k$\\
$\mathrm C_2$ & $m$ & $l+m-1$ & $j+1-l-m$ & $i$ &0 & $l$ & $0$ & $k$
\end{tabular}

\vspace*{1ex}

The definition of the regimes guarantee that all entries  are non-negative. Note that the expression for $T$ 
used in regime $\mathrm C_2$ becomes negative in regime $\mathrm C_1$, which is the reason for treating $\mathrm C_1$ and $\mathrm C_2$ separately.

Choosing the parameters in \eqref{tes} as in
the table, we apply \eqref{tesa} in regime A, \eqref{tesb} in regime
 $\mathrm C_1$ and \eqref{tese} in regimes B and $\mathrm C_2$.
In the latter two cases, \eqref{oc} holds with
$M=k_0^++k_3^+-n-1$ and $M=k_0^++k_3^++k_1+k_2-n$, respectively. 
One may check that in each case $L=K-J$.  Thus, we obtain an expression
for the quantity $C_n^{(\mathbf k)}$ in Theorem \ref{ht} in terms of the polynomials 
$\psi^{(0)}$ or $\psi^{(1)}$. 

Consider first regime A. We specialize $\mathbf s=(s_1,\dots,s_m,1^{(k_0^++k_3^+)})$
and $\mathbf w=(1^{(k_0^-+k_3^-})$. By Lemma \ref{pssl}, our expression for
 $C_n^{(\mathbf k)}$ becomes
\begin{align*}C_n^{(\mathbf k)}&\sim \lim_{t_1,\dots,t_{k_1^+},x_1,\dots,x_{k_1^-}\rightarrow\infty}\frac{\prod_{j=1}^m(1+s_j)^{n-k_1-k_2-1}}
{\prod_{j=1}^ms_j^{2k_2^-}\prod_{j=1}^{k_1^+}t_j^{2(n-k_1-k_2)}\prod_{j=1}^{k_1^-}x_j^{2(n+1)}}\\
&\quad\times\psi^{(k_0+k_3+1)}(s_1,\dots,s_m;-;t_1,\dots,t_{k_1^+},0^{(k_2^+)};x_1,\dots,x_{k_1^-},0^{(k_2^-)}).
 \end{align*}
Applying  Corollary \ref{pcor}
and replacing $s_j$ by $2x_j-1$, 
 this reduces to  \eqref{ca}.

Next, we turn to regime B. Applying \eqref{tese} as indicated above,  the factor $\psi^{(0)}(\mathbf
s;-;\mathbf t;\mathbf x)$ is computed by Lemma \ref{opsl}.
In the factor $\psi^{(0)}(-;\mathbf y;\mathbf v;\mathbf z)$ we specialize all $y_j$ to $1$  and apply 
 Lemma \ref{pssl}. This leads to
\begin{align*}C_n^{(\mathbf k)}&\sim \prod_{j=1}^m\frac{(1+s_j)^{k_0+k_3-1-k_1-k_2}s_j^{2k_2+1}}{(1-s_j)^{n+1}}\prod_{1\leq i<j\leq m}(s_i+s_j)\\
&\quad\times\lim_{v_1,\dots,v_{k_0^+},z_1,\dots,z_{k_0^-}\rightarrow\infty} \frac{1}{\prod_{j=1}^{k_0^+} v_j^{2(n-k_0-k_3)}\prod_{j=1}^{k_0^-}z_j^{2(n+1)}}\\
&\quad\times\psi^{(k_1+k_2+m+1)}(-;-;v_1,\dots,v_{k_0^+},0^{(k_3^+)};z_1,\dots,z_{k_0^-},0^{(k_3^-)}),
\end{align*}
where the limit is  a non-zero constant by Corollary \ref{pcor}.
Note that
\eqref{uvc} reduces to $|k_0+k_3+1/2|\leq |k_1+k_2+m+1|+1/2$, which follows from the defining inequality for regime B.

The case $\mathrm C_1$ is treated in the same way as case A, but is simpler since $m=0$. Finally, the case $\mathrm C_2$ is treated similarly as case B.

\section{Comparison of notation}
\label{cns}

In this Section, we explain how  $T_n^{(\mathbf k)}$ are related to various polynomials appearing in  \cite{bm1, bm2, bh,fh,h,bm4,r,ras,zj}.

\subsection{Polynomials related to three-colour model}
\label{tcss}

In \cite{r}, we worked  with symmetric polynomials in $2n+1$ variables, defined by
\begin{multline}\label{sn}S_n(x_1,\dots,x_n,y_1,\dots,y_n,z)\\
=\frac{\prod_{i,j=1}^nG(x_i,y_j)}{\prod_{1\leq i<j\leq n}(x_j-x_i)(y_j-y_i)}\,
\det_{1\leq i,j\leq n}\left(\frac{F(x_i,y_j,z)}{G(x_i,y_j)}\right),
 \end{multline}
where $G$ is as in \eqref{gd} and 
$$F(x,y,z)=(\zeta+2)xyz-\zeta(xy+yz+xz+x+y+z)+\zeta(2\zeta+1).$$
Since $F(x,y,1)=2(x-\zeta)(y-\zeta)$, it follows that
$$T(x_1,\dots,x_{2n})=\frac{1}{2^n\prod_{j=1}^{2n}(x_j-\zeta)}\,S_n(x_1,\dots,x_{2n},1). $$
On the other hand, letting $y_n\rightarrow 1$, $x_n\rightarrow\zeta$ in 
\eqref{sn} one easily derives
$$\lim_{x_{2n}\rightarrow\zeta}\frac{S_n(x_1,\dots,x_{2n},1)}{\prod_{j
=1}^{2n}(x_j-\zeta)}=2^n\left(\zeta(\zeta+1)\right)^{n-1}S_{n-1}(x_1,\dots,x_{2n
-1}). $$
Combining these two results gives
\begin{equation}\label{ste}S_n(x_1,\dots,x_{2n+1})=\frac 1{(\zeta(\zeta+1))^n}\,T_{n+1}^{(0,0,0,0)}(x_1,\dots,x_{2n+1},\zeta).\end{equation}

As was mentioned at the end of \S \ref{tnks}, the right-hand side of \eqref{ste}
is essentially the function $T_{n}^{(0,0,0,-1)}$. To prove this, note that
$$P_j(\zeta)=\frac{\big(\zeta(\zeta+1)\big)^{n-1}}{2^{n+1}(\zeta-1)}\,(\hat\sigma P_j)(1) $$
for each $j$. 
Using this in  \eqref{sdd} gives
\begin{multline*}\prod_{j=1}^k(\zeta-x_j)T(x_1,\dots,x_k,\zeta;y_1,\dots,y_l)\\
=\frac{\big(\zeta(\zeta+1)\big)^{n-1}}{2^{n+1}(\zeta-1)}\prod_{j=1}^l(1-y_j)
T(x_1,\dots,x_k;y_1,\dots,y_l,1),
 \end{multline*}
where $k+l+1=2n$, which  leads to 
\begin{equation} \label{st}S_n(x_1,\dots,x_{2n+1})=\frac{2^n\prod_{j=1}^{2n+1}(x_j-\zeta)}{1-\zeta}\,T_n^{(0,0,0,-1)}(x_1,\dots,x_{2n+1})
. \end{equation}

Using \eqref{st}, we can rewrite 
 the polynomials $P_n$, $p_n$, $y_n$ and $\tilde p_n$
 of \cite[Prop.\ 8.1]{r} in terms of $T_n^{(\mathbf k)}$.
 Namely (recall that $\delta(n)=[n^2/4]$)
\begin{align}
\label{bpr}
P_n(x,\zeta)&=\frac{(-1)^{\left[n/2\right]}\left(\frac\zeta 2+1\right)^{n(n-1)-\delta(n-1)}(x-\zeta)}{(1-\zeta)\zeta^{n(n-1)}(\zeta+1)^{n(n-2)}\left(2\zeta+1\right)^{\delta(n-1)}}\,T_n^{(n,n,0,-1)}(x)\\
\label{pnr} p_n(\zeta)&=\frac{(-1)^{\left[n/2\right]}\left(\frac\zeta 2+1\right)^{n(n-1)-\delta(n-1)}}{(1-\zeta)\zeta^{n(n-1)}(\zeta+1)^{n^2-2n-1}\left(2\zeta+1\right)^{\delta(n)}}\,T_n^{(n+1,n,0,-1)},\\
\notag y_n(\zeta)&=\frac{(-1)^{\left[n/2\right]}\left(\frac\zeta 2+1\right)^{(n-1)^2-\delta(n-2)}}{2^{[(n+3)/2]}(1-\zeta)\zeta^{n(n-1)}(\zeta+1)^{n^2-2n-1}\left(2\zeta+1\right)^{\delta(n+1)}}\,T_n^{(n+2,n-1,0,-1)},
\\
\notag\tilde p_n(\zeta)&=
\frac{(-1)^{\left[n/2\right]+1}2^{[(n-1)/2]}\left(\frac\zeta 2+1\right)^{n^2-1-\delta(n)}}{(1-\zeta)\zeta^{n^2-1}(\zeta+1)^{n^2-2n-1}\left(2\zeta+1\right)^{\delta(n-1)}}\,T_n^{(n,n+1,0,-1)}.
\end{align}
The main result of \cite{r} is that the partition function of the three-colour model with domain wall boundary conditions can be expressed in terms of $p_n$ and $\tilde p_n$.

We mention that, in \cite{r}, there is a slight mistake in the proof that
$y_n$ is a polynomial. More precisely, the problem is to show 
 that $y_n(\zeta)$ is regular at $\zeta=0$. This follows from our Theorem \ref{ht}, which gives $T_n^{(n+2,n-1,0,-1)}=\mathcal O(\zeta^{n(n-1)})$.

\subsection{Polynomials of Bazhanov and Mangazeev}

We will now consider the 
polynomials $\mathcal P_n(x,z)$
 of Bazhanov and Mangazeev \cite{bm1, bm2, bm4}, which describe the ground state eigenvalue of Baxter's $Q$-operator for the supersymmetric ($\Delta=-1/2$) periodic XYZ chain of odd length.
In \cite{bm1},  these polynomials are defined up to a factor independent of $x$,
and then normalized  by writing
$$\mathcal P_n(x,z)=\sum_{k=0}^n r_k^{(n)}(z)x^k,$$
and requiring that 
$r_n^{(n)}(0)=1$. Since this only determines $\mathcal P_n(x,z)$ 
up to a multiplicative factor $f(z)$ with $f(0)=1$, we make the definition precise by requiring in addition that $\mathcal P_n(x,z)$ is not divisible by any non-constant polynomial in $z$. 
The following result will be proved in
 \S \ref{bmss} (using a result from \cite{r1b}).

\begin{proposition}\label{pprp}
The polynomials $\mathcal P_n$ and $P_n$ are related by
\begin{align*}
\mathcal P_n\left(y,\frac{\zeta}{(\zeta+2)(2\zeta+1)}\right)
&=\left(\frac{2}{(\zeta+2)(2\zeta+1)}\right)^{\delta(n)}\left(\frac{\zeta y+\zeta+2}{\zeta(\zeta+1)}\right)^n
\\
&\quad\times P_n\left(\frac{\zeta(y+2\zeta+1)}{\zeta y+\zeta+2},\zeta\right).
\end{align*}
\end{proposition}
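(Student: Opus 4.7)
The plan is to show that both sides of the stated equality satisfy the same characterizing properties of $\mathcal{P}_n$. Denote the right-hand side, viewed as a function of $y$ and $\zeta$, by $\widetilde{\mathcal P}_n(y,\zeta)$. According to the definition in \cite{bm1,bm2}, the polynomial $\mathcal{P}_n(x,z)$ is uniquely characterized (up to the two normalization conditions $r_n^{(n)}(0)=1$ and indivisibility by any non-constant polynomial in $z$) as the polynomial solution to a specific second-order ODE in $x$ whose coefficients involve $z$ and which comes from the Painlev\'e VI tau function characterization. The strategy is therefore to check four things for $\widetilde{\mathcal P}_n$: that it is polynomial of degree $n$ in $y$, that its coefficients are polynomial in $z=\zeta/((\zeta+2)(2\zeta+1))$, that it satisfies the defining ODE, and that its leading coefficient in $y$ at $z=0$ equals~$1$.

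For the polynomiality in $y$, I would combine \eqref{bpr} with Theorem \ref{dt} / the explicit construction of $T_n^{(n,n,0,-1)}$ to see that $P_n(x,\zeta)$ has degree $n$ in $x$ (the factor $x-\zeta$ in \eqref{bpr} raises the degree of the $T$-part by one, but the Laurent expansion at $\xi_j$'s is controlled by $\mathbf k$). The Möbius map $x=\zeta(y+2\zeta+1)/(\zeta y+\zeta+2)$ sends $x=\zeta$ to a finite value of $y$, and substitution followed by multiplication by $((\zeta y+\zeta+2)/\zeta(\zeta+1))^n$ produces a polynomial of degree exactly $n$ in $y$. The passage from rational functions of $\zeta$ to polynomials in $z$ follows from the $\mathrm S_3$-symmetries of Corollary~\ref{sscc}: the indices $(n,n,0,-1)$ are stable under the involutions $\zeta\mapsto 1/\zeta$ and $\zeta\mapsto -\zeta-1$ that generate the deck group of the covering $\zeta\mapsto z$, so the coefficients descend to rational functions of $z$, and the holomorphicity of $T_n^{(\mathbf k)}$ at the trigonometric cusps (\S\ref{tls}) rules out poles in $z$.

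The decisive step is matching the ODE. Here I would appeal to the non-stationary Schr\"odinger / Knizhnik--Zamolodchikov--Bernard equation for $T_n^{(\mathbf k)}$ established in the companion paper \cite{r1b}; at $m=0$ and for the index $\mathbf k=(n,n,0,-1)$ this reduces to an ODE for $P_n(x,\zeta)$. One would then conjugate this ODE by the gauge factor $(2/((\zeta+2)(2\zeta+1)))^{\delta(n)}((\zeta y+\zeta+2)/\zeta(\zeta+1))^n$ and transform $(x,\zeta)$ to $(y,z)$ via the given rational substitutions; the claim is that the resulting equation is exactly the ODE defining $\mathcal P_n$ in \cite{bm1}. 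Finally, the normalization $r_n^{(n)}(0)=1$ is fixed by evaluating the leading $y^n$ coefficient of $\widetilde{\mathcal P}_n$ at $z=0$; this corresponds to a trigonometric limit, which can be computed using Proposition~\ref{tdtp} specialized to the multi-index $(n,n,0,-1)$ and a straightforward symplectic character evaluation from \S\ref{nvcvs}. Indivisibility by non-constant polynomials in $z$ is automatic once the leading coefficient is a non-zero constant.

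The main obstacle is the ODE-matching step: it requires that the non-stationary Schr\"odinger equation of \cite{r1b}, written in the variables $(x,\zeta)$ used here, transform under the specific substitutions of the proposition into precisely the Bazhanov--Mangazeev equation in $(y,z)$. This is not a routine change of variables because the chosen gauge factor $(2/((\zeta+2)(2\zeta+1)))^{\delta(n)}((\zeta y+\zeta+2)/\zeta(\zeta+1))^n$ must serve both as the intertwiner between the two differential operators and as the factor that enforces the normalization $r_n^{(n)}(0)=1$; the exponent $\delta(n)=[n^2/4]$ in particular should emerge naturally from tracking leading coefficients through the trigonometric limit, and any miscalculation there would prevent both the polynomiality in $z$ and the correct matching of the two equations.
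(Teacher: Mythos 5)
There is a genuine gap: your argument never connects to the actual definition of $\mathcal P_n$. In \cite{bm1} these polynomials are defined through the ground state eigenvalue of Baxter's $Q$-operator (via the TQ-equation \eqref{tq}), not as polynomial solutions of an ODE; the differential equation for $\mathcal P_n$ is stated in \cite{bm1} \emph{without complete proof}, and, as remarked at the end of \S\ref{bmss}, it only becomes a theorem as a \emph{consequence} of Proposition \ref{pprp} combined with the equation of \cite{r1b}. So taking ``the defining ODE of $\mathcal P_n$'' as the characterizing property and matching it against the transformed equation for $P_n$ is circular, or at best rests on an unestablished input; it also leaves open a uniqueness question (why a polynomial solution with the given normalization is unique) that you do not address. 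The paper instead works directly from the definition: the combination $f(u)=\phi(u)(Q_+(u)+Q_-(u))$, after the change of variables, is shown to lie in $\Theta_n^{(0,n,n,-1)}$, which by Theorem \ref{dt} and Proposition \ref{tnkp} is one-dimensional and spanned by the expression built from $T_n^{(0,n,n,-1)}$; this gives \eqref{psp}, hence \eqref{pp} up to a factor $f(\zeta)$ independent of $y$, and the real work is showing $f\equiv 1$ by controlling the possible zeros and poles of $f$ (Lemma \ref{ndpl}, the symmetry $f(\zeta)=f(1/\zeta)$, the hyperbolic-cusp estimate of Theorem \ref{ht} at $\zeta=-1$, and special values from \cite{r}). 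The PDE of \cite{r1b} enters only through Lemma \ref{ndpl}, to rule out divisibility of $P_n$ by a polynomial in $\zeta$ --- not through any ODE-matching for $\mathcal P_n$.

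A secondary but real error: your claim that indivisibility by non-constant polynomials in $z$ ``is automatic once the leading coefficient is a non-zero constant'' fails, because the leading coefficient $r_n^{(n)}(z)=s_n(z)$ is a non-constant polynomial in $z$ (it is one of the main polynomials of interest); the normalization $r_n^{(n)}(0)=1$ only excludes divisibility by $z$ itself, not by $z-a$ with $a\neq 0$. On the $T_n^{(\mathbf k)}$ side this indivisibility is precisely Lemma \ref{ndpl}, whose proof genuinely requires \eqref{le} together with \cite[Prop.\ 8.8]{r}; your proposal treats as trivial exactly the two points (the link to the $Q$-operator definition, and the elimination of a residual $\zeta$-dependent factor) on which the paper's proof actually turns.
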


Bazhanov and Mangazeev introduce the notation
\begin{align*}s_n(z)&=r_n^{(n)}(z)=\lim_{x\rightarrow\infty}\frac{\mathcal P_n(x,z)}{x^n}, \\
\bar s_n(z)&=r_n^{(0)}(z)=\mathcal P_n(0,z). 
\end{align*}
It follows from Proposition \ref{pprp} and \eqref{bpr} that
\begin{multline*} \left(\frac{(\zeta+2)(2\zeta+1)}{2}\right)^{\delta(n)}s_n\left(\frac{\zeta}{(\zeta+2)(2\zeta+1)}\right)\\
=\frac{(-1)^{\left[n/2\right]}\left(\frac\zeta 2+1\right)^{n(n-1)-\delta(n-1)}}{\zeta^{n(n-1)}(\zeta+1)^{n(n-1)}\left(2\zeta+1\right)^{\delta(n-1)}}\,T_n^{(n,n,0,0)},
\end{multline*}
\begin{multline*}
\left(\frac{(\zeta+2)(2\zeta+1)}{2}\right)^{\delta(n)}\bar s_n\left(\frac{\zeta}{(\zeta+2)(2\zeta+1)}\right)\\
=\frac{(-1)^{\left[n/2\right]+1}2^{n-1}\left(\frac\zeta 2+1\right)^{n^2-1-\delta(n-1)}}{\zeta^{n^2-1}(\zeta+1)^{n(n-1)}\left(2\zeta+1\right)^{\delta(n-1)}}\,T_n^{(n,n,1,-1)}.
\end{multline*}

\subsection{Polynomials of Zinn-Justin}
\label{zss}

In \cite{bm4}, Mangazeev and
Bazhanov gave a number of conjectures for eigenvectors
of  the supersymmetric XYZ Hamiltonian on a periodic chain
of odd length.
These involve polynomials $p_n$ (not to be confused with \eqref{pnr}) and $q_n$,
indexed by  $n\in\mathbb Z$, which can conjecturally be used to factorize 
the polynomials $s_n$ and $\bar s_n$. For instance, for $n\geq 0$ it is conjectured that
\begin{equation}\label{spp}s_{2n+1}(y^2)= p_n(y)p_n(-y).\end{equation}

 Zinn--Justin \cite{zj} expressed
 $p_n$ and $q_n$ in terms  of the symmetric
polynomials 
$$H_{2n}(x_1,\dots,x_n,y_1,\dots,y_n)
=\frac{\prod_{i,j=1}^nh(x_i,y_j)}{\prod_{1\leq i<j\leq n}(x_j-x_i)(y_j-y_i)}\,\det_{1\leq i,j\leq n}\left(\frac 1{h(x_i,y_j)}\right), $$
where
$$h(x,y)=1-(3+\zeta_{\text{Z}}^2)xy+(1-\zeta_{\text{Z}}^2)xy(x+y) $$
and $\zeta_{\text{Z}}$ is a parameter (with a subscript to distinguish it from our $\zeta$). 
 To see the connection to \eqref{tn}, we observe that
$$G\left(\phi(x),\phi(y)\right)=\frac{2\zeta^2(\zeta+1)^2}{(\zeta+2)^2}\,h(x,y), $$
where 
$$\phi(x)=\frac{\zeta}{\zeta+2}\left(1-2(\zeta+1)x\right)$$
and the parameters in $G$ and $h$ are related by
\begin{equation}\label{zzr}\zeta_{\text{Z}}=2\zeta+1.\end{equation}
With this relation between the parameters, it follows that
$$T_n(\phi(x_1),\dots,\phi(x_{2n}))\\
=\left(\frac{\zeta(\zeta+1)}{\zeta+2}\right)^{n(n-1)}\,H_{2n}(x_1,\dots,x_{2n}). $$

Using  identities from \cite[\S 4.1]{zj}
(where the factor
$\zeta^{2m(m-1)}$ should be replaced throughout by $\zeta^{m(m-1)}$)
we can now write
$$p_{n}\left(\frac 1{2\zeta+1}\right)=\frac{(-1)^nC_n(\zeta+2)^{n^2-n-1}}{\zeta^{n^2-2n-1}(\zeta+1)^{n(n-1)}(2\zeta+1)^{n^2+n+1}}\,T_{n}^{(-1,2n+1,0,0)}, $$
where $C_n=2^n$ for $n\geq 0$ and $C_n=3^{n+1}/2^{n+2}$ for $n\leq -1$. 
To be precise, for $n\geq 0$, this follows immediately from a corresponding identity in \cite{zj}, while for $n\leq -1$ we also need to apply  Corollary \ref{zscc}.
The reason for the different behaviour  of  $C_n$ 
for small and large $n$ is simply 
that the normalization chosen in
\cite{bm4} is not quite natural from our present perspective. 
Similarly,
$$q_{n}\left(\frac 1{2\zeta+1}\right)=D_n\left(\frac{\zeta+2}{\zeta(\zeta+1)(2\zeta+1)}\right)^{n(n+1)}T_{n+1}^{(0,2n+2,0,0)}, $$
where $D_n=1$ for $n\geq -1$ and $D_n=3^{n+2}/2^{2n+3}$ for $n\leq -2$.

The identity \eqref{spp}, and related conjectures from \cite{bm4},  can thus be expressed  in terms of  the polynomials $T_n^{(\mathbf k)}$. 
We hope to return to these conjectures in a subsequent paper in the present series.

\subsection{Proof of Proposition \ref{pprp}}
\label{bmss}

The starting point of \cite{bm1} is Baxter's TQ-equation for the 
eight-vertex model, which has the form
\begin{equation}\label{tq}T(u)Q(u)=\phi(u-\eta)Q(u+2\eta)+\phi(u+\eta)Q(u-2\eta). \end{equation}
Here, $T$ and $Q$ are eigenvalues of the transfer matrix and $Q$-operator, respectively, $\phi(u)=\vartheta_1(u|q)^{N}$ in the classical notation of \cite{ww}, 
and $u$, $q=\exp({\pi \ti \tau_{\text{BM}}})$ and $\eta$ are parameters of the model 
(we use the subscript  $\text{BM}$ to distinguish 
parameters used in \cite{bm1} from ours).
In the case $\eta=\pi/3$, $N=2n+1$, the ground state eigenvalue has the simple
form $T(u)=\phi(u)$. In this case, let 
$$f(u)=\phi(u)(Q_+(u)+Q_-(u)),$$ where
$Q_+$ and $Q_-$ are the two solutions to \eqref{tq}  defined in \cite{bm1}. 
Rewriting the defining properties of these solutions,
 it is straight-forward to check that 
if 
$$\tau_{\text{BM}}=2\tau,\qquad
u=2\pi \left(z+\frac {\tau+1}2\right)$$
 (so that $q=p^2$),
then the
function
$$g(z)=\frac{e^{6\pi \ti(n+1)z}f(u)}{\tha(-e^{6\pi\ti z};p^6)} $$
is an element of $\Theta_n^{(0,n,n,-1)}$.

In place of our uniformizing variables $(x,\zeta)$, 
Bazhanov and Mangazeev use $(x_{\text{BM}},z_{\text{BM}})$, which in our notation are given by
$$x_{\text{BM}}=-\frac{\tha(\om;p^2)^2\tha(e^{\pm 2\pi \ti z};p^2)}{\tha(-\om;p^2)^2\tha(-e^{\pm 2\pi\ti z};p^2)}, \qquad
z_{\text{BM}}=\left(\frac{\tha(\om;p^2)}{\tha(-\om;p^2)}\right)^4. $$
Using \eqref{xd} and  \cite[Lemma 9.1]{r}, one may check that
$$x_{\text{BM}}=\frac{(1-\zeta)(x-2\zeta-1)}{(1+\zeta)(x-1)},\qquad
z_{\text{BM}}=\frac{1+\zeta}{(1-\zeta)(1+2\zeta)}.$$

Rewriting \cite[Eq.\ (25)]{bm1} in our notation gives
\begin{align*}g(z)&\sim e^{-2\pi\ti z}\tha(e^{4\pi\ti z};p^2)\tha(\om pe^{\pm 2 \pi\ti z};p^2)^{3n-2}\frac{(x-\xi_1)^n(x-\xi_2)^n\left(x-\xi_3\right)^{n-1}}{x-\eta_3}\\
&\quad\times P_n(x_{BM},z_{BM}),
\end{align*}
up to factors independent of $z$. On the other hand, by
Theorem \ref{dt} and Proposition \ref{tnkp},
$$g(z)\sim e^{-2\pi\ti z}\tha(e^{4\pi\ti z};p^2)\tha(\om pe^{\pm 2 \pi\ti z};p^2)^{3n-2}\frac{(x-\xi_1)^n(x-\xi_2)^n}{x-\xi_3}\,T_n^{(0,n,n,-1)}(x).
 $$
It follows that 
\begin{equation}\label{psp}T_n^{(0,n,n,-1)}(x)\sim\frac{(x-1)^n}{x-\zeta}\,\mathcal P_n\left(\frac{(1-\zeta)(x-2\zeta-1)}{(1+\zeta)(x-1)},\frac{1+\zeta}{(1-\zeta)(1+2\zeta)}\right) \end{equation}
up to a factor independent of $x$.

Next we observe that, by Corollary \ref{sscc},
$$T_n^{(n,n,0,-1)}(x;\zeta)\sim
 T_n^{(0,n,n,-1)}
\left(\frac{(\zeta+2)x-(1+2\zeta)}{1-\zeta};-\zeta-1\right). $$
Combining this with \eqref{bpr} and \eqref{psp}, we find that
\begin{multline}\label{pp}\left(\frac{y\zeta+\zeta+2}{\zeta(\zeta+1)}\right)^n
P_n\left(\frac{\zeta(y+2\zeta+1)}{y\zeta+\zeta+2},\zeta\right)\\
=f(\zeta)
\left(\frac{(\zeta+2)(2\zeta+1)}{2}\right)^{\delta(n)}\mathcal P_n\left(y,\frac{\zeta}{(\zeta+2)(2\zeta+1)}\right),
 \end{multline}
where it remains to show that  $f\equiv 1$.

To proceed, we need the following result.

\begin{lemma}\label{ndpl}
The polynomial $P_n(x,\zeta)$ is not divisible by any non-constant polynomial in $\zeta$.
\end{lemma}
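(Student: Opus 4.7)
The statement is equivalent to showing $P_n(x,\zeta_0)\not\equiv 0$ as a polynomial in $x$ for every $\zeta_0\in\mathbb C$. My plan is to split into the cusp case and the generic case, reducing everything via \eqref{bpr} to the non-vanishing behaviour of $T_n^{(n,n,0,-1)}$.

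For the \emph{cusp case} $\zeta_0\in\{-2,-1/2,1,0,-1,\infty\}$, the $\mathrm S_3$ symmetry of Corollary~\ref{sscc} permutes the trigonometric cusps among themselves and similarly the hyperbolic cusps, so it suffices to treat $\zeta_0=-2$ and $\zeta_0=0$. At $\zeta_0=-2$, I would substitute $\mathbf k=(n,n,0,-1)$ into \eqref{tcls}: the leading exponent of $(\zeta+2)$ is $n(n-1)-\delta(n-1)$, matching exactly the power of $(\zeta/2+1)$ in the numerator of the prefactor in \eqref{bpr}, while the other factors of the prefactor are regular and non-zero at $\zeta=-2$. Hence $P_n(x,-2)$ equals a non-zero constant times the character expression $(x+2)^{-2}\chi(t;-1)\,\chi(1^{(n)};-)$ read off from \eqref{tcls} (with $x=-(1+t+t^{-1})$), and both characters are non-zero by Lemmas~\ref{nvcl}--\ref{nvdl} combined with Lemma~\ref{xssl}. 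At $\zeta_0=0$, I would apply Theorem~\ref{ht} with $\mathbf k=(n,n,0,-1)$: we have $m=1$ and $k_1+k_2+1=n+1=m+|k_0+k_3+1|$, placing us at the boundary of regime~A with $L=n(n-1)$. This matches the pole order $n(n-1)$ of the factor $\zeta^{n(n-1)}$ in the denominator of the prefactor, while the other cusp factors in the denominator are regular at $\zeta=0$; formula \eqref{ca} then identifies $P_n(x,0)$ with an explicit non-zero polynomial in~$x$.

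For the \emph{generic case} $\zeta_0\notin\{-2,-1/2,1,0,-1,\infty\}$, the prefactor in \eqref{bpr} is a non-zero finite constant, so it suffices to exhibit a value $x_0$ with $T_n^{(n,n,0,-1)}(x_0,\zeta_0)\neq 0$. My preferred route is to compute the leading coefficient of $P_n(x,\zeta)$ in $x$ from \eqref{bpr} and the determinant expansion \eqref{sdd}: this coefficient should turn out to be a monomial in $\zeta,\zeta\pm 1,\zeta+2,2\zeta+1$, hence vanishes only at cusp values. As a backup one can use Lemma~\ref{tkrl} to specialise $x_0=\xi_0(\zeta_0)=2\zeta_0+1$, reducing to $T_n^{(n+1,n,0,-1)}(\zeta_0)$, an $m=0$ quantity whose non-vanishing at non-cusp $\zeta_0$ can be addressed inductively via the same cusp analysis. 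The main obstacle is precisely this non-cusp step: verifying that the zero locus of the $x$-coefficients of $P_n(x,\zeta)$ is contained in the cusp set requires an explicit computation of the leading coefficient (or an auxiliary induction); everything else is routine matching of exponents using the cusp-expansion formulas from \S\ref{tls} and \S\ref{ntcs}.
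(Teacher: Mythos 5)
Your cusp analysis is essentially sound: at $\zeta=-2$ the exponent matching with \eqref{tcls} and the prefactor of \eqref{bpr} works, and the non-vanishing of the limit is exactly Corollary \ref{tnvc}; at $\zeta=0$, with $\mathbf k=(n,n,0,-1)$ one indeed gets $L=n(n-1)$, so Theorem \ref{ht} handles that cusp, and Corollary \ref{sscc} transports these statements to the remaining cusps. But the generic case, which you yourself flag as the obstacle, is a genuine gap, and neither of your proposed routes can close it. The leading coefficient of $P_n(x,\zeta)$ in $x$ is, up to elementary factors, one of the $m=0$ (or one-variable) polynomials of the same family (essentially $s_n$, i.e.\ $T_n^{(n,n,0,0)}$ up to a monomial in the cusp factors), and these are non-trivial polynomials in $\zeta$ of growing degree with zeros at non-cusp values; so it is simply false that its zero locus lies in the cusp set. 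Likewise, your backup specialization $x_0=\xi_0$ reduces to $T_n^{(n+1,n,0,-1)}$, which is (a monomial times) the three-colour polynomial $p_n(\zeta)$ of \cite{r}; this too has plenty of zeros away from the cusps, and an "induction via cusp analysis" cannot help, because knowing the exact order of vanishing and the leading term of a one-variable polynomial at finitely many special points says nothing about its zeros elsewhere. What must be ruled out is that \emph{all} $x$-coefficients vanish simultaneously at some non-cusp $\zeta_0$, and no amount of cusp expansion controls that.

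The paper closes precisely this gap with an input you did not use: the partial differential equation of \cite[Thm.\ 3.3]{r1b}, which for $\mathbf k=(n,n,0,-1)$ takes the form \eqref{le} with $\zeta$-derivative coefficient $d(\zeta)=\zeta(\zeta-1)(\zeta+1)(\zeta+2)(2\zeta+1)$. Writing $P_n=(\zeta-\zeta_0)^NQ$ with $N>0$ and $Q(x,\zeta_0)\not\equiv 0$ and comparing the lowest power of $\zeta-\zeta_0$ in \eqref{le} forces $d(\zeta_0)=0$, so any hypothetical $\zeta$-divisor must vanish at one of the five special points $0,1,-1,-2,-\tfrac12$; there the non-vanishing of $P_n(x,\zeta_0)$ is quoted from \cite[Prop.\ 8.8]{r} (your cusp computations could serve as a substitute for that citation). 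Without the differential equation, or some equally strong global-in-$\zeta$ statement, the reduction from arbitrary $\zeta_0$ to the special values is missing, and the proof does not go through.
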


\begin{proof}
In \cite[Thm.\ 3.3]{r1b}, we give a partial differential equation 
for $T_n^{(\mathbf k)}$. In the special case $\mathbf k=(n,n,0,-1)$, it takes the form
\begin{equation}\label{le}\left(a(x,\zeta)\frac{\partial^2}{\partial x^2}+\frac{b(x,\zeta)}x\frac{\partial}{\partial x}+\frac{c(x,\zeta)}x+d(\zeta)\frac{\partial}{\partial \zeta}\right)P_n(x,\zeta)=0, \end{equation}
where $a$, $b$ and $c$ are explicit polynomials and
$$d(\zeta)=\zeta(\zeta-1)(\zeta+1)(\zeta+2)(2\zeta+1). $$
If  $P_n$ were divisible by some non-constant polynomial in $\zeta$, then 
we could write $P_n(x,\zeta)=(\zeta-\zeta_0)^NQ(x,\zeta)$, where $N>0$ and $Q(x,\zeta_0)\neq 0$. Inserting this into \eqref{le} gives 
$d(\zeta_0)=0$. But at the five zeroes of 
$d$,
it follows from \cite[Prop.\ 8.8]{r} that 
$P_n(x,\zeta_0)\neq 0$. 
\end{proof}

The differential equation \eqref{le} was conjectured in \cite{bm4}. By Proposition \ref{pprp}, it is in fact equivalent to the differential equation for $\mathcal P_n$ given without a complete proof in \cite{bm1}.

Let us now return to the function $f$ in \eqref{pp}. It is clear that $f$ can have poles only at the points $\zeta=0$, $\zeta=-1$, $\zeta=-2$ and $\zeta=-1/2$. Indeed, if $\zeta_0$ is any other pole, then $\mathcal P_n(y,z)$ would be divisible by $z-\zeta_0/(\zeta_0+2)(2\zeta_0+1)$, which contradicts our definition of $\mathcal P_n$. Similarly, by Lemma \ref{ndpl} and the fact that
$\zeta(y+2\zeta+1)/(y\zeta+\zeta+2)$ is independent of $y$ only for
$\zeta=0$ and $\zeta=1$,
  $f$ can have zeroes only at the points $\zeta=0$, $\zeta=1$, $\zeta=-2$ and  $\zeta=-1/2$. Moreover, by \cite[Eq.\ (8.4)]{r}
(or by \eqref{sia}), $f(\zeta)=f(1/\zeta)$. We conclude that
\begin{equation}\label{f}f(\zeta)=C\frac{\big((\zeta+2)(2\zeta+1)\big)^k(\zeta+1)^{2l}(\zeta-1)^{2m}}{\zeta^{k+l+m}},\end{equation}
where $C$ is a constant depending only on $n$ and $k$, $l$ and $m $ are integers, with $l\leq 0$ and $m\geq 0$.

It follows from \cite[Eq.\ (8.17)]{r}
that
$$\lim_{\zeta\rightarrow 0}f(\zeta)=\lim_{\zeta\rightarrow 0}\frac{2^nP_n\left({\zeta(y+1)}/2,\zeta\right)}{\zeta^n\mathcal P_n(y,0)}=\frac{\sum_{m=0}^n\binom{n+m}m(y+1)^{n-m}}{\mathcal P_n(y,0)}. $$
We conclude that $f$ is regular at $0$ and, in view of the normalization of $\mathcal P_n$, that $f(0)=1$.

To prove that $f$ is regular at $\zeta=-1$, we need to prove that
$$\lim_{\zeta\rightarrow -1}\frac{1}{(\zeta+1)^n}\,P_n(1+y(\zeta+1),\zeta) $$
exists finitely. 
Expressing $P_n$  in terms of  $T_n^{(0,n,n,-1)}$ and choosing $y=0$,  this is equivalent to the boundedness of $T_n^{(0,n,n,0)}/\zeta^{n(n-1)}$,
which is a special case of Theorem \ref{ht}.

Finally, to see that $f(1)\neq 0$, we must verify that $P_n(1,1)\neq 0$.
However, by \cite[Prop.\ 8.8]{r}, $P_n(1,1)$ is a non-zero constant times
$\phi_n(-1/3)$, where $\phi_n$ is as in \eqref{ph}. The non-vanishing of
$\phi_n(-1/3)$ follows recursively from \eqref{phr}. 

By the above considerations,
 $k=l=m=0$ in \eqref{f}. Since we have also showed that $f(0)=1$, $f$ is identically $1$. This completes the proof of Proposition~\ref{pprp}.

 \end{document}